\newcommand\scalemath[2]{\scalebox{#1}{\mbox{\ensuremath{\displaystyle #2}}}}
\newcolumntype{E}{>{\hfil$}p{0.65cm}<{$\hfil}}
\newcolumntype{L}{>{\hfil$}p{16cm}<{$\hfil}}
\newcolumntype{D}{>{\hfil$}p{7.4cm}<{$\hfil}}
\newcolumntype{C}{>{\hfil$}p{3cm}<{$\hfil}}
\newcolumntype{P}{>{\hfil$}p{7.7cm}<{$\hfil}}
\newcolumntype{F}{>{\hfil$}p{5.7cm}<{$\hfil}}
\newcolumntype{S}{>{\hfil$}p{1.8cm}<{$\hfil}}
\newcolumntype{R}{>{\hfil$}p{5.2cm}<{$\hfil}}
\newcolumntype{U}{>{\hfil$}p{4.2cm}<{$\hfil}}
\newcolumntype{Q}{>{\hfil$}p{6.4cm}<{$\hfil}}
\newcolumntype{T}{>{\hfil$}p{1.9cm}<{$\hfil}}
\newcolumntype{V}{>{\hfil$}p{5.8cm}<{$\hfil}}
\newcolumntype{H}{>{\hfil$}p{1.8cm}<{$\hfil}}
\newcolumntype{A}{>{\hfil$}p{6cm}<{$\hfil}}
\newcolumntype{B}{>{\hfil$}p{2cm}<{$\hfil}}
\newcommand\xleftrightarrow[2][]{%
  \ext@arrow 9999{\longleftrightarrowfill@}{#1}{#2}}
\newcommand\longleftrightarrowfill@{%
  \arrowfill@\leftarrow\relbar\rightarrow}
\newcommand{\Calo}{\mathcal{O}}
\newcommand{\C}{\mathbb{C}} 
\newcommand{\R}{\mathbb{R}}
\newcommand{\Z}{\mathbb{Z}}
\newcommand{\Pj}{\mathbb{P}}
\DeclareMathOperator{\Spec}{Spec}
\newtheorem{theorem}{Theorem}[section]
\newtheorem{proposition}[theorem]{Proposition}
\newtheorem{definition}[theorem]{Definition}
\newtheorem{example}[theorem]{Example} 
\newcommand{\re}[1]{\textcolor{red}{#1}}
\numberwithin{equation}{section}
\definecolor{cambridgeblue}{rgb}{0.64, 0.76, 0.68}
\definecolor{caribbeangreen}{rgb}{0.0, 0.8, 0.6}
\definecolor{celadon}{rgb}{0.67, 0.88, 0.69}
\definecolor{champagne}{rgb}{0.97, 0.91, 0.81}
\definecolor{cream}{rgb}{1.0, 0.99, 0.82}
\definecolor{cyan(process)}{rgb}{0.0, 0.72, 0.92}
\definecolor{brilliantlavender}{rgb}{0.96, 0.73, 1.0}
\definecolor{candypink}{rgb}{0.89, 0.44, 0.48}
\begin{document}

\begin{titlepage}

\phantom{wowiezowie}

\vspace{-1cm}

\begin{center}

{\Huge {\bf 5d Conformal Matter}}

\vspace{1cm}

{\Large  Mario De Marco$^{\dagger}$, Michele Del Zotto$^{\dagger,\ddagger}$},\\ 

\medskip

{\Large  Michele Graffeo$^{\sharp}$, and Andrea Sangiovanni$^{*}$}\\

\vspace{1cm}

{\it
{\small

$^\dagger$ Mathematics Institute, Uppsala University, \\ Box 480, SE-75106 Uppsala, Sweden\\
\vspace{.25cm}
$^\ddagger$ Department of Physics and Astronomy, Uppsala University,\\ Box 516, SE-75120 Uppsala, Sweden\\
\vspace{.25cm}
$^\sharp$ Department of Mathematics, Politecnico di Milano,\\ Via Bonardi 9, Milano 20133, Italy\\
\vspace{.25cm}
$^*$ Department of Physics, King's College London,\\ Strand, WC2R 2LS London, United Kingdom\\
}}

\vskip .5cm
{\footnotesize \tt mario.demarco@math.uu.se \hspace{1cm} michele.delzotto@math.uu.se } \\
{\footnotesize \tt    michele.graffeo@polimi.it \hspace{1cm} andrea.1.sangiovanni@kcl.ac.uk}

\vskip 1cm
     	{\bf Abstract }
\vskip .1in
\end{center}

\noindent Six-dimensional superconformal field theories (SCFTs) have an atomic classification in terms of elementary building blocks, conformal systems that generalize matter and can be fused together to form all known 6d SCFTs in terms of generalized 6d quivers. It is therefore natural to ask whether 5d SCFTs can be organized in a similar manner, as the outcome of fusions of certain elementary building blocks, which we call 5d conformal matter theories. In this project we begin exploring this idea and we give a systematic construction of 5d generalized ``bifundamental'' SCFTs, building from geometric engineering techniques in M-theory. In particular, we find several examples of $(\mathfrak {e}_6,\mathfrak {e}_6)$, $(\mathfrak {e}_7,\mathfrak {e}_7)$ and $(\mathfrak {e}_8,\mathfrak {e}_8)$ 5d bifundamental SCFTs beyond the ones arising from (elementary) KK reductions of the 6d conformal matter theories. We show that these can be fused together giving rise to 5d SCFTs captured by 5d generalized linear quivers with exceptional gauge groups as nodes, and links given by 5d conformal matter. As a first application of these models we uncover a large class of novel 5d dualites, that generalize the well-known fiber/base dualities outside the toric realm. 
 
\eject

\end{titlepage}

\tableofcontents

\section{Introduction}
The existence of superconformal field theories (SCFTs) in dimension higher than four is among the striking consequences of string theory \cite{Witten:1995ex,Strominger:1995ac,Witten:1995em,Ganor:1996mu,Seiberg:1996qx,Seiberg1996,Morrison_1997,Douglas:1996xp}. Thanks to higher dimensional SCFTs several very non-trivial aspects of lower dimensional QFTs can be predicted via compactifications, ranging from dualities \cite{Ganor:1996xd,Ganor:1996pc,Gaiotto:2009we,Gaiotto:2009hg,Benini:2009mz,Ohmori:2015pua,DelZotto:2015rca,Ohmori:2015pia,Bhardwaj:2019fzv,Sacchi:2021wvg} and non-invertible symmetries \cite{Gukov:2020btk,Bashmakov:2022uek,Bashmakov:2022jtl,Carta:2023bqn,Bashmakov:2023kwo,Chen:2023qnv} to correspondences between QFTs in different dimensions \cite{Alday:2009aq,Wyllard:2009hg,Nekrasov:2009rc,Dimofte:2011ju,Gadde:2013sca,Cordova:2013cea,Cordova:2016cmu,Dedushenko:2017tdw,Dedushenko:2019mnd} and beyond \cite{Cecotti:2011iy}. 
Higher dimensional irreducible SCFTs are rigid and extremely constrained \cite{Nahm:1977tg,Minwalla:1997ka,Cordova:2016emh}: this suggests looking for classification schemes as a pathway towards a better understanding of these systems, by highlighting generic features and identifying exotics.


There are various possible classification schemes which we broadly divide in three categories: geometric classification schemes, algebraic classification schemes, and atomic classifications. Geometric classification schemes rely upon features such as the constrained structure of moduli spaces of superconformal systems, or the constraints that superconformal symmetry imposes on the geometrical realization of these models within string theory, or a combination of the two. Algebraic classification schemes build on the constraints arising from representations of superconformal algebras, as well as on the existence of generalized versions of chiral rings. The latter further constrain certain BPS subsectors of the spectrum of operators of SCFTs and have an interesting interplay with bootstrap techniques. Geometric and algebraic classifications can sometimes be combined, which gives interesting consistency checks.\\
\indent Atomic classifications are slightly different in spirit. An atomic classification builds on identifying some basic more elementary building blocks for SCFTs (the atoms, or ``conformal matter'' systems), and constraining the structures of the other SCFTs viewing them as molecules, built out of atoms via field theoretical operations analogue to the conformal gauging of diagonal global symmetries in four dimensions. These operations are typically very constrained by the lack of gauge anomalies and by analogues of the condition of vanishing beta functions in four dimensions.

In the context of 5d SCFTs, there are well-developed geometric classification schemes building on M-theory geometries \cite{Jefferson:2017ahm,Jefferson:2018irk,Bhardwaj:2018vuu,Apruzzi:2019opn,Bhardwaj:2019jtr,Bhardwaj:2020gyu,Xie:2022lcm}, while both an algebraic classification and an atomic classification are lacking at the time of this writing. In this paper we aim at laying the foundations for an atomic classification scheme for 5d SCFTs. Our main result in this paper is the construction of 5d ``bifundamental'' conformal matter theories for gauge groups of type $A_k$, $D_k$, $E_6,E_7$ and $E_8$, as well as the prediction of new 5d dualities associated to 5d linear quivers built out of conformal matter theories. The existence of 5d conformal matter theories of bifundamental kind is the starting point for an atomic classification program, but it is not the end of it. In particular, in 5d we expect to find other 5d conformal matter systems that are of ``trifundamental'' kind. These 5d trifundamentals are the subject of an upcoming work \cite{DMDZGSinprep}. In this introduction we proceed by explaining this logic in detail by contrasting geometric and atomic classifications in the context of 6d theories \cite{Heckman:2013pva,Heckman:2015bfa,Bhardwaj:2019hhd}, as well as for 4d SCFTs. We then proceed outlining some of the features of the 5d atomic classification.

\subsection{Geometric and atomic classification schemes} In order to illustrate the main result in this project, let us proceed by quickly reviewing the existing classifications of six-dimensional SCFTs. There are only two 6d superconformal algebras that admit a conserved stress-energy tensor supermultiplet \cite{Cordova:2016xhm}, $\mathfrak{osp}(8^*|2)$ and $\mathfrak{osp}(8^*|4)$.\footnote{ \ More exotic systems with $\mathfrak{osp}(8^*|8)$ superconformal symmetry do exist within string theory \cite{Hull:2000zn,Hull:2000rr}, but these systems lack a conserved stress-energy tensor multiplet and therefore we would prefer not to identify them as quantum field theories. An indication that this is indeed the case, is that the circle reduction of these theories is believed to give rise to $\mathcal N=4$ supergravity in five dimensions \cite{Hull:2000zn}.} Thanks to the enhanced supersymmetry, in this context one can envision geometric classification schemes via geometric engineering dictionaries within string/M/F theory. For $\mathfrak{osp}(8^*|4)$ this leads to the 6d $(2,0)$ SCFTs, which occur in an ADE series corresponding to the geometric engineering limit of IIB superstrings on orbifolds $\mathbb C^2/\Gamma$ where $\Gamma \subseteq SU(2)$ is a finite subgroup \cite{Witten:1995ex,Strominger:1995ac}.\footnote{ \ The latter can be justified with field theoretical methods, either starting from the structure of BPS strings \cite{Henningson:2004dh}, or adding some further requirements on the axioms for 6d SCFTs \cite{Cordova:2015vwa}.} In this case, there is also a possible algebraic classification, arising from VOAs organizing specific 6d (2,0) chiral rings and its interplay with the bootstrap \cite{Beem:2014kka,Beem:2015aoa}.

For theories with $\mathfrak{osp}(8^*|2)$ superconformal symmetry, there is no known algebraic classification and the geometric classification scheme becomes much richer and intricate. These 6d (1,0) SCFTs are now realized within F-theory via singular elliptic Calabi-Yau (CY) threefolds with orbifold bases of the form $\mathbb C^2/\Gamma$, where $\Gamma \subseteq U(2)$  \cite{Heckman:2013pva}.\footnote{ \ See also eg. \cite{DelZotto:2014fia,DelZotto:2015rca,Morrison:2016nrt,DelZotto:2018tcj} for more details about the F-theory orbifold approach to 6d SCFTs.} On top of the data of $\Gamma$, in order to have a consistent F-theory model over such orbifold base, one needs a non-Kodaira singularity in the $T^2$ fiber supported at the origin of $\mathbb C^2/\Gamma$, so that the total space of the fibration is a Calabi-Yau. This can be supplemented with further ``freezing'' data \cite{Bhardwaj:2018jgp}, and possible non-compact Kodaira singularities supported along lines meeting at the origin of the base, that can be further decorated by T-branes \cite{DelZotto:2014hpa}.\footnote{ \ For non-frozen geometries it has been conjectured that the T-brane data give rise to systems that also have a geometric realization without T-branes \cite{Heckman:2015bfa}. Evidence towards this conjecture has been given exploiting systems that have a dual realization in massive Type IIA superstrings \cite{Heckman:2016ssk}.}

For 6d $(1,0)$ SCFTs there is however an alternative classification scheme: the \textit{atomic classification scheme} \cite{Heckman:2015bfa} where generic 6d SCFTs are constructed as molecules built out of more elementary building blocks or \textit{atoms}. These building blocks are conformal systems that play the role of generalized forms of matter, and can be coupled together to form generalized 6d quivers that give a description of more complicated SCFTs \cite{DelZotto:2014hpa}. The 6d conformal matter theories needed for the atomic classification have the structure of some sort of generalized fundamentals and bifundamental ``matter'' fields. The generic 6d SCFTs are obtained from these elementary building blocks via fission and fusion operations \cite{Heckman:2018pqx,DelZotto:2018tcj} leading to a collection of rather constrained shapes for generalized quiver structures \cite{Heckman:2015bfa}. The resulting models can be further enriched, for example with the additional data of frozen singularities \cite{Tachikawa:2015wka,Bhardwaj:2018jgp}. This results in a rather comprehensive description of the existing 6d $(1,0)$ SCFTs \cite{Heckman:2013pva,Heckman:2015bfa,Bhardwaj:2015xxa,Bhardwaj:2019hhd}. As we discuss below, we believe this feature of the 6d classification will carry over to 5d. Namely, on top of the various 5d conformal matter systems we will also have to include discrete data to complete the classification. We comment about these further possibilities in the conclusions of this paper.

In order to develop our 5d atomic classification, it is interesting to ask whether SCFTs in other dimensions also admit an ``atomic classification'' of sort. From the 6d example, we expect that atomic classifications are a feature of theories with 8 (or less) conserved Poincar\'e supercharges, while for higher amounts of supersymmetries we expect the resulting SCFTs to behave like some sort of noble gases in the Mendeleev table of theories. Let us consider the four-dimensional case as another example. For 4d $\mathcal N \geq 2$ SCFTs several classification schemes have been designed \cite{Argyres:2015ffa,Xie:2015rpa,Argyres:2015gha,Chen:2016bzh,Argyres:2016xmc,Caorsi:2018zsq,Argyres:2020nrr,Argyres:2020wmq,Cecotti:2021ouq,Closset:2021lhd,Argyres:2022lah,Argyres:2022puv,Argyres:2022fwy,Cecotti:2022uep} which are all classifications of `geometric' type. In this context, moreover, algebraic classification schemes are also possible, building on the interplay between VOAs and bootstrap \cite{Beem:2013qxa,Beem:2014zpa,Bonetti:2018fqz} or modularity \cite{Kaidi:2022sng}. One could then ask whether 4d $\mathcal N=2$ SCFTs theories have an atomic classification as well. Theories with $\mathcal N=3$ and $\mathcal N = 4$ would be the noble gasses. Building on the original analysis for the theories of class $\mathcal S$ \cite{Gaiotto:2009we,Gaiotto:2009hg}, we see that indeed also 4d $\mathcal N=2$ SCFTs already have an obvious ``atomic'' structure. The building blocks of these systems are indeed generalized fundamentals, bifundamentals and a new ingredient, the trinions (or generalized tri-fundamentals), arising from 3-punctured spheres  \cite{Gaiotto:2009we} (see also \cite{Chacaltana:2010ks,Chacaltana:2011ze,Chacaltana:2012zy,Chacaltana:2012ch,Chacaltana:2013oka,Chacaltana:2015bna,Chacaltana:2017boe,Chacaltana:2018vhp} for a comprehensive study and \cite{Tachikawa:2015bga} for a review). In 4d there are also more exotic classes of theories (rare-earths and isotopes), the Argyres-Douglas (AD) models \cite{Argyres:1995jj,Eguchi:1996ds}, as well as other similar systems  \cite{Cecotti:2010fi,Cecotti:2011gu,Xie:2012hs,Closset:2020afy}.\footnote{ \ \ For all these AD-like theories the question whether these are to be thought as atoms themselves or as derivatives is debatable: AD models (almost by definition) arise as special fixed points of RG flows from more conventional molecules, however some AD theories can definitely be thought as building blocks for more general models. As an example, consider the $D_p^b(G)$ theories \cite{Cecotti:2012jx,Cecotti:2013lda,giaco1,Wang:2015mra}, which, despite being Argyres-Douglas points, indeed end up playing the role of generalized fundamentals in many contexts \cite{DelZotto:2015rca,Kang:2021lic,Kang:2021ccs}.} This quick superficial glance at 4d $\mathcal N=2$ models indicates that an atomic classification of these theories is, in a sense, already existing, especially for systems in class $\mathcal{S}$. For 4d $\mathcal N=1$ theories, to our best knowledge, neither a geometric, nor an algebraic, nor an atomic classification schemes are available. In that context, it is interesting to notice that from the geometric engineering perspective, evidence for the existence of yet another more general atom (the 4d $\mathcal N=1$ tetraons) was recently found from geometric engineering limits of M-theory on $G_2$ orbifolds \cite{Acharya:2023bth}.

\subsection{A 5d atomic classification} 5d SCFTs are governed by the representation theory of the unique five-dimensional superconformal algebra $\mathfrak f(4)$ and have known classification schemes of two kinds. On the one hand there is a classification in terms of their six-dimensional origin \cite{DelZotto:2017pti,Jefferson_2018,Bhardwaj:2018yhy,Bhardwaj:2018vuu,Bhardwaj:2019xeg,Bhardwaj:2020gyu}. On the other hand, there is a classification in terms of CY geometries \cite{Xie_2017,Apruzzi:2019opn,Closset:2020afy,Apruzzi_2020,Bourget:2023wlb, Collinucci:2021ofd,Collinucci:2021wty,DeMarco:2021try,DeMarco:2022dgh,Collinucci:2022rii}. Large classes of examples are obtained in this way. An especially large and interesting class is obtained from M-theory geometric engineering on orbifolds of $\mathbb C^3/\Gamma$ where $\Gamma \subseteq SU(3)$ is a finite subgroup \cite{Xie_2017,Acharya:2021jsp,Tian:2021cif,DelZotto:2022fnw}. 

For 5d SCFTs, however, an atomic classification scheme is still lacking. Evidence for such a scheme has been known since a long time. 5d SCFTs sporting $A \times A$ and $A \times D$ flavor algebras can be easily constructed from M-theory \cite{KatzVafa} --- one substantially ends up with free hypers (up to discrete gauging) \cite{Closset:2020scj,Closset:2021lwy,Collinucci:2022rii}. Similarly, it is known that generalized 5d trifundamental of $A$--type can be constructed via the M-theory dual of trivalent junctions of (p,q)-fivebranes \cite{Benini:2009gi}. The analogue of the fusion operation in 6d is provided geometrically, via glueing operations \cite{Hayashi:2019fsa}.

In this work we begin exploring the building blocks for an atomic classification,  looking at the existence of the simplest possible atoms, the ones of bi-fundamental type. Of course, from the above discussion we expect to find also further 5d trinion theories, generalizing the 5d $T_N$ theories to ones arising from collisions of 3 lines of singularities (possibly folded and or decorated by additional T-brane like data). These geometries are much more intricate to analyze in detail, and for this reason we mention them only briefly in the conclusion section, whereas in this project we mainly focus on engineering generalized bi-fundamentals theories of $(\mathfrak{g},\mathfrak{g})$ types.\footnote{ \ The detailed study of bi-fundamentals theories of $(\mathfrak{g},\mathfrak{g}')$ as well as of 5d trinions will appear elsewhere \cite{DMDZGSinprep}.}\\
\indent The question that motivates this work is the following: \textit{can bifundamental conformal matter theories, analogous to the ones in six dimensions, and yet not directly descending from them via elementary Kaluza-Klein reduction, be constructed in five dimensions?} Our main result is that the answer is in the affirmative, and that it involves a surprisingly simple M-theory construction. The existence of conformal matter readily allows some predictions on five-dimensional dualities that give generalizations of the well-known fiber/base dualities \cite{Hollowood:2003cv} to 5d quiver theories of $ADE$ type. In the following we give a summary of the main features of 5d conformal matter theories of ``bifundamental'' type.

\subsection{Generalized bi-fundamental conformal matter} In this work, we geometrically engineer 5d conformal matter ``bifundamental'' theories of types $(\mathfrak g,\mathfrak g)_{x_i}$ where $\mathfrak g \in ADE$, by looking at specific M-theory singularities, and the label $x_i$ refers to the fact that we find that in 5d there are various ``species'' of generalized bifundamentals, exhibiting slightly different properties which we identify. The techniques of \cite{Morrison_1997,Intriligator_1997} can be exploited to study the properties of these models along the Coulomb branch, displaying other possible gauge theory phases \cite{Closset_2019}. All these models have at least one gauge theory phase which we identify and exploit in order to confirm their flavor symmetries, exploiting the Tachikawa flavor enhancement criterion for 5d quiver gauge theories \cite{Tachikawa} --- see also \cite{Yonekura,Zafrir:2015uaa}. Moreover, we show that $(\mathfrak g,\mathfrak g)_{x_i}$ 5d conformal matter theories (with $x_i$ one among $x,y,z$) can be employed to form generalized linear 5d quiver SCFTs of the form
\begin{equation}\label{eq:generalizedlinearquiver}
\xymatrix{*++[F-]{\mathfrak{g}} \ar@{-}[rr]^{(\mathfrak g,\mathfrak g)_{x_{i_1}}} && *++[o][F-]{\mathfrak{g}}\ar@{-}[rr]^{(\mathfrak g,\mathfrak g)_{x_{i_2}}} &&  *++[o][F-]{\mathfrak{g}}\ar@{-}[rr]^{(\mathfrak g,\mathfrak g)_{x_{i_3}}} && \cdots\ar@{-}[rr]^{(\mathfrak g,\mathfrak g)_{x_{{i_{n-1}}}}} &&*++[o][F-]{\mathfrak{g}} \ar@{-}[rr]^{(\mathfrak g,\mathfrak g)_{x_{{i_N}}}} && *++[F-]{\mathfrak{g}} }
\end{equation}
where the edges are 5d generalized bifundamentals, and the circular nodes represent a fusion operation corresponding to a 5d generalization of the familiar operation of gauging a diagonal flavor symmetry of type $\mathfrak g$ \cite{Hayashi:2019fsa}.
\begin{table}
\label{table:DuVal list}
    \centering
    \begin{tabular}{c|c}
        $Y_{A_k}$ &  $P_{A_k}(x,y,z) = x^2+y^2-z^{k+1}$ \\[2px]
        $Y_{D_k}$ &  $P_{D_k}(x,y,z) = x^2 +zy^2+z^{k-1}$ \\[2px]
        $Y_{E_6}$ &  $P_{E_6}(x,y,z) =x^2+y^3+z^4$ \\[2px]
        $Y_{E_7}$ &  $P_{E_7}(x,y,z) = x^2+y^3+yz^3$ \\[2px]
        $Y_{E_8}$ &  $P_{E_8}(x,y,z) = x^2+y^3+z^5$,
    \end{tabular}
    \caption{Representatives for each isomorphism class of Du Val singularities.}
    \label{tab:singularities}
\end{table}

Our construction starts from a Du Val singularity (see Table \ref{tab:singularities})
\begin{equation}
P_{\mathfrak g}(x_1,x_2,x_3) = 0, \qquad \mathfrak g \in ADE
\end{equation}
which is a hypersurface singularity $Y_{\mathfrak g} \subset \mathbb C^3$. We then apply a base change $x_i= uv$ (with $i=1,2,3$) and obtain a CY three-fold singularity $X^{(i)}_{\mathfrak g} \subset \mathbb C^4$
\begin{equation}
X^{(i)}_{\mathfrak g} \quad \colon \qquad \begin{cases}P_{\mathfrak g}(x_1,x_2,x_3) = 0 \\ uv = x_i \end{cases}
\end{equation}
The latter has by construction two transversally intersecting singularities of type $\mathfrak g$ located along $u = 0$ and $v=0$ (see Figure \ref{fig:intersection}), and possibly an enhanced singularity at the intersection point $x_j=u=v=0$ for $j\neq i$. By crepantly blowing up the singularities we show that at the intersection point there lies trapped five-dimensional matter, charged under $\mathfrak g \times \mathfrak g$ flavor symmetry, as can be expected from the geometric perspective. The theory so obtained has different features depending on the choice of coordinate $x_i$ for the base change, and correspondingly we obtain 5d conformal matter theories of different types. Gauging together these theories gives rise to generalized linear 5d quivers of $A$-type, which have a geometrical counterpart in the singularities
\begin{equation}
\begin{cases}P_{\mathfrak g}(x_1,x_2,x_3) = 0 \\ uv =x_1^{k_1}x_2^{k_2}x_3^{k_3}  \end{cases}, \quad k_i \geq 0
\end{equation}
which follows from a detailed analysis we present below. Furthermore, in many examples we can show that the theory obtained in such fashion is \textit{not} a simple KK dimensional reduction of a 6d conformal matter with the same flavor symmetry, thus guaranteeing our construction is giving rise to genuine \textit{5d conformal matter}. Indeed, in most cases we consider, the resulting singularities are not embedded into F-theory models as local limits sending the F-theory torus to infinite volume.\footnote{ \ \ We note that, however, it is still possible that the theories we study descend in a non-obvious way by some parent theories in 6d, as per the conjecture of \cite{Jefferson_2018}.} As a  further result, we demonstrate that these quivers undergo 5d dualities that generalize fiber/base dualities to $D$ and $E$-type diagrams.
From our results, it follows that linear $A$ quivers of the type depicted in Equation \ref{eq:generalizedlinearquiver} enjoy a flavor symmetry algebra enhancement of the form
\begin{equation}
    \mathcal{G}_{UV} = \mathfrak{g}\times \mathfrak{g} \times \mathfrak{su}(n_1)\times\cdots \times \mathfrak{su}(n_{\nu}) \times \mathfrak{u}(1)^{\nu-1},
\end{equation}
with $n_i$ and $\nu$ determined by the specific choice of quiver. We refer our readers to Section \ref{sec:quiverdualities} for the detailed list of generalized 5d dualities we obtain in this paper.

Moreover, we rule out $P$-valent junctions of the form
\begin{equation}
\begin{gathered}
\xymatrix{
*++[F-]{\mathfrak{g}} \ar@{-}[ddrr]_{(\mathfrak g,\mathfrak g)_{x_{i_1}}}&*++[F-]{\mathfrak{g}} \ar@{-}[ddr]^{(\mathfrak g,\mathfrak g)_{x_{i_2}}}& \cdots &&*++[F-]{\mathfrak{g}} \ar@{-}[ddll]^{(\mathfrak g,\mathfrak g)_{x_{i_P}}}\\
&&&&\\
&& *++[o][F-]{\mathfrak{g}} &&}
\end{gathered}
\end{equation}
for all $P\geq 3$ by requiring the resulting singularities are at finite distance in the CY moduli space. Therefore, 5d bifundamental conformal matter theories of types $(\mathfrak g,\mathfrak g)_{x_i}$ can only form linear quivers with gauge nodes $\mathfrak g$.


\begin{figure}
\centering
 \scalebox{0.9}{
    \begin{tikzpicture}
        \draw[thick] (-2.5,-3)--(2.5,3);
        \draw[thick] (-2.5,3)--(2.5,-3);
        \draw[thick,dashed,->] (-1,3.5) to [out=270, in = 40] (-1.6,2.2);
        \draw[thick,dashed,->] (1,3.5) to [out=270, in = 140] (1.6,2.2);
        \draw[thick,dashed,<-] (0.5,0)--(1.5,0);
        \draw[fill=red] (0,0) circle (0.13);
        \node at (-2.4,3.8) {$\mathfrak{g}$ singularity};
         \node at (2.4,3.8) {$\mathfrak{g}$ singularity};
        \node at (3.7,0) {Enhanced singularity};
        \node[rotate=49] at (-1.5,-2.2) {$u$-axis};
        \node[rotate=-49] at (1.5,-2.2) {$v$-axis};
        \end{tikzpicture}}
    \caption{Pictorial representation of a threefold with two transversely intersecting lines of $\mathfrak g$ singularities.}
     \label{fig:intersection}
\end{figure}
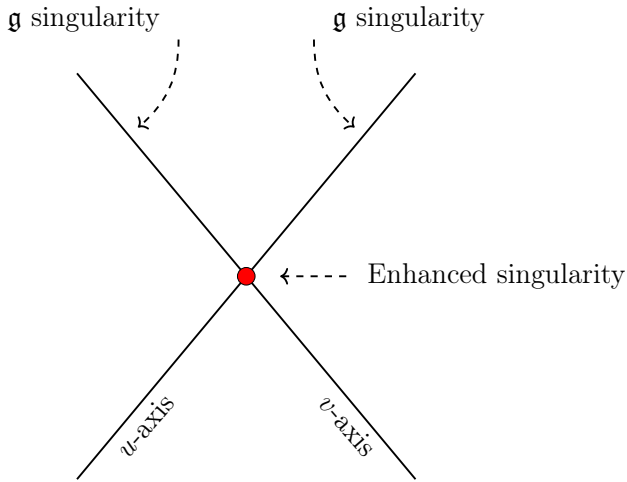  


\subsection{Outline of this paper}
The structure of this paper is as follows. In Section \ref{sec:geom-eng} we quickly review the dictionary between geometric data and physics of the 5d SCFT and its low-energy quiver gauge theory phase. In Section \ref{sec:coreidea} we introduce the basics of our construction, in particular in Section \ref{sec:singularities} we briefly review the theory of $ADE$ singularities and in Section \ref{sec:threefold singularities via base change} we discuss the threefold singularities $X^{(i)}_{\mathfrak g}$ in details. In Section \ref{sec:exampleE6} we illustrate the resolution procedure of the threefold singularities following an example, characterizing the corresponding 5d SCFT and its low-energy quiver. We generalize this example in Section \ref{sec:generalrecipe} providing the construction of all the $(\mathfrak g,\mathfrak g)_{x_i}$ conformal matter theories. In particular, in Section \ref{sec:quivers} we write down the low-energy quivers for all the 5d SCFTs engineered by the threefolds introduced in the previous sections. In all cases, we compute the candidate UV enhanced flavor symmetry both from geometry and from gauge theory. We also make contact with existing constructions in the literature and 6d uplifts. In section \ref{sec:exceptional quivers}, we roll out a natural generalization of our construction, producing quivers with nodes displaying exceptional groups, showing that indeed these SCFTs have the right flavor symmetries to earn the name of \textit{5d bifundamental conformal matter theories}, serving as building blocks for 5d exceptional linear quivers. As an outcome we obtain a novel plethora of 5d dualities, which are summarized in Section \ref{sec:quiverdualities}. We present our conclusion and outline some future directions we are currently developing in Section \ref{sec:conclusions}. Several more technical appendices complement the results presented in this work. In particular, we rigorously present the explicit resolution maps of Du Val singularities in appendix \ref{sec:duvalres}, the toric tools employed throughout the main text in appendix \ref{sec:appendixtoriclocalmodels}, and go through a consistency check of the quiver gauge theories via standard techniques involving their prepotential in appendix \ref{sec:prepotential}.

\section{Geometric origin of generalized 5d bifundamentals}

In this section we start by briefly reviewing the M-theory geometric engineering of 5d $\mathcal{N}=1$ SCFTs, in order to fix few notations and conventions we are using below in \ref{sec:geom-eng}. In \ref{sec:coreidea} we illustrate the relevant geometries that give rise to 5d conformal matter theories of generalized bifundamental type.

\subsection{Geometric engineering 5d SCFTs: lightning review}\label{sec:geom-eng}

It is well-known that M-theory reduced on a Calabi-Yau threefold $X$ with a canonical singularity gives rise to a 5d  $\mathcal{N}=1$ SCFT on flat spacetime, denoted by $\mathcal{T}_{5d}$ \cite{Seiberg1996,Morrison_1997,Douglas:1996xp} -- see also \cite{Intriligator_1997} and \cite{Jefferson:2017ahm,Closset:2018bjz,Jefferson:2018irk}, where the role of tensionless 5d BPS strings was emphasized in the constraints arising from 5d Coulomb branches. It was actually conjectured that \textit{all} 5d SCFTs have such an engineering \cite{Xie_2017}.  The Calabi-Yau threefolds we will consider are precisely of such kind. The geometric features of the compactification space have a direct correspondence in terms of the properties of the 5d theory. Roughly speaking, the moduli spaces of vacua of the 5d SCFT can be described as follows:
\begin{equation*}
    \begin{split}
    \text{Resolutions of } X \quad &\longleftrightarrow \quad \text{Coulomb branch of }\mathcal{T}_{5d}\\
    \text{Deformations of } X \quad &\longleftrightarrow \quad \text{Higgs branch of }\mathcal{T}_{5d}\\
    \end{split}
\end{equation*}
In this work we focus on the resolution of $X$, and hence on exploring the Coulomb branch of the 5d SCFTs under examination. In this fashion, we will be able to (indirectly) extract information on an intrinsic Higgs branch feature, namely the flavor symmetry. \\ 
\indent M-theory geometric engineering further provides a precise way to translate information on a resolved phase $\widetilde{X}$ of $X$ to data of the Coulomb branch of the SCFT. Schematically, the dictionary goes as follows:
\begin{itemize}
    \item Compact divisors in $H_4(\widetilde{X},\mathbb Z)$ are dual to harmonic normalizable 2-forms, which from the reduction of the M-theory $C$-field, give rise to the $U(1)$ vector fields in the 5d $\mathcal N=1$ vector multiplets. The scalar components of these vector multiplets arise from K\"ahler moduli corresponding to the volumes of those compact curves that are dual to compact divisors. These volumes give the Coulomb branch vevs of the SCFT. Wrapping M5 branes on compact divisors gives rise to BPS monopole strings;
    \item Non-compact divisors encode the flavor symmetry. Curves arising from the intersections of compact divisors with non-compact ones correspond to relevant deformations of the SCFT (i.e.\ masses or gauge couplings);
    \item Wrapping M2 branes on curves gives rise to BPS particles with spins obtained quantizing the resulting moduli spaces \cite{Witten:1996qb}. In particular, $\mathbb P^1$'s with normal bundle $\mathcal O_{\mathbb P^1}(-1) \oplus \mathcal O_{\mathbb P^1}(-1)$ are rigid, and give rise to 5d hypermultiplets in this way.
\end{itemize}
Different resolutions corresponding to birational CYs related to one another via flop transitions are interpreted as different chambers in the Coulomb phase of the 5d SCFT, which are separated by codimension-one walls where a given BPS particle becomes massless. The resulting jump in the prepotential by integrating such a particle back in, corresponds to the change in the triple intersection numbers among the birationally equivalent threefolds. In the singular limit, all the volumes of the compact divisors and compact curves have to vanish as we reach the SCFT point and all scales must disappear. In general, such limit can be reached if all compact divisors and curves are \textit{shrinkable} in the sense of \cite{Jefferson_2018}: if one adopts a top-down approach, i.e.\ conjures up a resolved geometry composed of a collection of compact divisors and curves, this is a non-trivial check. On the other hand, in the following we will adopt a bottom-up stance, starting from a singular geometry and performing a sequence of blowups: for such cases, \textit{shrinkability} is guaranteed by construction.\\
\indent Non-compact curves of singularities in M-theory can be interpreted as higher dimensional degrees of freedom which in the 5d case happen to decouple from the conformal fixed point, leading to a (generalized) flavor symmetry for these theories \cite{Hayashi:2019fsa,Acharya:2023bth}. Hence in the search of 5d bifundamentals, we are naturally lead to look at a non-isolated singularity located at the collision of two ADE singularities of type $\mathfrak g$.\\
\indent Given an SCFT fixed point $\mathcal{T}_{5d}$ in the UV, we can ask which are the possible weakly-coupled quiver gauge theories that arise as its mass deformations. We can explicitly write down one of these quiver gauge theory limits by \textit{choosing} a specific resolution of $X$, say $\widetilde{X}$. Different choices are equally valid and correspond to different mass deformations of the same SCFT and/or different chambers of the 5d Coulomb branch. This gives rise to the phenomenon of 5d dualities \cite{Aharony:1997bh} (see also eg. \cite{Gaiotto:2015una,Closset:2018bjz,Bhardwaj:2019ngx}).
The examples we discuss in this paper all admit Coulomb branch chambers with a 5d gauge theory interpretation, more precisely we obtain 5d quiver gauge theories with gauge algebras $\prod_i \mathfrak{su}(n_i)$. 
Focusing on a specific choice $\widetilde{X}$ compatible with such a gauge theory phase, it is immediate to translate the 5d SCFT data into its low-energy quiver counterpart:
\begin{itemize}
    \item Compact divisors and their intersection patterns encode the data of the gauge nodes and bifundamental matter for the quiver \cite{KatzVafa};
    \item Non-compact divisors dictate the presence of flavor nodes;\footnote{ \ The complete UV flavor symmetry of the quiver gauge theory can be determined employing the techniques of \cite{Tachikawa} which we will briefly summarize in Section \ref{sec:flavorUV}. A different approach to flavor symmetry was also discussed in \cite{Apruzzi:2019opn,Bhardwaj:2020ruf,Bhardwaj:2020avz}: when overlapping, our results agree. }
    \item The compact divisors turn out to be ruled surfaces, possibly blown-up at a number of points. The blow-up pattern dictates finer details of the quiver node (such as e.g.\ the Chern-Simons levels).\footnote{ \ Chern-Simons levels can also be computed by comparing the prepotential computed from the quiver with the expectation given by the triple intersections in the geometry. In large quivers this may prove a daunting task, and we will indeed recur to a different argument for gleaning the CS levels.}
\end{itemize}
Keeping this dictionary in mind, in the next section we explicitly construct the singular threefolds we covet, in order to retrieve the five-dimensional conformal matter.

\subsection{Intersecting families of \texorpdfstring{$ADE$}{} singularities}\label{sec:coreidea}

In this section we begin in \ref{sec:singularities} by quickly settling our conventions for the well-known theory of Du Val singularities and their resolutions. In section \ref{sec:threefold singularities via base change} we illustrate the singular threefold geometries corresponding to 5d generalized bifundamentals. Readers not interested in the mathematical details of these singularities can safely skip to the next section.

\subsubsection{\texorpdfstring{$ADE$}{} singularities and their resolution}\label{sec:singularities}

Recall that there are only five families of isomorphism classes of absolutely isolated rational double points of dimension 2, called $ADE$ or Du Val or Kleinian singularities \cite{DuVal}.\footnote{ \ We will employ these denominations interchangeably.} More precisely, there are two infinite families (that we denote as $Y_{A_k}$ and $Y_{D_k}$) and three sporadic examples ($Y_{E_6}$, $Y_{E_7}$ and $Y_{E_8}$). In this work, we fix as representatives $Y_{\mathfrak{g}}$ for each isomorphism class the zero loci of the polynomials in \cref{tab:singularities} (with  $k \geq 1$ in the $A_k$ case and $k \geq 4$ in the $D_k$ case):

\indent Recall that each of these singularities has trivial canonical bundle. Moreover, they can be characterized as the only surface singularities admitting a resolution of singularities with trivial canonical bundle, i.e., in this setting, a crepant resolution.

The exceptional locus of the crepant resolution of a Du Val singularity $Y_{\mathfrak{g}}$ is a collection of $\mathbb P^1$'s with normal bundles of degree -2 intersecting each other according to the Dynkin diagram of $\mathfrak{g}$, as displayed in Table \ref{tab:DynkinADE}. 

\begin{table}
\label{Table:duvalsings}
\centering
    
    \begin{tabular}{c|c}
        $Y_{A_k}$ &  $\overbrace{\begin{matrix}
            \begin{tikzpicture}
    
    \draw (0.1,1.5)--(0.9,1.5);
    \draw (1.1,1.5)--(1.9,1.5);

    \draw (0,1.5) circle (0.1); 
    
	\draw (1,1.5) circle (0.1); 
    
	\draw (2,1.5) circle (0.1); 
	\draw (3,1.5) circle (0.1);
 \node a t (2.5,1.5) {$\cdots $};
\end{tikzpicture}  
        \end{matrix}}^{\mbox{\tiny $k$ nodes}}$ \\[2px]
        $Y_{D_k}$ &   $\underbrace{\begin{matrix}
            \begin{tikzpicture}
    
    \draw (0.1,1.5)--(0.9,1.5);
    \draw (1.1,1.5)--(1.9,1.5);
    \draw (3.1,1.5)--(3.9,1.5);
    \draw (1,1.6)--(1,2.4);

    \draw (0,1.5) circle (0.1); 
    
	\draw (1,1.5) circle (0.1); 
    
	\draw (2,1.5) circle (0.1); 
	\draw (3,1.5) circle (0.1); 
	\draw (4,1.5) circle (0.1);
	\draw (1,2.5) circle (0.1);
 \node a t (2.5,1.5) {$\cdots $};
\end{tikzpicture}  
        \end{matrix}}_{\mbox{\tiny $k$ nodes}}$ \\[2px]
        $Y_{E_6}$ &  $\begin{matrix}
            \begin{tikzpicture}
    
    \draw (0.1,1.5)--(0.9,1.5);
    \draw (1.1,1.5)--(1.9,1.5);
    \draw (2.1,1.5)--(2.9,1.5);
    \draw (3.1,1.5)--(3.9,1.5);
    \draw (2,1.6)--(2,2.4);

    \draw (0,1.5) circle (0.1); 
    
	\draw (1,1.5) circle (0.1); 
    
	\draw (2,1.5) circle (0.1); 
	\draw (3,1.5) circle (0.1); 
	\draw (4,1.5) circle (0.1);
	\draw (2,2.5) circle (0.1); 
\end{tikzpicture}  
        \end{matrix}$\\[2px]
        $Y_{E_7}$ &   $\begin{matrix}
            \begin{tikzpicture}
    
    \draw (0.1,1.5)--(0.9,1.5);
    \draw (1.1,1.5)--(1.9,1.5);
    \draw (2.1,1.5)--(2.9,1.5);
    \draw (3.1,1.5)--(3.9,1.5);
    \draw (4.1,1.5)--(4.9,1.5);
    \draw (2,1.6)--(2,2.4);

    \draw (0,1.5) circle (0.1); 
    
	\draw (1,1.5) circle (0.1); 
    
	\draw (2,1.5) circle (0.1); 
	\draw (3,1.5) circle (0.1); 
	\draw (4,1.5) circle (0.1);
	\draw (5,1.5) circle (0.1);
	\draw (2,2.5) circle (0.1); 
\end{tikzpicture}  
        \end{matrix}$ \\[2px]
        $Y_{E_8}$ &  $\begin{matrix}
            \begin{tikzpicture}
    
    \draw (0.1,1.5)--(0.9,1.5);
    \draw (1.1,1.5)--(1.9,1.5);
    \draw (2.1,1.5)--(2.9,1.5);
    \draw (3.1,1.5)--(3.9,1.5);
    \draw (4.1,1.5)--(4.9,1.5);
    \draw (5.1,1.5)--(5.9,1.5);
    \draw (2,1.6)--(2,2.4);

    \draw (0,1.5) circle (0.1); 
    
	\draw (1,1.5) circle (0.1); 
    
	\draw (2,1.5) circle (0.1); 
	\draw (3,1.5) circle (0.1); 
	\draw (4,1.5) circle (0.1);
	\draw (5,1.5) circle (0.1);
	\draw (6,1.5) circle (0.1);
	\draw (2,2.5) circle (0.1); 
\end{tikzpicture}  
        \end{matrix}$ .
    \end{tabular}  
    \caption{The dual graph of the exceptional locus of the crepant resolution of the Du Val singularities.}
    \label{tab:DynkinADE}
\end{table}
In appendix \ref{sec:duvalres} we show explicit maps for the resolutions of the $ADE$ singularities of type $A_k$, $D_4, D_5, E_6, E_7, E_8$, that will be best suited for our needs in the course of the work.\\
\indent In the next section, we show how we can construct threefolds displaying the desired properties to yield five-dimensional conformal matter, employing the $ADE$ singularities as building blocks.

\subsubsection{Singular geometries for 5d bifundamentals}\label{sec:threefold singularities via base change}
We are interested in describing affine Calabi-Yau threefold hypersurface singularities arising by the transversal collision of two lines of $Y_{\mathfrak{g}}$ singularities. This, from an M-theory perspective, corresponds to the fact that we expect to find a flavor symmetry of type $\mathfrak{g}\times \mathfrak{g}$ coupled to the five-dimensional degrees of freedom trapped at the point where the two lines collide. To obtain this construction we proceed by ``gluing'' the singular surface $Y_{\mathfrak{g}}$ along two colliding lines that we model as the zero locus, in $(u,v) \in \mathbb C^2$, of the monomial $uv$. Mathematically, this procedure is implemented by a ``base-change''. Intuitively this corresponds to substituting, in the equation defining $Y_{\mathfrak{g}}$, one among $\bullet= x,y, z$ with $uv$.\footnote{ \ 
More formally, we can draw the following commutative diagram:

  \begin{equation} \label{eq:basechangeI}
  \begin{tikzcd}[ampersand replacement=\&]
       \arrow{r}\mathbb C^2 \underset{\mathbb C^1}{\times} Y_{\mathfrak{g}}\arrow{d} \arrow{r} \&  X\arrow{d}{\pi_\bullet} \& (x,y,z)\arrow[mapsto]{d} \\
         \mathbb C^2 \arrow{r} \& \mathbb C \ni{t} \& t = \bullet\\
         (u,v)\arrow[mapsto]{r} \& t = uv \&
    \end{tikzcd}
    \end{equation}
    where $\bullet$ is either $x$, $y$, or $z$. For example we can re-obtain the coordinate ring $R$ of $X_{E_6}^x = \mathbb C^2 \underset{\mathbb C^1}{\times} Y_{\mathfrak{g}} $ defined in \eqref{E6yexample} applying the definition of base-change:
\begin{equation*}
R = \frac{\mathbb C[u,v,x,y,z,t]}{\left(t - uv, t - x,\re{x}^2 + y^3 + z^4\right)} \cong \frac{\mathbb C[x,z,u,v]}{\left((\re{uv})^2 + y^3 + z^4\right)}
\end{equation*}
}

We can readily visualize such prescription in an example: we take  $ Y_{\mathfrak{g}} = E_{6}$ and use $\bullet = x$.
Hence we get:
\begin{equation}\label{E6yexample}
    Y_{E_6}: \quad x^2+y^3+z^4 =0 \quad \xrightarrow{x=uv}\quad X_{E_6}^x: \quad (uv)^2+y^3+z^4=0.
\end{equation}
The singular locus of the threefold $X_{E_6}^x$ is:
\begin{equation}
    \text{Sing}(X_{E_6}^x) = \{(uv,y,z) =0 \} \subset \mathbb{C}^4.
\end{equation}
The singular locus stratifies into two parts where we have, respectively $(u,v) \neq (0,0)$ and $(u,v)=(0,0)$. The former  consists of two one-dimensional families of  $E_6$ singularities. The latter component is an enhanced singularity supported at the collision point. This configuration can be pictorially represented as in Figure \ref{fig:intersection}.\\

The general prescription is the following: call $\bullet$ one among $x,y,z$ and replace it with the monomial $uv$ in the equation defining $Y_{\mathfrak{g}}$. We obtain a singular threefold, that we denote by $X_{\mathfrak{g}}^{\bullet}$: 
\begin{equation}
\label{eq:threefoldsingdef}
 X_{\mathfrak{g}}^{\bullet}\equiv \Set{ P_{\mathfrak{g}}(x,y,z)\rvert_{\bullet = uv} = 0} \subset \mathbb C^4,
\end{equation}
where the coordinates of $\mathbb C^4$ are $u,$ $v$ and the two coordinates among $(x,y,z)$ that we \textit{did not} substitute with the monomial $uv$.
The singular locus of $X_{\mathfrak{g}}^{\bullet}$ is, by direct computation,
\begin{equation}
\label{eq:singularlocus}
\text{Sing}(X_{\mathfrak{g}}^{\bullet}) = \Set{ (x,y,z)\rvert_{\bullet \to uv} = 0 } \subset \mathbb C^4,
\end{equation}
which is stratified as in the previous example, with two lines of $Y_{\mathfrak{g}}$ singularities intersecting at a point where an enhanced singularity resides. Let us remark the relevance of the threefolds defined in \eqref{eq:threefoldsingdef}:\\

\indent \textit{The singular threefolds $X_{\mathfrak{g}}^{\bullet}$ are the basic building blocks of this work. The expectation is that M-theory geometric engineering on $X_{\mathfrak{g}}^{\bullet}$, with $\mathfrak{g}=A,D,E$, produces an interacting 5d $\mathcal{N}=1$ SCFT with matter charged under the flavor group $\mathfrak{g}\times\mathfrak{g}$. We will prove this fact in the subsequent sections, highlighting the due caveats.}\\

\indent To substantiate the above claim we must perform a crepant resolution of the singularities $X_{\mathfrak{g}}^{\bullet}$ and directly probe the Coulomb branch. Indirectly, this will give us information \cite{Yonekura} on the flavor symmetry of the SCFT. In general, the threefolds $X_{\mathfrak{g}}^{\bullet}$ admit many possible crepant resolutions, each one corresponding to a certain chamber of the $X_{\mathfrak{g}}^{\bullet}$ K\"ahler cone. In order to study the geometry and the physics of $X_{\mathfrak{g}}^{\bullet}$ it is convenient to exhibit the recipe to compute a specific (full) resolution. This resolution, in section \ref{sec:exampleE6quiver}, will allow us to write down a quiver that describes, physically, the Coulomb phase of the 5d SCFT obtained by M-theory geometric engineering on $X_{\mathfrak{g}}^{\bullet}$.\\
Let us add a few remarks, before delving into the core of the work:
\begin{itemize}
    \item Consider the $A_k$ singularities in a slightly different notation with respect to Table \ref{tab:singularities}, namely written as $\tilde{x}\tilde{y}=z^{k+1}$. Applying the base changes introduced in \eqref{eq:threefoldsingdef} can yield two substantially different outcomes, that we will not explore, as they have already been profusely studied in the literature:
    \begin{itemize}
        \item if we apply the base change $z = uv$, we fall into the cases analyzed in the seminal paper \cite{KatzVafa}, that yield a collection of hypermultiplets localized on the origin, charged only under $\mathfrak{su}(k+1)\times \mathfrak{su}(k+1)$ flavor symmetry. This happens because the resulting threefold singularity is of compound Du Val (cDV) type, and can thus give rise only to a small resolution \cite{reidminimal1983};
        \item if we apply the base change $\tilde{x} = uv$ (or, equivalently, $\tilde{y}= uv$) we obtain the famous $T_{k+1}$ theories (introduced in the 5d setting in \cite{Benini:2009gi}). Since the resolution of these singularities displays compact exceptional divisors they give rise to a non-trivial gauge dynamics. These singularities are \textit{not} of cDV type, as it is evident from their presentation.
    \end{itemize}
     \item Now go back to the $A_k$ singularities written as:
    \begin{equation}
        x^2 +y^2 = z^{k+1},
    \end{equation}
    If we substitute $x = uv$ or $y =uv$, for $k>1$ we obtain a non cDV singularity, and hence we expect some interesting interacting 5d SCFT.\footnote{ \ The $k=1$ case is evidently of cDV type and thus cannot yield an interacting theory. Furthermore, choosing $x = uv$ or $y=uv$ is completely equivalent for symmetry reasons: we will stick to considering the $x=uv$ case in the following.} We will come back to these theories in the following sections.
    \item Any base change of the kind displayed in equation \eqref{eq:threefoldsingdef} transforms $D$ and $E$ singularities into a threefold which is \textit{not} of cDV type. 
    Thus, it is natural to expect that the geometry gives rise to compact divisors on top of the origin (we will rigorously prove this fact in section \ref{sec:exampleE6} and \ref{sec:generalrecipe}), resulting in a non-trivial 5d theory. These are the theories that we are going to investigate right away.
\end{itemize}

Let us start in the next section by computing the resolution in a concrete example.

\section{A concrete example: $(E_6,E_6)_x$ conformal matter}\label{sec:exampleE6}

In this section we present a detailed analysis for the $(E_6,E_6)_x$ conformal matter. In Section \ref{sec:concreslutio} we present the detailed resolution of the singularity $X_{E_6}^x$. In Section \ref{sec:exampleE6quiver} we interpret our result through the lense of geometric engineering and we identify a quiver gauge theory phase. In section \ref{sec:flavorUV} we confirm our prediction on the UV flavor symmetry from geometric engineering exploiting field theoretical methods \cite{Tachikawa}.

\subsection{Resolution of the singularity $X_{E_6}^x$}\label{sec:concreslutio}

This section is also slightly technical and our readers that are not interested in the details of the resulting algebraic geometry can skip to section \ref{sec:exampleE6quiver} after having a quick glance at the intuitive description of the resulting resolved geometry in Figure \ref{fig:E6intuitive}.

\medskip

Consider the threefold $X_{E_6}^x$, that we already encountered in \eqref{E6yexample}. $X_{E_6}^x$ sports two lines of $E_6$ singularities intersecting transversally at the origin, parametrized by the coordinates $u$ and $v$ appearing in the base change $x=uv$. We repeat the presentation of $X_{E_6}^x$ as a hypersurface equation, for ease and clarity:
\begin{equation}
\label{eq:explicitex1}
    X_{E_6}^x: \quad (uv)^2+y^3+z^4=0.
\end{equation}
We now give a recipe to completely resolve the singularity and extract the related physical consequences. We remark here that, if we forget about the Chern-Simons levels and the $\theta$-angles, following the resolution procedure up to step 2 already permits to give a 5d quiver describing a (fully)  resolved phase $\tilde{\tilde{X}}_{E_6}^x$ of $X_{E_6}^x$. The resolution procedure involves the following steps:
\begin{enumerate}
\item The singularities outside of the origin of \eqref{eq:explicitex1} are  $E_6$ Du Val singularities, lying on top of the lines $y=z=u=0$ and $y=z=v=0$. Replacing $x \to uv$ in the resolution maps \eqref{eq:newpatchE6resmap}, \eqref{eq:transfuncttailE6} we can fully resolve the singularities outside the origin $y =  z = u = v = 0$. As we will see in \cref{sec:resbasechange}, this corresponds to lifting to the \textit{resolved} Du Val surface $\tilde{Y}_{E_6}$ the base-change \eqref{E6yexample} that we used to obtain $X_{E_6}^x$ from the \textit{singular} Du Val surface $Y_{E_6}$. We call  $\tilde{X}_{E_6}^x$ the partially resolved threefold obtained in this way and 
$\varepsilon$ the partial resolution map 
\begin{equation}
    \varepsilon: \quad \tilde{X}_{E_6}^x \longrightarrow X_{E_6}^x.
\end{equation}
 The singular locus of $\tilde{X}_{E_6}^x$ is all contained in the subvariety $\varepsilon^{-1}(0)$ contracted to the origin $ y = z=u=v= 0$. $\varepsilon^{-1}(0)$ consists of a collection of $\mathbb{P}^1_i$'s, with $i = 1, \ldots, 6$. Each of the $\mathbb P^1_i$'s is a line of singularities of type $A_{r_i}$ with $r_i \geq 0$, with enhanced singularities at the intersection points $q_{ij}= \mathbb P^1_i \cap \mathbb P^1_j$. Intuitively, the situation after the first blowup is depicted in Figure \ref{fig:E6firstblowup}.
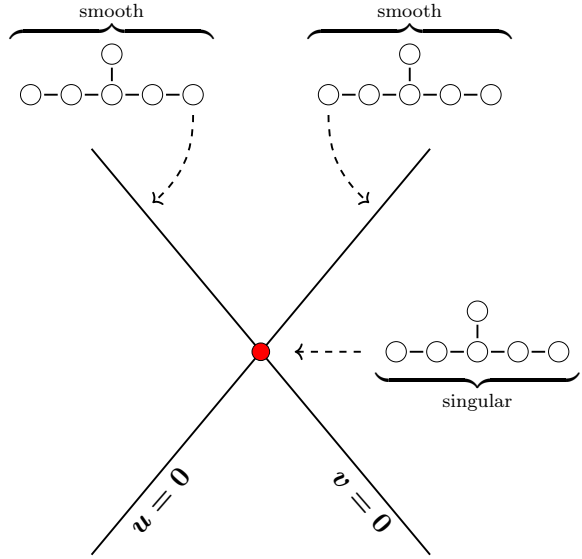
\begin{figure}[H]
\centering
 \scalebox{0.9}{
    \begin{tikzpicture}
        \draw[thick] (-2.5,-3)--(2.5,3);
        \draw[thick] (-2.5,3)--(2.5,-3);

        \draw[thick,dashed,->] (-1,3.5) to [out=270, in = 40] (-1.6,2.2);
        \draw[thick,dashed,->] (1,3.5) to [out=270, in = 140] (1.6,2.2);
        \draw[thick,dashed,<-] (0.5,0)--(1.5,0);
        \draw[fill=red] (0,0) circle (0.13);
        \node at (-2.2,4.9) {$\overbrace{\hspace{3cm}}^{\text{smooth}}$};
        \draw (-3.4,3.8) circle (0.15);
        \draw[thick] (-3.2,3.8)--(-3.0,3.8);
        \draw (-2.8,3.8) circle (0.15);
        \draw[thick] (-2.6,3.8)--(-2.4,3.8);
        \draw (-2.2,3.8) circle (0.15);
        \draw[thick] (-2,3.8)--(-1.8,3.8);
        \draw (-1.6,3.8) circle (0.15);
        \draw[thick] (-1.4,3.8)--(-1.2,3.8);
        \draw (-1,3.8) circle (0.15);
        \draw[thick] (-2.2,4.0)--(-2.2,4.2);
        \draw (-2.2,4.4) circle (0.15);
        \node at (2.2,4.9) {$\overbrace{\hspace{3cm}}^{\text{smooth}}$};
        \draw (3.4,3.8) circle (0.15);
        \draw[thick] (3.2,3.8)--(3.0,3.8);
        \draw (2.8,3.8) circle (0.15);
        \draw[thick] (2.6,3.8)--(2.4,3.8);
        \draw (2.2,3.8) circle (0.15);
        \draw[thick] (2,3.8)--(1.8,3.8);
        \draw (1.6,3.8) circle (0.15);
        \draw[thick] (1.4,3.8)--(1.2,3.8);
        \draw (1,3.8) circle (0.15);
        \draw[thick] (2.2,4.0)--(2.2,4.2);
        \draw (2.2,4.4) circle (0.15);
          \node at (3.2,-0.6) {$\underbrace{\hspace{3cm}}_{\text{singular}}$};
        \draw (4.4,0) circle (0.15);
        \draw[thick] (4.2,0)--(4,0);
        \draw (3.8,0) circle (0.15);
        \draw[thick] (3.6,0)--(3.4,0);
        \draw (3.2,0) circle (0.15);
        \draw[thick] (3,0)--(2.8,0);
        \draw (2.6,0) circle (0.15);
        \draw[thick] (2.4,0)--(2.2,0);
        \draw (2,0) circle (0.15);
        \draw[thick] (3.2,0.2)--(3.2,0.4);
        \draw (3.2,0.6) circle (0.15);
        \node[rotate=49] at (-1.5,-2.2) {$\boldsymbol{u=0}$};
        \node[rotate=-49] at (1.5,-2.2) {$\boldsymbol{v=0}$};
        \end{tikzpicture}}
    \caption{Pictorial representation of $\tilde{X}_{E_6}^{x}$. On top of the origin there remains a collection of singular $\mathbb{P}^1$'s, arranged like a $E_6$ Dynkin diagram.}
     \label{fig:E6firstblowup}
\end{figure}  

\item  To compute the labels $r_i$ we proceed as follows. After replacing $x \to uv$ in the $E_6$ resolution map \eqref{eq:newpatchE6resmap}, \eqref{eq:transfuncttailE6} we obtain a hypersurface equation for each chart $U_0,\ldots,U_6$ of \eqref{eq:newpatchE6resmap}, \eqref{eq:transfuncttailE6}. We display these local data in \cref{tab:localmodelsE6z}. Such hypersurface equations are threefolds built as intersections of singularities of type $A$:
\begin{equation}
\label{eq:Ajcollision}
    uv=a^nb^k Q(a,b)^j,
\end{equation}
with $Q(a,b)$ an irreducible polynomial of $a,b$. Equation \eqref{eq:Ajcollision} displays a line of $A_{n-1}$ singularities\footnote{ \ As we will see there might be some $\mathbb P^1 \subset \varepsilon^{-1}(0)$  over which $\tilde{X}_{E_6}^{x}$ is, locally, a trivial fibration of the single-center Taub-NUT space $\left\{(x,y,z) \in \mathbb C^3| xy = z \right\}$ (isomorphic to $\mathbb C^2$). We denote the single-center Taub-NUT space as $A_0$. There might also be non-compact curves, outside $\varepsilon^{-1}(0)$, over which the $\tilde{X}_{E_6}^{x}$ looks like an $A_0$ trivial fibration.} on $u=v=a =0$, a line of $A_{k-1}$ on $u= v = b =0$  and a line of $A_{j-1}$ singularities on\footnote{ \ This is true because in all the considered examples $Q(a,b)=0$ is a non-singular curve inside $\mathbb C^2 \ni (a,b)$.} $Q(a,b) = 0$.
Hence, labeling the nodes of $E_6$ as in Figure \ref{fig:E6dynkin} we can read off the $r_i$ by looking at the exponents of the irreducible factors of the r.h.s.\ of the equations appearing in the second column of \cref{tab:localmodelsE6z}.
  \begin{figure}[H]
    \centering
    \scalebox{0.8}{
    \begin{tikzpicture}
        \draw[thick] (0,0) circle (0.65);
        \node at (0,0) {$\alpha_3$};
        \draw[thick] (0.7,0)--(1.3,0);
        \draw[thick] (2,0) circle (0.65);
        \node at (2,0) {$\alpha_4$};
        \draw[thick] (-0.7,0)--(-1.3,0);
        \draw[thick] (-2,0) circle (0.65);
        \node at (-2,0) {$\alpha_2$};
        \draw[thick] (0,0.7)--(0,1.3);
        \draw[thick] (0,2) circle (0.65);
        \node at (0,2) {$\alpha_6$};
        \draw[thick] (2.7,0)--(3.3,0);
        \draw[thick] (4,0) circle (0.65);
        \node at (4,0) {$\alpha_5$};
        \draw[thick] (-2.7,0)--(-3.3,0);
         \draw[thick] (-4,0) circle (0.65);
        \node at (-4,0) {$\alpha_1$};
        \end{tikzpicture}}
    \caption{Labeling convention of the $E_6$ Dynkin diagram.}
    \label{fig:E6dynkin}
    \end{figure}
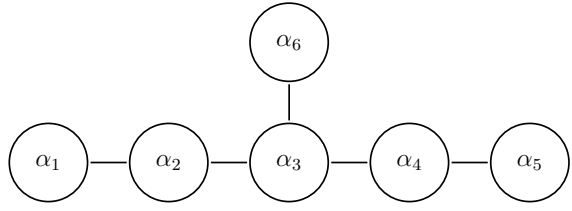

\renewcommand{\arraystretch}{1.2}
\begin{table}[H]
\begin{center}

  \begin{tabular}{c|c}
    \hline
  $\boldsymbol{V_i}$ & \textbf{Hypersurface equation}   \\
    \hline
    \hline
         $U_0$ & $uv = \underbrace{b_0^3}_{\alpha_6}
   \underbrace{\big(\left(a_0+1\right)^2
   b_0+1\big)^4}_{\alpha_2}
   \underbrace{\big(\left(a_0+1\right)^2
   b_0+2\big)}_{\mathcal{D}}$\\
     \hline
$U_1$ & $u v =\underbrace{b_1^2}_{\alpha_1}\underbrace{\left(a_1^{3}b_1^{2}-1\right)^3}_{\alpha_6}\underbrace{\left(a_1^{3}b_1^{2}+1\right)}_{\mathcal{D}}$
    \\
    \hline
     $U_2$ & $u v = \underbrace{a_2^2}_{\alpha_1}\underbrace{b_2^4}_{\alpha_2} \underbrace{\left(a_2^{2}b_2-1\right)^3}_{\alpha_6}\underbrace{\left(a_2^{2}b_2+1\right)}_{\mathcal{D}}$ \\
     \hline
     $U_3$ & $u v = \underbrace{a_3^4}_{\alpha_2} \underbrace{b_3^6}_{\alpha_3} \underbrace{(a_3-1)^3}_{\alpha_6}\underbrace{\left(a_3+1\right)}_{\mathcal{D}}$ \\
     \hline
     $U_4$ & $u v = \underbrace{a_4^6}_{\alpha_3}\underbrace{b_4^4}_{\alpha_4} \underbrace{(b_4-1)^3}_{\alpha_6}\underbrace{\left(b_4+1\right)}_{\mathcal{D}}$ \\
     \hline
     $U_5$ & $u v = \underbrace{a_5^4}_{\alpha_4}\underbrace{b_5^2}_{\alpha_5} \underbrace{\left(a_5b_5^2-1\right)^3}_{\alpha_6}\underbrace{\left(a_5b_5^2-1\right)}_{\mathcal{D}}$ \\ 
     \hline
      $U_6$ & $u v = \underbrace{a_6^2}_{\alpha_5}\underbrace{\left(a_6^2b_6^3-1\right)^3}_{\alpha_6}\underbrace{\left(a_6^2b_6^3+1\right)}_{\mathcal{D}}$ \\ 
     \hline
  \end{tabular}
    \caption{Hypersurface equations in resolution charts for $\tilde{X}_{E_6}^x$. Subscripts refer to the Dynkin nodes pertaining to each factor. $\mathcal{D}$ indicates a non-compact curve.}
     \label{tab:localmodelsE6z}
       \end{center}
\end{table}
\renewcommand{\arraystretch}{1}

In the $U_0$ chart the subset $\mathcal{D}: u = v = (a_0+1)^2b_0+2 = 0$ (that also appears in the other charts) is a \textit{non-compact} line\footnote{ \ We can see from \eqref{eq:transitionfunctionsAk} that  the $\mathbb{C} \ni b_1$ is not compactified into a $\mathbb P^1$ and hence the associated divisor supports a flavor symmetry.} of single-center Taub-NUT $A_0$. 
Summing up, from \cref{tab:localmodelsE6z} we can conclude that: 
\begin{itemize}
    \item over each point of the nodes $\alpha_1$ and $\alpha_5$ of $\varepsilon^{-1}(0)$ there is a $A_1$ singularity;
    \item over each point of the node $\alpha_6$ there is a $A_2$ singularity;
    \item over each point of the nodes $\alpha_2$ and $\alpha_4$ there is a $A_3$ singularity;
    \item over each point of the node $\alpha_3$ there is a $A_5$ singularity.
\end{itemize}
    We are now left to resolve the $A_{r_i}$ singularities over each $\mathbb P^1_i \subset \varepsilon^{-1}(0)$. We can do this via the resolution map $\pi$ (we will be more specific momentarily):
    \begin{equation}
\label{eq: complete resolution}
\pi: \tilde{\tilde{X}}_{E_6}^{x} \to \tilde{X}_{E_6}^{x}. \end{equation}
The irreducible components of $\pi^{-1}(\mathbb P^1_{i})$ will then be compact ruled surfaces intersecting according to the $A_{r_i}$ Dynkin diagram.\footnote{ \ More precisely,  every node of the $A_{r_i}$ diagram represents a compact surface ruled over $\mathbb P^1_i$ and every edge represents some $\mathbb P^1$s being the intersection of two of these compact surfaces.} This configuration is peculiar of our specific crepant resolution $\pi \circ \varepsilon$ but,  using the geometric engineering dictionary, we can already exploit it to determine some aspects of the quiver gauge theory describing the considered resolution of $X_{E_6}^{x}$:
  \begin{enumerate}
      \item each irreducible component $\mathbb P^1_i$ of $\varepsilon^{-1}(0)$ corresponds to a \textit{gauge} node of the quiver, the points $q_{ij}$ correspond to bifundamental hypers between the gauge nodes;
      \item the gauge group associated to $\mathbb P^1_i$ is $\mathfrak{su}(r_i+1)$;
      \item flavor nodes are associated to the component $\mathcal{D}$ appearing in \cref{tab:localmodelsE6z}. Since $\mathcal{D}$ is a non-compact line of $A_0$ inside $\tilde{X}_{E_6}^x$, they correspond to one flavor.
  \end{enumerate}
We note that, if we forget about the Chern-Simons levels and the $\theta$ angles, we already have all the information to write down the quiver in Figure \ref{fig:E6yfigure} that captures the relevant dynamics of M-theory on $X_{E_6}^x$. The following steps of the singularity resolution, in terms of the quiver, will just add this finer information.   
\item 
We now want to understand the enhanced singularities at the intersection points $q_{ij} = \mathbb P^1_{i} \cap \mathbb P^1_{j}$. For example, on $q_{12}$ we have a collision of $A_1$ and $A_3$ lines of singularities that we model as
\begin{equation}\label{toricmodel}
    uv = a^2b^4.
\end{equation}
 Equation \eqref{toricmodel} is the zero locus of a binomial, hence it is toric; in \cref{sec:appendixtoriclocalmodels} we will recall the proof that shows that \eqref{toricmodel} is Calabi-Yau. We depict the corresponding toric diagram in Figure \ref{fig:polytopeexample}.
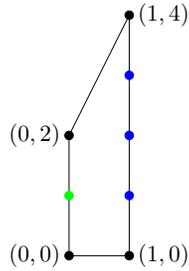
\begin{figure}[H]
    \centering
    \scalebox{0.8}{
    \begin{tikzpicture}
        \draw  (0,0)--(0,-2)--(1,-2)--(1,2)--cycle; 
        \filldraw (0,0) circle (2pt);
        \filldraw (0,-2) circle (2pt);
        \filldraw[blue] (1,-1) circle (2pt);
        \filldraw[green] (0,-1) circle (2pt);
        \filldraw[blue] (1,0) circle (2pt);
        \filldraw[blue] (1,1) circle (2pt);
        \filldraw (1,2) circle (2pt);
        \filldraw (1,-2) circle (2pt);
        \node[left] at (0,0) {\small $(0,2)$};
        \node[left] at (0,-2) {\small $(0,0)$};
        \node[right] at (1,-2) {\small $(1,0)$};
        \node[right] at (1,2) {\small $(1,4)$};
    \end{tikzpicture}}
    \caption{Example of the toric planar diagram for \eqref{toricmodel}.}
    \label{fig:polytopeexample}
\end{figure}
Consequently, we can use the toric model defined in Figure \ref{fig:polytopeexample} to describe a neighborhood of $q_{12}$ inside $\tilde{X}_{E_6}^{x}$.
We can proceed analogously to associate a local toric model to each  $q_{ij}$. 

\item The global threefold $\tilde{X}_{E_6}^x$ is obtained by gluing the local toric models as in Figure \ref{fig:E6nodots}. In Figure \ref{fig:E6nodots} we are employing the usual orthonormal system in $\mathbb{R}^3$ spanned by the vectors $\hat{i},\hat{j},\hat{k}$. Notice that the fact that the gluing cannot happen in $\mathbb{R}^2$ indicates that the threefold is globally non-toric. 
  Below each vertical line we indicate the corresponding $A_{r_i}$ singularity (according to Table \ref{tab:localmodelsE6z}).
  
  We remark that the dashed lines that we draw in Figure \ref{fig:E6nodots} are just a graphical tool to help the reader to reconstruct the gluing of the local toric models, but \textit{do not} represent compact curves and divisors. In the same way, only the vertices of the solid edges in Figure \ref{fig:E6nodots} are associated with divisors of $\tilde{X}_{E_6}^x$. To describe $\tilde{X}_{\mathfrak{g}}^{\bullet}$, one can pictorially construct a 3d analogue of a toric diagram, delimited by the solid edges and obtained by gluing the two diagrams appearing in Figure \ref{fig:E6nodots} along the ``$A_5$'' vertical line. We remark that this intuitive 3d picture \textit{must not} be formally interpreted as a proper toric diagram.  
\item The resolution map $\pi$ that we have introduced in \eqref{eq: complete resolution} corresponds to a triangulation of the toric diagram in Figure \ref{fig:E6nodots}.
The colored nodes represent compact divisors, and the gluing between the local model on the right and the one on the left happens by identifying (with non-toric maps) the red nodes, as well as the two black nodes directly above and below them.

\begin{figure}[H]
\begin{subfigure}{0.5\textwidth}
\centering
    \scalebox{0.8}{
    \begin{tikzpicture}
        \draw (2,4)--(3,2)--(4,0)--(4,-2)--(0,-2)--(0,0)--(1,2)--(2,4);
        \draw[dashed] (2,4)--(2,3)--(2,2)--(2,1)--(2,0)--(2,-1)--(2,-2);
        \draw[dashed] (3,2)--(3,1)--(3,0)--(3,-1)--(3,-2);
        \draw[dashed]  (4,0)--(4,-1)--(4,-2);
        \draw[dashed] (1,2)--(1,1)--(1,0)--(1,-1)--(1,-2);
        \draw[dashed]  (0,0)--(0,-1)--(0,-2);
        %
        %
        %
        \filldraw (0,0) circle (2pt);
        \filldraw (0,-1) circle (2pt);
        \filldraw (0,-2) circle (2pt);
        \filldraw (1,2) circle (2pt);
        \filldraw (1,1) circle (2pt);
        \filldraw (1,0) circle (2pt);
        \filldraw (1,-1) circle (2pt);
        \filldraw (1,-2) circle (2pt);
        \filldraw (2,4) circle (2pt);
        \filldraw (2,3) circle (2pt);
        \filldraw (2,2) circle (2pt);
        \filldraw (2,1) circle (2pt);
        \filldraw (2,0) circle (2pt);
        \filldraw (2,-1) circle (2pt);
        \filldraw (2,-2) circle (2pt);
        \filldraw (3,2) circle (2pt);
        \filldraw (3,1) circle (2pt);
        \filldraw (3,0) circle (2pt);
        \filldraw (3,-1) circle (2pt);
        \filldraw (3,-2) circle (2pt);
        \filldraw (4,0) circle (2pt);
        \filldraw (4,-1) circle (2pt);
        \filldraw (4,-2) circle (2pt);
        %
        %
        \node[below] at (0,-2) {\small $A_1$};
        \node[below] at (1,-2) {\small $A_3$};
        \node[below] at (2,-2) {\small $A_5$};
        \node[below] at (3,-2) {\small $A_3$};
        \node[below] at (4,-2) {\small $A_1$};
    \end{tikzpicture}}
    \caption*{$\hat{i}\hat{j}$ plane}
    \end{subfigure}
    \begin{subfigure}{0.5\textwidth}
    \centering
           \scalebox{0.8}{
    \begin{tikzpicture}
        \draw[dashed] (0,4)--(0,-2);  
        \draw (0,4)--(1,1)--(1,-2)--(-1,-2)--(-1,-1)--(0,4);
        %
        %
        \filldraw (-1,-1) circle (2pt);
        \filldraw (-1,-2) circle (2pt);
        \filldraw (0,4) circle (2pt);
        \filldraw (0,3) circle (2pt);
        \filldraw (0,2) circle (2pt);
        \filldraw (0,1) circle (2pt);
        \filldraw (0,0) circle (2pt);
        \filldraw (0,-1) circle (2pt);
        \filldraw (0,-2) circle (2pt);
        \filldraw (1,1) circle (2pt);
        \filldraw (1,0) circle (2pt);
        \filldraw (1,-1) circle (2pt);
        \filldraw (1,-2) circle (2pt);
        %
        %
        %
        %
        \node[below] at (-1,-2) {\small $A_0$};
        \node[below] at (0,-2) {\small $A_5$};
        \node[below] at (1,-2) {\small $A_2$};
    \end{tikzpicture}}
    \caption*{$\hat{j}\hat{k}$ plane}
    \end{subfigure}
     \caption{Example of the gluing of local toric models in the $\tilde{X}_{E_6}^{x}$ case. }
    \label{fig:E6nodots}
\end{figure}
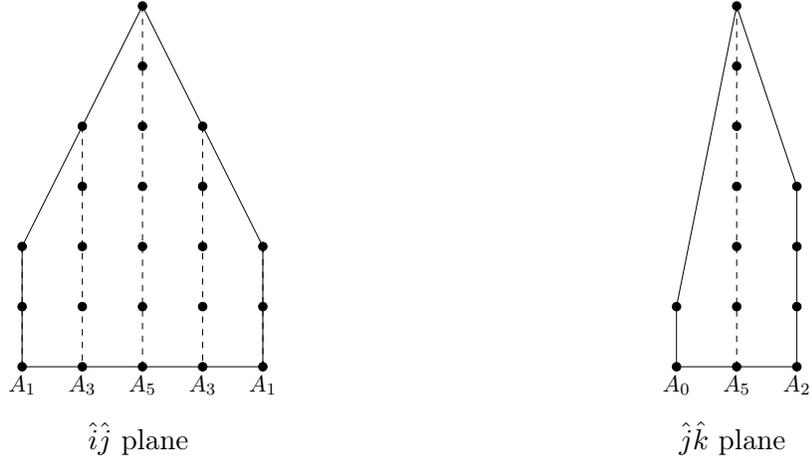

\begin{figure}[H]
\begin{subfigure}{0.5\textwidth}
\centering
    \scalebox{0.8}{
    \begin{tikzpicture}
        \draw (2,4)--(3,2)--(4,0)--(4,-2)--(0,-2)--(0,0)--(1,2)--(2,4);
        \draw (2,4)--(2,3)--(2,2)--(2,1)--(2,0)--(2,-1)--(2,-2);
        \draw (3,2)--(3,1)--(3,0)--(3,-1)--(3,-2);
        \draw (4,0)--(4,-1)--(4,-2);
        \draw (1,2)--(1,1)--(1,0)--(1,-1)--(1,-2);
        \draw (0,0)--(0,-1)--(0,-2);
        \draw (0,-1)--(1,-1);
        \draw (0,-1)--(1,0);
        \draw (0,-1)--(1,1);
        \draw (4,-1)--(3,-1);
        \draw (4,-1)--(3,0);
        \draw (4,-1)--(3,1);
        \draw (3,1)--(4,0);
        \draw (1,1)--(0,0);
        \draw (3,-1)--(4,-2);
        \draw (1,-1)--(0,-2);
        \draw (2,-1)--(1,-2);
        \draw (2,-1)--(3,-2);
        \draw (2,3)--(1,2);
        \draw (2,3)--(3,2);
        \draw (1,-1)--(2,-1);
        \draw (1,-1)--(2,0);
        \draw (3,-1)--(2,-1);
        \draw (3,-1)--(2,0);
        \draw (1,1)--(2,2);
        \draw (1,1)--(2,3);
        \draw (3,1)--(2,2);
        \draw (3,1)--(2,3);
        \draw (1,0)--(2,0);
        \draw (1,0)--(2,1);
        \draw (1,0)--(2,2);
        \draw (3,0)--(2,0);
        \draw (3,0)--(2,1);
        \draw (3,0)--(2,2);
        %
        %
        %
        \filldraw (0,0) circle (2pt);
        \filldraw[green] (0,-1) circle (2pt);
        \filldraw (0,-2) circle (2pt);
        \filldraw (1,2) circle (2pt);
        \filldraw[blue] (1,1) circle (2pt);
        \filldraw[blue] (1,0) circle (2pt);
        \filldraw[blue] (1,-1) circle (2pt);
        \filldraw (1,-2) circle (2pt);
        \filldraw (2,4) circle (2pt);
        \filldraw[red] (2,3) circle (2pt);
        \filldraw[red] (2,2) circle (2pt);
        \filldraw[red] (2,1) circle (2pt);
        \filldraw[red] (2,0) circle (2pt);
        \filldraw[red] (2,-1) circle (2pt);
        \filldraw (2,-2) circle (2pt);
        \filldraw (3,2) circle (2pt);
        \filldraw[blue] (3,1) circle (2pt);
        \filldraw[blue] (3,0) circle (2pt);
        \filldraw[blue] (3,-1) circle (2pt);
        \filldraw (3,-2) circle (2pt);
        \filldraw (4,0) circle (2pt);
        \filldraw[green] (4,-1) circle (2pt);
        \filldraw (4,-2) circle (2pt);
        %
        %
        \node[below] at (0,-2) {\small $A_1$};
        \node[below] at (1,-2) {\small $A_3$};
        \node[below] at (2,-2) {\small $A_5$};
        \node[below] at (3,-2) {\small $A_3$};
        \node[below] at (4,-2) {\small $A_1$};
    \end{tikzpicture}}
    \caption*{$\hat{i}\hat{j}$ plane}
    \end{subfigure}
    \begin{subfigure}{0.5\textwidth}
    \centering
           \scalebox{0.8}{
   \begin{tikzpicture}
        \draw (0,4)--(0,-2);  
        \draw (0,4)--(1,1)--(1,-2)--(-1,-2)--(-1,-1)--(0,4);
        \draw (-1,-1)--(0,-2);
        \draw (-1,-1)--(0,-1);
        \draw (-1,-1)--(0,0);
        \draw (-1,-1)--(0,1);
        \draw (-1,-1)--(0,2);
        \draw (-1,-1)--(0,3);
        \draw (0,-1)--(1,-2);
        \draw (1,-1)--(0,-1);
        \draw (1,-1)--(0,0);
        \draw (1,-1)--(0,1);
        \draw (1,0)--(0,1);
        \draw (1,0)--(0,2);
        \draw (1,0)--(0,3);
        \draw (1,1)--(0,3);
        %
        %
        \filldraw (-1,-1) circle (2pt);
        \filldraw (-1,-2) circle (2pt);
        \filldraw (0,4) circle (2pt);
        \filldraw[red] (0,3) circle (2pt);
        \filldraw[red] (0,2) circle (2pt);
        \filldraw[red] (0,1) circle (2pt);
        \filldraw[red] (0,0) circle (2pt);
        \filldraw[red] (0,-1) circle (2pt);
        \filldraw (0,-2) circle (2pt);
        \filldraw (1,1) circle (2pt);
        \filldraw[violet] (1,0) circle (2pt);
        \filldraw[violet] (1,-1) circle (2pt);
        \filldraw (1,-2) circle (2pt);
        %
        %
        %
        %
        \node[below] at (-1,-2) {\small $A_0$};
        \node[below] at (0,-2) {\small $A_5$};
        \node[below] at (1,-2) {\small $A_2$};
    \end{tikzpicture}}
    \caption*{$\hat{j}\hat{k}$ plane}
    \end{subfigure}
     \caption{Example of the gluing of local toric models in the $\tilde{\tilde{X}}_{E_6}^{x}$ case. }
    \label{fig:E6example}
\end{figure}

We  note that each vertical line of the toric diagram is invariant w.r.t. a reflection around its center.\footnote{ \ By this we mean that, if we call $n_1, n_2$  the number of edges terminating in two vertices $p_1, p_2$ exchanged by the reflection, then $n_1 = n_2$.} As we will see in \cref{sec:parity}, this will  correspond to the parity invariance of the quiver gauge theory associated with the resolution in Figure \ref{fig:E6example}, and will set to zero all the Chern-Simons levels.
\end{enumerate}

At this point, \textit{we have completely resolved the starting singular threefold $X_{E_6}^x$.}\\

\indent We can tentatively draw a picture of the resulting preimage of the origin $\pi^{-1}(\varepsilon^{-1}(0))$ through the blowup, representing compact divisors with the colors with which they appear in Figure \ref{fig:E6example}. One can further compute the normal bundle of the intersection curves following the usual conventions (summarized in appendix \ref{sec:appendixtoriclocalmodels}), checking that they are such that the resolved threefold respects the Calabi-Yau condition, and that the global gluing can only happen in a three-dimensional space (as already shown in Figure \ref{fig:E6example}), signaling that the global threefold is \textit{not} toric. In the Figure we write down the normal bundles of the curves: each number expresses the self-intersection of the curve inside the corresponding compact divisor. Notice that the curved black lines in \ref{fig:E6intuitive} correspond to the edges joining diagonally a red node with a black node in the rightmost diagram of Figure \ref{fig:E6example}. Furthermore, Figure \ref{fig:E6intuitive} is manifestly invariant with respect to a $\mathbb{Z}/2$ reflection along a straight horizontal line that cuts the picture horizontally in half. As we will see momentarily, this has repercussions in terms of the Chern-Simons levels of the gauge nodes of the quiver corresponding to the picture (we identify this isometry with the action of parity).

\begin{figure}[H]
    \centering
\begin{tikzpicture}[scale=0.52]
   \draw[very thick,black,fill=candypink] (-1,2)--(-4,2)--(-4,6)--(4,6)--(4,2)--(-1,2);
   \draw[very thick,gray,opacity=0.3] (4,6)--(4,2);
    \draw[very thick,black,fill=candypink] (-1,2)--(-4,2)--(-5,0)--(-4,-2) --(4,-2)--(5,0)--(4,2)--(-1,2) (0,-2) to [out=90,in=-30] (-1.8,1);
    \draw[very thick,black,fill=candypink] (-1,-2)--(-4,-2)--(-5,-4)--(-4,-6) --(4,-6)--(5,-4)--(4,-2)--(-1,-2) (0,-6) to [out=90,in=-30] (-1.8,-3);
    \draw[very thick,black,fill=cyan(process)] (-4,-2)--(-5,0)--(-8,0)--(-9,-2)--(-8,-4)--(-5,-4)--(-4,-2);
    \draw[very thick,black,fill=cyan(process)] (4,-2)--(5,0)--(8,0)--(9,-2)--(8,-4)--(5,-4)--(4,-2);
    \draw[very thick,black,fill=cyan(process)] (-5,8)--(-4,6)--(-4,2)--(-5,0)--(-8,0)--(-8,8)--(-5,8);
    %
    %
    \draw[very thick,black,fill=candypink] (-1,6)--(-4,6)--(-5,8)--(-4,10)--(4,10)--(5,8)--(4,6)--cycle
    (0,10) to [out=270,in=30] (-1.8,7);
    \draw[very thick,black,fill=candypink] (-1,10)--(-4,10)--(-5,12)--(-4,14)--(4,14)--(5,12)--(4,10)--cycle
    (0,14) to [out=270,in=30] (-1.8,11);
    \draw[very thick,black,fill=violet] (2,3.2)--(-1,4)--(-2,6)--(-2,10)--(-1,12)--(2,10.8)--(2,3.2);
    \draw[very thick,black,fill=violet] (2,-5.2)--(-1,-4)--(-2,-2)--(-2,2)--(-1,4)--(2,3.2)--(2,-5.2);
    \draw[very thick,black,fill=cyan(process)] (-4,10)--(-5,12)--(-8,12)--(-9,10)--(-8,8)--(-5,8)--(-4,10);
    \draw[very thick,black,fill=cyan(process)] (4,10)--(5,12)--(8,12)--(9,10)--(8,8)--(5,8)--(4,10);
    \draw[very thick,black,fill=cyan(process)] (5,8)--(4,6)--(4,2)--(5,0)--(8,0)--(8,8)--(5,8);
    \draw[very thick,black,fill=celadon] (-8,4)--(-8,8)--(-9,10)--(-12,10)--(-12,-2)--(-9,-2)--(-8,0)--(-8,4);
    \draw[very thick,black,fill=celadon] (8,4)--(8,8)--(9,10)--(12,10)--(12,-2)--(9,-2)--(8,0)--(8,4);
    \node[above] at (-10.5,-2) {\small -1};
    \node[below] at (-10.5,10) {\small -1};
    \node[right] at (-12,4) {\small 0};
    \node[left] at (-8,4) {\small -2};
    \node[right] at (-8,4) {\small 0};
    \node[right] at (-4,4) {\small 0};
    \node[left] at (-4,4) {\small -2};
    \node[above] at (-6.5,0) {\small -1};
    \node[below] at (-6.5,0) {\small -1};
    \node[above] at (-6.5,-4) {\small -1};
    \node[below] at (-6.5,8) {\small -1};
    \node[above] at (-6.5,8) {\small -1};
    \node[below] at (-6.5,12) {\small -1};
    \node at (-8,9) {\small -1};
     \node at (-8,11) {\small -1};
    \node at (-9,9) {\small -1};
    \node at (-4,11) {\small -1};
    \node at (-5,11) {\small -1};
    \node at (-4,13) {\small -1};
    \node at (-4,9) {\small -1};
    \node at (-5,9) {\small -1};
    \node at (-4,7) {\small -1};
    \node at (-5,7) {\small -1};
    \node at (-8,-1) {\small -1};
    \node at (-8,-3) {\small -1};
    \node at (-9,-1) {\small -1};
    \node at (-4,-1) {\small -1};
    \node at (-5,-1) {\small -1};
    \node at (-4,1) {\small -1};
    \node at (-5,1) {\small -1};
    \node at (-4,-3) {\small -1};
    \node at (-5,-3) {\small -1};
    \node at (-4,-5) {\small -1};
\end{tikzpicture}
\caption{Pictorial representation of the compact divisors in the preimage of the origin after a specific resolution of the $X_{E_6}^x$ threefold. Curves in this figure correspond to $\mathbb P^1$s, and the numbers represent their self-intersection inside the respective compact divisor. The plaquettes correspond to compact divisors in the threefold geometry. We stress here that all of these are ruled surfaces, and hence we expect a gauge theory interpretation.}
    \label{fig:E6intuitive}
\end{figure}
In the next section, we flesh out the physical interpretation of the compact divisors configurations such as the one depicted in Figure \ref{fig:E6intuitive}.

\subsection{5d gauge theory phase}\label{sec:exampleE6quiver}

The information that we have fleshed out in the previous section allows us to build the low-energy quiver gauge theory counterpart of the 5d SCFT arising from M-theory on $X_{E_6}^x$. All the surfaces in Figure \ref{fig:E6intuitive} are ruled surfaces that have fibration by curves of self-intersection zero. Due to the Calabi-Yau condition, these have a normal bundle $\mathcal O(-2)\oplus \mathcal O(0)$. Hence we can interpret them as resolutions of singularities in the vertical direction. The green divisor has a single -2 curve, and hence shrinking it down one obtains an $A_1$ singularity. The blue surfaces are ruled by a curve that is not irreducible and splits into three distinct -2 curves, corresponding to the three irreducible compact divisors in the geometry. Shrinking that down one obtains an $A_3$ singularity. Similarly, the red divisors give an $A_5$ singularity and the purple ones an $A_2$. Notice that at the intersections these singularities enhance, hence using the Katz-Vafa method we can read off the corresponding bifundamentals from Adjoint Higgsing \cite{KatzVafa}. For instance at the collision of the blue and the red divisors in Figure \ref{fig:E6intuitive} we see an enhanced singularity consisting of 9 curves, and shrinking those down we obtain an $A_9$ singularity corresponding to the Adjoint Higgsing $SU(10) \to SU(4) \times SU(6) \times U(1)$ from which we read off the matter content $$\mathbf{99} = \mathbf{15} \oplus \mathbf{35} \oplus (\overline{\mathbf{4}},\mathbf{6}) \oplus (\mathbf{4},\overline{\mathbf{6}}) \oplus \mathbf{(1,1)}$$
that indeed gives rise to a full bifundamental. Shrinking all the curves in the vertical direction down to zero size, all the divisors go to zero volume, but we do not reach the conformal point. Instead we are left with a collection of rational curves, call them, $\mathbb P^1)_{b,i}$, where $b$ stands for the base of the ruling, that intersect along an $E_6$ diagram. These curves have a dual interpretation in geometry as the relevant deformation of the 5d SCFT to the 5d gauge theory phase. In particular, $\text{vol}(\mathbb P^1_{b,i}) \sim 1/g^2_i$ where $g^2_i$ are the gauge couplings for the various gauge groups in the quiver. Sending these gauge couplings to infinity, amounts to sending the volumes of the $\mathbb P^1_{b,i}$ to zero, and flowing to the conformal point. 

Schematically, it suffices to look at the non-toric gluing depicted in Figure \ref{fig:E6example}: compact divisors correspond to gauge nodes, and non-compact ones to flavor nodes. We depict the quiver for $X_{E_6}^x$ in Figure \ref{fig:E6yfigure}, where we used the same color-code of Figure \ref{fig:E6intuitive} to associate the compact divisors to the corresponding quiver nodes. \\
       \begin{figure}[H]
    \centering
    \scalebox{0.8}{
    \begin{tikzpicture}
        \filldraw[color=red!60, fill=red!5,thick] (0,0) circle (0.65);
        \node at (0,0) {\small$\mathfrak{su}(6)_0$};
        \draw[thick] (0.7,0)--(1.3,0);
        \draw[color=blue!60, fill=blue!5,thick] (2,0) circle (0.65);
        \node at (2,0) {\small$\mathfrak{su}(4)_0$};
        \draw[thick] (-0.7,0)--(-1.3,0);
        \draw[color=blue!60, fill=blue!5,thick] (-2,0) circle (0.65);
        \node at (-2,0) {\small$\mathfrak{su}(4)_0$};
        \draw[thick] (0,0.7)--(0,1.3);
        \draw[color=violet!60, fill=violet!5,thick] (0,2) circle (0.65);
        \node at (0,2) {\small$\mathfrak{su}(3)_0$};
        \draw[thick] (2.7,0)--(3.3,0);
        \draw[color=green!60, fill=green!5,thick] (4,0) circle (0.65);
        \node at (4,0) {\small$\mathfrak{su}(2)$};
        \draw[thick] (-2.7,0)--(-3.3,0);
         \draw[color=green!60, fill=green!5,thick] (-4,0) circle (0.65);
        \node at (-4,0) {\small$\mathfrak{su}(2)$};
        \draw[thick] (0,-0.7)--(0,-1.4);
        \node at (0,-2) {1};
               \draw[thick] (-0.5,-1.5)--(-0.5,-2.5)--(0.5,-2.5)--(0.5,-1.5)--cycle ;
        \end{tikzpicture}}
    \caption{Low-energy quiver gauge theory for M-theory on $X_{E_6}^x$.}
    \label{fig:E6yfigure}
    \end{figure}
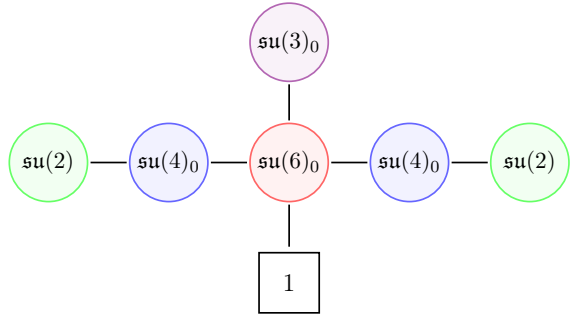
    
The Chern-Simons levels of each gauge node have been gleaned thanks to the fact that there exists a triangulation of the global resolved threefold $\tilde{\tilde{X}}_{E_6}^x$ (such as the one in Figure \ref{fig:E6example}) that is invariant under a $\mathbb{Z}/2$ reflection: we claim that such symmetry would \textit{not} be preserved if any of the Chern-Simons levels were different from zero. We present a rigorous proof of this fact in section \ref{sec:parity}.

\subsection{UV flavor symmetry}\label{sec:flavorUV}
We would now like to extract the flavor symmetry of the SCFT that arises as the UV completion of the quiver gauge theory presented above. To this end, we can employ the technique developed in \cite{Tachikawa,Yonekura}. For the full details of the construction we refer to the original works. In our context, the central result that is needed to extract the sought-after UV flavor symmetry can be stated as follows, provided that all the Chern-Simons levels and $\theta$ angle parameters of the nodes are equal to zero, as is in our cases: 
\begin{itemize}
    \item consider a quiver whose gauge nodes are joined according to the Dynkin diagram of $\mathfrak g$. Let $A_{ij}$ be the Cartan matrix of $\mathfrak g$, let $f_i$ be the number of flavors connected to $i$-th gauge node (with $i = 1,\ldots \text{rank}(\mathfrak g)$) and $\mathfrak{su}(r_i+1)$ be its gauge algebra.
    \item If: 
    \begin{equation}\label{cartan condition}
        \sum_j A_{ij} (r_j+1) - f_i  = 0 \quad \quad \forall i,
    \end{equation}
    then the total UV flavor symmetry $\mathcal{G}_{UV}^{\text{TOT}}$ contains \textit{at least} the following non-abelian factor:
    \begin{equation}
      \mathcal{G}_{UV}^{TOT} \supseteq  \mathcal{G}_{UV} = \mathfrak{g} \times \mathfrak{g}.
    \end{equation}
\end{itemize}
In our specific quiver depicted in Figure \ref{fig:E6yfigure}, a trivial computation shows that the UV flavor symmetry contains at least the factor:
\begin{equation}
     \mathcal{G}_{UV} = E_6 \times E_6.
\end{equation}
Thus, we have proven that:\\

\indent \textit{The 5d SCFT geometrically engineered by M-theory on the threefold $X_{E_6}^x$ exhibits at least a $E_6\times E_6$ flavor symmetry}.\\

\indent As we will see in section \ref{sec:quivers}, this theory is \textit{not} a direct descendant of the 6d $E_6\times E_6$ conformal matter theory. Thus, it genuinely earns the name of \textit{five-dimensional conformal matter}.\\
\indent In the next section we recap the general resolution procedure for the threefolds $X_{\mathfrak{g}}^{\bullet}$, with an eye more focused on the mathematical rigour.

\section{5d $(\mathfrak g,\mathfrak g)_{\bullet}$ conformal matter and  \texorpdfstring{$\boldsymbol{X}_{\mathfrak{g}}^{\bullet}$}{} singularities}\label{sec:generalrecipe}

In this section we discuss the general properties of the 5d $(\mathfrak g,\mathfrak g)_{\bullet}$ conformal matter theories. We begin in Section \ref{sec:basechange} by illustrating the mathematical details of the resolution of the threefolds $X_{\mathfrak{g}}^{\bullet}$ defined in \eqref{eq:threefoldsingdef}, giving a recipe that generalizes the procedure outlined in \cref{sec:exampleE6}. In Section \ref{sec:toriclocalmodels} we generalize the discussion of Section \ref{sec:exampleE6quiver} to describe the quivers arising along the 5d Coulomb branch of the 5d SCFT in terms of resolutions obtained by glueing toric local models. The resulting theories are described in Section \ref{sec:parity} where in particular we argue that they are parity invariant. The precise structure of the models is outlined in \ref{sec:quivers}, where we give all the 5d quiver gauge theory phases for all the 5d conformal matter theories of $(\mathfrak g,\mathfrak g)_{\bullet}$ type, we determine their flavor symmetries and we comment on their interplay with the 6d conformal matter theories.

\subsection{A partial resolution via base change}\label{sec:basechange}

For convenience, let us recall the definition of the $X_{\mathfrak{g}}^{\bullet}$ singularities:\footnote{ \ Recall that, for $\mathfrak{g}=A$, we only consider the $\bullet = x$ case.}
\begin{equation}
\label{eq:threefoldsingdefv2}
 X_{\mathfrak{g}}^{\bullet}\equiv \Set{ P_{\mathfrak{g}}(x,y,z)\rvert_{\bullet = uv} = 0} \subset \mathbb C^4,
\end{equation}
as well as their singular locus:
\begin{equation}
\label{eq:singularlocusv2}
\text{Sing}(X_{\mathfrak{g}}^{\bullet}) = \Set{ (x,y,z)\rvert_{\bullet = uv} = 0 } \subset \mathbb C^4.
\end{equation}

Since outside the stratum $u = v = 0$ of \eqref{eq:singularlocusv2}, we have a trivial family Du Val singularities, then any two crepant resolutions of $X_{\mathfrak{g}}^{\bullet}$ agree outside the origin $0 \in X_{\mathfrak{g}}^{\bullet}\subset \mathbb C^4$.
Since the singularities $X_{\mathfrak{g}}^{\bullet}$ are obtained by base-change of $Y_{\mathfrak{g}}$, a  way to obtain a partial crepant\footnote{ \ The fact that $\varepsilon$ is a crepant resolution is guaranteed by a result of \cite{reid1983minimal}.} resolution $\re{\varepsilon}:\tilde{X}_{\mathfrak g}^{\bullet} \to X_{\mathfrak{g}}^{\bullet}$ is to lift the base-change\footnote{ \ We remark that the subscript $\mathbb C^1$ under the $\times$ symbol in \eqref{eq:basechangeII} is a common notation for the algebraic notion of ``fibered product'' \cite{eisenbud2000geometry}.} that we used to obtain \eqref{eq:threefoldsingdefv2} from the \textit{singular} Du Val $Y_{\mathfrak g}$ to the \textit{resolved} Du Val $\tilde{Y}_{\mathfrak{g}}$:
\label{sec:resbasechange}
\begin{center}
\begin{equation}
\label{eq:basechangeII}
    \begin{tikzcd}
       \mathbb C^2 \underset{\mathbb C^1}{\times} \tilde Y_{\mathfrak{g}}\arrow{rr}\arrow[bend right=20,swap]{ddr}\arrow{dr}{\re{\varepsilon}} &&\tilde Y_{\mathfrak{g}} \arrow{d} \\   &\arrow{r}\mathbb C^2 \underset{\mathbb C^1}{\times} Y_{\mathfrak{g}}\arrow{d} \arrow{r} &  Y_{\mathfrak{g}}\arrow{d}{\pi_\bullet} & (x,y,z)\arrow[mapsto]{d} \\
        &\mathbb C^2 \arrow{r} & \mathbb C^1 & \bullet\\[-0.8cm]
        &(u,v)\arrow[mapsto]{r} & uv &
    \end{tikzcd}
    \end{equation}
\end{center}
where $\bullet\in\Set{x,y,z}$. As for \eqref{E6yexample}, the base-change \eqref{eq:basechangeII} boils down to substituting\footnote{ \ The map $\varepsilon$ is well-defined and unique by the universal property of the fibered-product.} the variable $\bullet$ with $uv$ in all the resolution maps of appendix \ref{sec:duvalres}. We saw that, substituting $\bullet = uv$ in the equation of  $Y_{\mathfrak{g}}$ corresponds to ``sticking'' a $Y_{\mathfrak{g}}$ singularity over each point $(u,v) \neq (0,0)$ of $uv = 0 \subset \mathbb C^2$. Similarly, \eqref{eq:basechangeII} corresponds to sticking a resolved $\tilde{Y}_{\mathfrak{g}}$ over each point $(u,v) \neq (0,0)$ of $uv = 0 \in \mathbb C^2$. This completely resolves the singularities of $X_{\mathfrak{g}}^{\bullet}$ outside of the origin $0 \in X_{\mathfrak{g}}^{\bullet} \subset \mathbb C^4$ (as presented in \eqref{eq:threefoldsingdef}). The threefold $\tilde{X}_{\mathfrak g}^{\bullet}$ has residual singularities supported on the preimage $\varepsilon^{-1}(0)$ of the origin $0 \in X_{\mathfrak{g}}^{\bullet}\subset \mathbb C^4$ via the partial resolution map $\varepsilon$.\\

\indent  One can explicitly check that $\varepsilon^{-1}(0)$ consists of a bunch of $\mathbb P^1$'s intersecting according to the Dynkin diagram\footnote{ \ This is true in the sense that different irreducible components of $\varepsilon^{-1}(0)$ intersect according to the Dynkin diagram of $\mathfrak{g}$ but we \textit{do not} extract from the Dynkin diagram the datum on the self-intersection of the $\mathbb P^1$'s entering in $\varepsilon^{-1}(0).$ The reason is that $\varepsilon^{-1}(0)$ is not included in an obvious way in any surface inside $\tilde{X}_{\mathfrak g}^{\bullet}$; furthermore many irreducible components of $\varepsilon^{-1}(0)$ are singular in $\tilde{X}_{\mathfrak g}^{\bullet}$, making their normal bundles ill-defined.} of $\mathfrak g$. Besides, the threefold $\tilde{X}_{\mathfrak g}^{\bullet}$ has a singularity of type $A$ over each $\mathbb P^1$ appearing in $\varepsilon^{-1}(0)$. We call $r_i \geq 0$ the rank of the $A$-type singularity\footnote{ \ As we will see there might be, according the base-change that we choose, some $\mathbb P^1 \subset \varepsilon^{-1}(0)$ over which $\tilde{X}_{\mathfrak g}^{\bullet}$ is, locally, a trivial fibration of the single-center Taub-NUT space $\left\{(x,y,z) \in \mathbb C^3| xy = z \right\}$ (isomorphic to $\mathbb C^2$). We denote the single-center Taub-NUT space as as $A_0$.}  appearing on the $i$-th irreducible component $\mathbb P^1_i \subset \varepsilon^{-1}(0)$.\\ 
\indent In general, these residual singularities can be resolved via a sequence of blow-ups with reduced centers:
\begin{equation}
\label{eq:piresolution}
\begin{tikzcd}[row sep = tiny]
 \pi: \tilde{\tilde{X}}^{\bullet}_{\mathfrak g} \arrow{r}&   \tilde{X}_{\mathfrak g}^{\bullet}
\end{tikzcd}.
\end{equation}
The fiber $\pi^{-1}(\mathbb P^1_i)$ can be described as follows:
\begin{itemize}
    \item on $p \in \mathbb P^1_i \setminus \cup_{j\neq i} \mathbb P^1_j$, the fiber $\varepsilon^{-1}(p)$ is the union of $r_i$ ``vertical'' $\mathbb P^1$'s coming from the resolution of the $Y_{A_{r_i}}$ singularity;
    \item at the intersection points $q_{ij} \equiv \mathbb P^1_i \cap \mathbb P^1_j$ the fiber $\varepsilon^{-1}(q_{ij})$ consists of $r_i+r_j+1$ ``vertical'' $\mathbb P^1$'s intersecting according to the Dynkin diagram of $A_{r_i+r_j+1}$\footnote{ \ This is precisely the same enhancement observed in \cite{KatzVafa}.}.
\end{itemize}
We can then conclude that $\pi^{-1}(\mathbb P^1_i)$ is isomorphic to $r_i$ (non-necessarily geometrically \cite{beauville_1996}) ruled compact surfaces, i.e.\ Hirzebruch surfaces $\mathbb  F_{n_j}$, with $j = 1, \ldots, r_i$, blown up at some points (producing multiple-fibers). The ruling of each of these surfaces is the restriction of $\pi$ and the reducible fibers (containing more than one $\mathbb P^1$) are contracted over the intersection points $q_{ij}$. The irreducible components of $\pi^{-1}(\mathbb P^1_{i})$ intersect according to the $A_{r_i}$ Dynkin diagram, where every node represents a compact surface ruled over $\mathbb P^1_i$ and every edge represents a $\mathbb P^1$ being the intersection of two of these compact surfaces. We stress that this configuration is peculiar to our specific crepant resolution $\pi \circ \varepsilon$.\\ \indent To compute subtle geometrical aspects of these ruled surfaces we can use that, outside $q_{ij}$, nothing fancy happens and, for each irreducible component of $\pi^{-1}(\mathbb P^1_i)$, we do not have multiple fibers. To understand neighborhoods of $\pi^{-1}(q_{ij})$ we can use that they can be \textit{locally} described by toric geometry (despite the full geometry $\tilde{X}_{\mathfrak g}^{\bullet}$ being non-toric). We embark on the toric analysis of $\pi^{-1}(q_{ij})$ in the next section.

\subsection{The resolution of two transversal families of \texorpdfstring{$A$}{} singularities}
\label{sec:toriclocalmodels}
The neighbourhoods of the points $q_{ij} \in \tilde{X}_{\mathfrak{g}}^{\bullet}$ can be locally described with the following toric models, already considered in \cite{KatzVafa}:
\begin{equation}
\label{eq:Xhk}
X_{hk}=\Set{(a,b,u,v)\in\mathbb C^4|uv-a^hb^k=0},
\end{equation}
for $h,k \ge 0$, with $q_{ij}$ corresponding to the origin of \eqref{eq:Xhk}. Thus, using \eqref{eq:Xhk} to resolve all the neighborhoods of $q_{ij} 
 \in \tilde{X}_{\mathfrak{g}}^{\bullet}$ and gluing together these local constructions, we will be able to completely smooth out the initial threefold $X_{\mathfrak{g}}^{\bullet}$ and glean the structure of the compact divisors blown-up on top of the origin.
\begin{example}
    For $h=k=0$ $X_{hk}$ is trivially smooth. Besides, for $h=k=1$ we find the conifold singularity.
\end{example}
Notice that $X_{hk}$ is a toric variety with singularities of type $A_{k-1}$ on the $x$ axis and of type $A_{h-1}$ on the $y$ axis.

The fan $\Sigma_{hk}$ of $X_{hk}$ has a  unique maximal cone $\sigma_{\max}$ generated by the following primitive vectors:
\begin{equation} 
\label{eq:fanXhk}
 w_1=\begin{pmatrix}
     0 \\ 0\\ 1 
\end{pmatrix},
\   w_2=\begin{pmatrix}
         0 \\ h\\ 1 
\end{pmatrix},
\  w_3=\begin{pmatrix}
  1 \\ k\\ 1
\end{pmatrix},
\  w_4=\begin{pmatrix}
    1 \\ 0\\ 1 
\end{pmatrix}.
\end{equation}
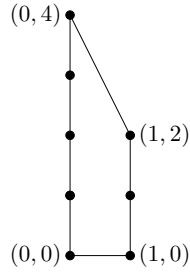
\begin{figure}[H]
    \centering
    \scalebox{0.8}{
    \begin{tikzpicture}
        \draw  (0,2)--(0,-2)--(1,-2)--(1,0)--cycle;
        \filldraw (0,2) circle (2pt);
        \filldraw (0,1) circle (2pt);
        \filldraw (0,0) circle (2pt);
        \filldraw (0,-1) circle (2pt);
        \filldraw (0,-2) circle (2pt);
        \filldraw (1,-1) circle (2pt);
        \filldraw (0,-1) circle (2pt);
        \filldraw (1,0) circle (2pt);
        \filldraw (1,-2) circle (2pt);
        \node[left] at (0,2) {\small $(0,4)$};
        \node[left] at (0,-2) {\small $(0,0)$};
        \node[right] at (1,-2) {\small $(1,0)$};
        \node[right] at (1,0) {\small $(1,2)$};
    \end{tikzpicture}}
    \caption{Example of the planar diagram $P_{hk}$  for $h=4$ and $k=2$.}
    \label{fig:polytope}
\end{figure}
$X_{hk}$ is Calabi-Yau (since all the $w_i$ lie on a plane) and its toric diagram $P_{hk}\subset \R^2$ is a lattice polygon  which encodes all the information about the geometry of $X_{hk}$ and its crepant resolutions (see \Cref{fig:polytope}). We now want to outline general features of these toric diagrams.  
We can readily check that the lattice polygon $P_{hk}$ has
\begin{itemize}
\item no internal points,
\item $k+1$ lattice points on the edge of $P_{hk}$ corresponding to the pair $(w_3,w_4)$,
\item  $h+1$ lattice points on the edge of $P_{hk}$ corresponding to the pair $(w_1,w_2)$.
\end{itemize}

Consequently, there are $h+k-2$ non-compact divisors after the resolution, and they are, on the $(u,v) \neq 0$ component of the singular locus, the exceptional loci of the resolution of the trivial families of $A_{h-1}$ and $A_{k-1}$ singularities.
 
Every crepant resolution of $X_{hk}$ corresponds to a triangulation of $P_{hk}$ which has, as set of vertices, exactly the marked points on $P_{hk}$. This triangulation produces $h+k-1$ $\mathbb{P}^1$'s on the origin $a=b=u=v=0$, arranged as a $A_{h+k-2}$ Dynkin diagram, as noticed in \cite{KatzVafa}.\\
\begin{example} For $h=2$ and $k =1$,  we have only three possible crepant resolutions which correspond to the triangulations of $P_{21}$ in \Cref{fig:Q01}.
\begin{figure}[H]
        \centering
        \begin{tikzpicture}
            \draw (0,0)--(1,0)--(1,0.5)--(0,1)--cycle;
        \filldraw (0,0) circle (1pt);
        \filldraw (1,0) circle (1pt);
        \filldraw (0,1) circle (1pt);
        \filldraw (1,0.5) circle (1pt);
        \filldraw (0,0.5) circle (1pt);
        \draw (0,0.5) --(1,0);
        \draw (0,1) --(1,0);
        \draw[dashed,<->] (1.5,0.5)--(2.5,0.5);
        \node[above] at  (2,0.5) {\small flop};
        \draw[dashed,<->] (4.5,0.5)--(5.5,0.5);
        \node[above] at  (5,0.5) {\small flop};
            \draw (2+1,0)--(3+1,0)--(3+1,0.5)--(2+1,1)--cycle;
        \filldraw (2+1,0) circle (1pt);
        \filldraw (3+1,0) circle (1pt);
        \filldraw (2+1,1) circle (1pt);
        \filldraw (3+1,0.5) circle (1pt);
        \filldraw (2+1,0.5) circle (1pt);
        \draw (2+1,0.5) --(3+1,0);
        \draw (2+1,0.5) --(3+1,0.5);
            \draw (6,0)--(7,0)--(7,0.5)--(6,1)--cycle;
        \filldraw (6,0) circle (1pt);
        \filldraw (7,0) circle (1pt);
        \filldraw (6,1) circle (1pt);
        \filldraw (7,0.5) circle (1pt);
        \filldraw (6,0.5) circle (1pt);
        \draw (6,0.5) --(7,0.5);
        \draw (6,0) --(7,0.5);
        \end{tikzpicture}
        \caption{Possible triangulations of $P_{21}$.}
        \label{fig:Q01}
    \end{figure}
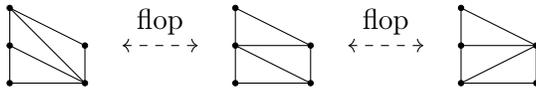
\end{example}

Notice that there always exists a triangulation of $P_{hk}$ that is invariant w.r.t.\ a $\mathbb Z/2$-transformation that reflects the edges of each vertical line $P_{hk}$ w.r.t.\ the center of the vertical line:

\begin{figure}[H]
\centering
    \scalebox{0.8}{
  \begin{tikzpicture}
        \draw  (0,2)--(0,-2)--(1,-2)--(1,0)--cycle;
        \draw  (0,0)--(1,-1);
        \draw  (0,1)--(1,-1);
        \draw  (0,-1)--(1,-1);
        \draw  (0,1)--(1,0);
        \draw  (0,-1)--(1,-2);
        \filldraw (0,2) circle (2pt);
        \filldraw (0,1) circle (2pt);
        \filldraw (0,0) circle (2pt);
        \filldraw (0,-1) circle (2pt);
        \filldraw (0,-2) circle (2pt);
        \filldraw (1,-1) circle (2pt);
        \filldraw (0,-1) circle (2pt);
        \filldraw (1,0) circle (2pt);
        \filldraw (1,-2) circle (2pt);
        \node[left] at (0,-2) {\small $A$};
        \node[right] at (1,-2) {\small $B$};
        \node[right] at (1,-1) {\small $C$};
        \node[right] at (1,0) {\small $D$};
        \node[left] at (0,2) {\small $E$};
        \node[left] at (0,1) {\small $F$};
        \node[left] at (0,0) {\small $G$};
        \node[left] at (0,-1) {\small $H$};
        \draw[thick,dashed,->] (3,0) to (6,0);
        \node[above] at (4.5,0.2) {$\mathbb{Z}/2$ reflection};
        \draw  (8,2)--(8,-2)--(9,-2)--(9,0)--cycle;
        \draw  (8,0)--(9,-1);
        \draw  (8,1)--(9,-1);
        \draw  (8,-1)--(9,-1);
        \draw  (8,1)--(9,0);
        \draw  (8,-1)--(9,-2);
        \filldraw (8,2) circle (2pt);
        \filldraw (8,1) circle (2pt);
        \filldraw (8,0) circle (2pt);
        \filldraw (8,-1) circle (2pt);
        \filldraw (8,-2) circle (2pt);
        \filldraw (9,-1) circle (2pt);
        \filldraw (8,-1) circle (2pt);
        \filldraw (9,0) circle (2pt);
        \filldraw (9,-2) circle (2pt);
        \node[left] at (8,-2) {\small $E$};
        \node[right] at (9,-2) {\small $D$};
        \node[right] at (9,-1) {\small $C$};
        \node[right] at (9,0) {\small $B$};
        \node[left] at (8,2) {\small $A$};
        \node[left] at (8,1) {\small $H$};
        \node[left] at (8,0) {\small $G$};
        \node[left] at (8,-1) {\small $F$};
    \end{tikzpicture}}
     \caption{$\mathbb Z/2$-transformation that leaves invariant the toric diagram. The transformation exchanges $A \leftrightarrow E$, $F \leftrightarrow H$, $B \leftrightarrow D$. Non-trivially, the number of edges ending on a pair of vertices exchanged by the $\mathbb Z/2$ action is the same.}
    \label{fig:Z2sym}
\end{figure}
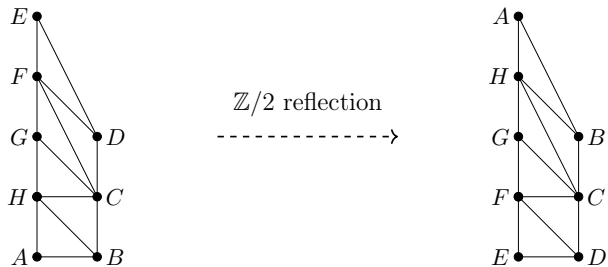

As we will see in the Section \ref{sec:parity}, we can employ this fact to extract the Chern-Simons levels of the nodes of the quiver that arises as the IR dual of the SCFT coming from M-theory on $X^{\bullet}_{\mathfrak{g}}$.\\ 
\indent We can now proceed to glue the local toric models to reproduce the full $\tilde{X}_{\mathfrak g}^{\bullet}$. The procedure is exactly analogous to the one explained in section \ref{sec:exampleE6}: each edge of the Dynkin diagram of the $\mathfrak g$ singularity corresponds to a certain $q_{ij}$, and hence to a certain local toric model. Then, we torically\footnote{ \ By this, we mean that the transition functions are defined by binomial relations between the coordinates.} glue together the toric diagrams associated to the $A_{\text{rank}(\mathfrak g)-1}$ subalgebra of $\mathfrak g$ (as in the leftmost part of Figures \ref{fig:E6nodots} and \ref{fig:E6example}). Finally we glue, to the trivalent node of $\mathfrak g$, the toric diagram of the local toric model associated to the intersection between the trivalent and the upmost node of the Dynkin diagram of $\mathfrak g$, for $\mathfrak{g}=D,E$ (corresponding to the rightmost part of Figures \ref{fig:E6nodots} and \ref{fig:E6example}). For the $\mathfrak{g}=A$ case the gluing is the obvious one.\\
\indent In the next section we recap the procedure to extract the physical data of the 5d SCFT from the geometric perspective outlined above.

\subsection{Quiver gauge-theory and parity invariance}\label{sec:parity}
M-theory reduced on a threefold $X_{\mathfrak{g}}^{\bullet}$ gives rise to a 5d $\mathcal{N}=1$ SCFT. Upon mass deformation such SCFT flows to weakly-coupled quiver gauge theory, with only $\mathfrak{su}$ nodes.\\
\indent In section \ref{sec:basechange} we introduced the partial resolution $\varepsilon: \tilde{X}_{\mathfrak g}^{\bullet} \to X_{\mathfrak g}^{\bullet}$ defined via the base-change \eqref{eq:basechangeII}. The five-dimensional degrees of freedom are trapped in the submanifold  $\varepsilon^{-1}(0) \subset \tilde{X}_{\mathfrak g}^{\bullet}$ contracted on $0 \in X_{\mathfrak{g}}^{\bullet}$ and can be described using a  five-dimensional quiver capturing the physics of M-theory on this resolved phase of the $X_{\mathfrak{g}}^{\bullet}$ singularity. In order to write down such quiver explicitly, we gather the following data:
\begin{itemize}
\item the quiver has $i = 1,..., \text{rank}(\mathfrak g)$  $\mathfrak{su}(r_i+1)$ gauge nodes, connected according to the $\mathfrak g$ Dynkin diagram. The numbers $r_i$ can be extracted directly by looking at the base-change of the resolution maps presented in \cref{sec:duvalres}, as concretely explained in the recipe of section \ref{sec:exampleE6}.
\item flavor groups (if present) are associated to non-compact families of $A_k$ singularities (or single-center Taub-NUT spaces) fibered over non-compact lines that intersect the outmost nodes of $\varepsilon^{-1}(0)$. 
\item The (minimal) UV flavor symmetry of the quiver gauge theory can be extracted with the techniques of \cite{Tachikawa,Yonekura}, as outlined in section \ref{sec:flavorUV}.
\item To completely specify the quiver we also need to compute the Chern-Simons levels of each node. To do so it is \textit{not} necessary to analyze the detailed geometry of the ruled surfaces $\pi^{-1}(\mathbb P^1_i)$ (with $\pi$ the blowup map presented in equation \ref{eq:piresolution}): as we have argued in section \ref{sec:flavorUV}, to have enhanced $\mathfrak{g}\times \mathfrak{g}$ at the SCFT point it is enough \cite{Yonekura} to check that the Chern-Simons levels vanish for each gauge node, or, equivalently, that the five-dimensional SCFT is parity invariant. If the SCFT comes from a $(p,q)$-web construction, the parity of the 5d SCFT is interpreted as the parity of the type IIB spacetime. In particular, the parity is equivalent to the invariance of the $(p,q)$-web w.r.t the reflection around the origin of the $(p,q)$-plane. $X_{\mathfrak{g}}^{\bullet}$ is not globally dual to a $(p,q)$-web, but the neighborhoods of the points $q_{ij}$ are described by the local models \eqref{eq:Xhk} that display the required $\mathbb Z/2$ reflection symmetry. One can check that these local $\mathbb Z/2$ reflections glue together into a $\mathbb Z/2$ automorphism that exchanges $u$ and $v$
\begin{equation}
\label{eq:parityautI}
\begin{tikzcd}[row sep = tiny]
  \tilde{X}_{\mathfrak{g}}^{\bullet} \arrow{r}{\Psi}&  \tilde{X}_{\mathfrak{g}}^{\bullet}  \\
(u,v,\ldots)\arrow[mapsto]{r} &(v,u,\ldots)
\end{tikzcd}
\end{equation}
At this point, $\Psi$ is an order-two automorphism of the partially resolved threefold $\tilde{X}_{\mathfrak{g}}^{\bullet}$. In order to promote it to the (complete) resolution $\tilde{\tilde{X}}_{\mathfrak{g}}^{\bullet}$ we need to perform the blowup sequence $\pi$, defined in \eqref{eq:piresolution}, in such a way as to treat the coordinates $u$ and $v$ ``democratically''.\footnote{ \ By this we mean that the centers of the various blowups  defining $\pi$ have to be subvarieties of $\tilde{X}_{\mathfrak{g}}$ preserved by \ $\Psi$.} At the level of the toric diagram, this amounts to choosing a triangulation of the local toric model \eqref{eq:Xhk} that is invariant w.r.t. to the reflection along the vertical axis of the toric diagram. 
We can then promote $\Psi$ to an automorphism of $\tilde{\tilde{X}}_{\mathfrak{g}}^{\bullet}$ ensuring that the five-dimensional SCFT is parity invariant and, hence, that the Chern-Simons levels of each $\mathfrak{su}$ node of the quiver are zero. It can be easily proven that a parity-invariant triangulation exists for \textit{all} our models $\tilde{\tilde{X}}_{\mathfrak{g}}^{\bullet}$, leading us to the conclusion that:

\textit{Given a threefold $X_{\mathfrak{g}}^{\bullet}$, there always exists a complete resolution  $\tilde{\tilde{X}}_{\mathfrak{g}}^{\bullet}$ that respects the $\mathbb{Z}/2$ reflection as defined above. Therefore, all the gauge nodes in the related 5d quiver gauge theory can be chosen to have vanishing Chern-Simons levels.}

\end{itemize}

In the next section we employ the above recipe to exhibit the quivers and the UV flavor symmetries for the threefolds $X_{\mathfrak{g}}^{\bullet}$, showing that they engineer 5d conformal matter.




\subsection{Low-energy quiver theories and their properties}\label{sec:quivers}
In this section, we explicitly show the low-energy gauge theory quivers corresponding to the geometries $X_{A_k}^{\bullet}$ (for every $k>1$ and $\bullet = x$, which is completely equivalent to $\bullet = y$), $X_{D_k}^{\bullet}$ (for every $k$), $X_{E_6}^{\bullet}$, $X_{E_7}^{\bullet}$, $X_{E_8}^{\bullet}$, with $\bullet=x,y,z$. We then use these quivers to extract the UV flavor symmetry of the considered conformal matter theories and to exclude the presence of electric one-form symmetries. Finally, we comment on the 6d uplift of the aforementioned theories. In Appendix \ref{sec:prepotential} we go through consistency checks that ensure that the quivers yield a sensible UV SCFT completion, computing their prepotential for an example, according to \cite{Intriligator_1997}.

\subsubsection{List of low-energy quiver gauge theories}\label{sec:quiverslist}
\indent \textit{All the nodes of the quivers in this section are of $\mathfrak{su}(n_i)$ type, with $n_i$ the number written inside each node. As proven in section \ref{sec:parity}, all nodes have vanishing Chern-Simons levels (or vanishing $\theta$ angle, for $\mathfrak{su}(2)$ nodes)}.\\
\indent We note that, due to the presence of fundamental hypermultiplets, there is no one-form electric symmetry for the considered theories. The fact that the electric one-form symmetry is explicitly broken by the presence of fundamental matter also excludes magnetic two-form symmetries. 

\noindent\begin{minipage}{0.1\textwidth}
    $\boldsymbol{\bullet X_{A_{2j+1}}^x}$
    \end{minipage}
\begin{minipage}{0.86\textwidth}
       \begin{figure}[H]
    \centering
    \scalebox{0.75}{
}
    \caption*{}
    \label{E8zfig}
\end{figure}
\end{minipage}\\

\subsubsection{UV flavor symmetry}\label{sec:flavor symmetry}

We would now like to extract the flavor symmetry of the SCFTs that arise as the UV completion of the quiver gauge theories presented above. To this end, we can employ the technique developed in \cite{Tachikawa,Yonekura} and recapped in section \ref{sec:flavorUV}.\\
\indent A straightforward computation shows that condition \eqref{cartan condition} is satisfied in \textit{all} the quivers listed in this section. In physical terms, this means that the UV flavor symmetry contains  as a subgroup at least the symmetry given by the direct product of two copies of the Dynkin diagram corresponding to the gauge nodes.\\
\indent Let us summarize this result in \cref{UV flavor table}, also writing down the total rank of the expected UV flavour symmetry, thus accounting also for abelian factors. The total rank can be readily computed by keeping track of one topological instanton symmetry $\mathfrak{u}(1)_T$ for each gauge node, and one factor for each hypermultiplet in the quiver, and a $\mathfrak{su}(n_i)$ contribution, with $i=1,\ldots,\nu$, for each of the flavor nodes, labelled by $n_i$ in the low-energy quiver descriptions of the previous section. Notice that theories with the same flavor rank are not equivalent, as they have different gauge nodes, as can be seen directly by looking at the quivers written in the previous section.\\
\indent Furthermore, it is immediate to observe that threefolds $X_{\mathfrak{g}}^{\bullet}$ with \textit{different} label $\bullet =x,y,z$, corresponding to \textit{different} low-energy 5d quiver gauge theories, contain in their UV flavor symmetry the \textit{same} factor $\mathfrak{g}\times\mathfrak{g}$. This implies that 5d conformal matter comes in different ``species'', labelled precisely by $\bullet$.\footnote{ \ Notice that the only interacting conformal matter theory for the $A$ cases comes from the choice $\bullet = x$ (which is equivalent to $\bullet = y$). As we have previously observed, one can formally define a ``5d conformal matter theory'' for $\bullet = z$, provided that one bears in mind that it is just a theory of bifundamental hypermultiplets.} This yields a striking difference with respect to 6d conformal matter, and produces a rich duality structure that we will examine in section \ref{sec:exceptional quivers}.\\
To summarize, it holds:\\

\indent \textit{The theories $X_{\mathfrak{g}}^{\bullet}$ enjoy a UV flavor symmetry which is at least:}\footnote{ \ Notice that the \textit{rank} of the total UV flavor symmetry cannot be higher than that of \eqref{total flavor}. This implies that non-abelian rank-preserving enhancements cannot be ruled out.}
\begin{equation}\label{total flavor}
    \mathcal{G}_{UV} = \mathfrak{g}\times \mathfrak{g} \times \mathfrak{su}(n_1)\times\cdots \times \mathfrak{su}(n_{\nu}) \times \mathfrak{u}(1)^{\nu-1},
\end{equation}
with $\nu$ the number of flavor nodes in the quiver description corresponding to $X_{\mathfrak{g}}^{\bullet}$. The appearance of the extra $\mathfrak{u}(1)$ factors was already pointed out by means of a careful tracking of flavor symmetry enhancement, from the field theoretic viewpoint, in \cite{Tachikawa:2015mha}.

\renewcommand{\arraystretch}{1.3}
\begin{table}[H]\centering
\begin{equation}
\begin{array}{|c|c|c|}
\hline 
 \textbf{Singularity} & \boldsymbol{\mathcal{G}_{UV}}  & \textbf{Rank of }\boldsymbol{\mathcal{G}_{UV}^{\text{TOT}}} \\
\hline
\hline
X_{A_{2j+1}}^x & A_{2j+1}\times A_{2j+1} \times \mathfrak{su}(2)& 4j+3\\
X_{A_{2j}}^x & A_{2j}\times A_{2j} & 4j\\
X_{D_{2j+2}}^x & D_{2j+2}\times D_{2j+2} \times \mathfrak{u}(1)^2& 4j+6\\
X_{D_{2j+3}}^x &  D_{2j+3} \times D_{2j+3} \times \mathfrak{u}(1)& 4j+7 \\
X_{D_{2j+2}}^y &  D_{2j+2}\times D_{2j+2} & 4j+4 \\
X_{D_{2j+3}}^y &  D_{2j+3}\times D_{2j+3}\times \mathfrak{u}(1) & 4j+7\\
X_{D_j}^z &  D_j\times D_j \times \mathfrak{su}(2) & 2j+1\\
X_{E_6}^x & E_6\times E_6 & 12\\
X_{E_6}^y & E_6\times E_6 \times \mathfrak{u}(1)& 13\\
X_{E_6}^z & E_6\times E_6 & 12\\
X_{E_7}^x & E_7\times E_7 \times \mathfrak{u}(1)& 15\\
X_{E_7}^y & E_7\times E_7\times \mathfrak{su}(2) & 15\\
X_{E_7}^z & E_7\times E_7 & 14\\
X_{E_8}^x & E_8\times E_8 & 16\\
X_{E_8}^y & E_8\times E_8 & 16\\
X_{E_8}^z & E_8\times E_8 & 16\\
\hline
\end{array}\nonumber
\end{equation}
\caption{UV flavor enhancement for $X_{\mathfrak{g}}^{\bullet}$. Quivers with the same flavor rank are not equivalent, as they have manifestly different rank of the gauge nodes, as can be seen by their presentation in the previous section.}\label{UV flavor table}
\end{table}

\subsubsection{Relation to 6d conformal matter}

Let us add a quick remark, that can be readily noticed by looking at the quivers depicted above.\\
\indent It can be seen that, if we take the quiver gauge theories corresponding to $X_{D_4}^y$, $X_{D_5}^y$, $X_{E_6}^z$, $X_{E_6}^y$, $X_{E_7}^z$, $X_{E_8}^z$ and gauge the flavor nodes, we get precisely the quivers written down by Tachikawa in \cite{Tachikawa}, corresponding to affine $E_6, E_6, E_6, E_7, E_7, E_8$ Dynkin diagrams, respectively. This corresponds to a symmetry enhancement in the UV:
\begin{equation}
    \begin{array}{cccc}
    \boldsymbol{X_{D_4}^y}: & \mathcal{G}_{UV} = D_4\times D_4 & \xrightarrow{\text{gauging}} & \widehat{E_6\times E_6}\\
    \boldsymbol{X_{D_5}^y}: & \mathcal{G}_{UV} = D_5\times D_5 & \xrightarrow{\text{gauging}} & \widehat{E_6\times E_6}\\
    \boldsymbol{X_{E_6}^z}: &\mathcal{G}_{UV} = E_6\times E_6 & \xrightarrow{\text{gauging}} & \widehat{E_6\times E_6}\\
    \boldsymbol{X_{E_6}^y}: & \mathcal{G}_{UV} = E_6\times E_6&\xrightarrow{\text{gauging}} &  \widehat{E_7\times E_7}\\
    \boldsymbol{X_{E_7}^z}: & \mathcal{G}_{UV} = E_7\times E_7&\xrightarrow{\text{gauging}} &  \widehat{E_7\times E_7}\\
    \boldsymbol{X_{E_8}^z}:  &\mathcal{G}_{UV} = E_8\times E_8 & \xrightarrow{\text{gauging}} & \widehat{E_8\times E_8}\\
    \end{array},
\end{equation}
where $\widehat{\mathfrak{g} \times \mathfrak{g}}$ denotes the flavor algebra $\hat{\mathfrak{g}} \times \hat{\mathfrak{g}}$ with the flavor symmetries associated to the affine nodes identified.\footnote{ \ In particular, we note that the rank of the flavor enhanced $\widehat{\mathfrak{g} \times \mathfrak{g}}$ flavor symmetry is $2 \text{rank}(\mathfrak{g}) + 1$.}
The appearance of affine flavor symmetries is decisively hinting at the fact that these theories have an intrinsically six-dimensional origin. In fact, the choices $y=uv$ for $D_4$, $D_5$ and $E_6$, and $z = uv$ for $E_6,E_7,E_8$, correspond to the ``F-theoretic base change'', as the resulting threefold is explicitly in the form of an elliptic fibration.\footnote{ \ We note that, as in the $X_{E_6}^y$ case, the threefold might not be in Weierstrass form, but still being an elliptic fibration over $y \in \mathbb C$.} Indeed we can consider the six-dimensional theories obtained from F-theory compactified on the threefolds $X_{D_4}^y$, $X_{D_5}^y$, $X_{E_6}^y$, $X_{E_6}^z$, $X_{E_7}^z$, $X_{E_8}^z$, that now are hypersurfaces in a weighted projective space. We can then reduce these theories on a circle transverse to the Calabi-Yau, obtaining 5d Kaluza-Klein theories. Decoupling the Kaluza-Klein modes we obtain the quivers in section \ref{sec:quivers} with gauged flavor nodes. In order to reconcile with our M-theory setup, we must hence remove the points at infinity of the projective space: this has the neat effect of decompactifying the divisors that ultimately correspond to the flavor nodes in the M-theory picture.\footnote{ \ In the $X^{z}_{\mathfrak{g}}$ cases with $\mathfrak{g}=E_6,E_7,E_8$, this operation simply amounts to ungauging the \textit{affine} node of $\mathfrak{g}$.}\\
\indent The theories that we have just described are nothing but the \textit{6d conformal matter} theories introduced in \cite{conformalmatter}, and whose 5d descendants were written down explicitly in \cite{Apruzzi:2019opn}.\footnote{ \ Our quivers and the ones of \cite{Apruzzi:2019opn} can be readily seen to agree, provided that one exchanges (after gauging the affine node) our $\mathfrak{su}(1)$ gauge nodes with a node with 2 flavors, as pointed out in \cite{Bergman:2014kza,Hayashi:2014hfa}.} We remark here that, instead, the other examples we presented in \cref{sec:quivers} possess a genuine $\mathfrak{g}\times \mathfrak{g}$ UV flavor symmetry in 5d, with $\mathfrak{g} = D_4,D_5,E_6,E_7,E_8$, and can be easily constructed via a M-theoretic approach, while lacking an immediately transparent F-theoretic interpretation. This is because, in that case, the base-change we performed in \eqref{eq:threefoldsingdef} is not compatible with the structure of elliptic fibration of $Y_{\mathfrak g}$. We can also give a somewhat more heuristic explanation as to why these other base changes do not admit an F-theory uplift: if we gauge the flavor nodes of their quivers, we simply do \textit{not} obtain the Dynkin diagram of a Lie algebra, be it finite or affine, but instead we produce some more complicated quiver.\\
\indent Nevertheless, it has been conjectured that all 5d $\mathcal{N}=1$ SCFTs descend from 6d $(1,0)$ theories reduced on a circle with holonomies \cite{Jefferson_2018}. Thus, also our examples should share such parentage. It would be interesting to apply the techniques of \cite{Bhardwaj:2019xeg} in the reverse direction, flopping curves inside the compact divisors (i.e. integrating \textit{in} BPS particles) and compactifying non-compact divisors (namely integrating \textit{in} BPS strings) in order to climb the descendant hierarchy to reach a candidate 5d KK marginal theory, which would naturally uplift to 6d.\footnote{\ We thank Sakura Sch\"afer-Nameki for suggesting this idea to us.}

\section{Exceptional linear quivers and 5d dualities}\label{sec:exceptional quivers}
After having shown that M-theory on  $X_{\mathfrak g}^{\bullet}$ engineers five-dimensional bifundamental $\mathfrak{g} \times \mathfrak{g}$ matter, the next step is to use this construction to geometrically engineer quivers with nodes of type $\mathfrak g$. Of course, we are particularly interested in the cases $\mathfrak{g}=D,E$, in which the bifundamental matter gets properly promoted to conformal matter (as well as the $A$ case with the choice $x = uv$). In this section, we will give the general rules to gauge together the conformal matter theories that we introduced in the previous sections. The most general quiver that we can obtain from conformal matter is a linear quiver, with decorated edges each being of type $(\mathfrak g,\mathfrak g)_{\bullet_i}$, with $i = 1,...,n_{\text{edges}}$. We start by considering linear quivers with conformal matter edges all of the same type, and then we gradually generalize to linear quivers with different kinds of conformal matter edges. 
\subsection{Linear generalized quiver with edges of the same type}
\label{sec:linearquiversI}
In this section, we are going to glue two conformal matter theories of the same type $(\mathfrak g,\mathfrak g)_{\bullet}$ via a gauging procedure. We will use this to build generalized linear quivers with all edges being conformal matter theories of type $\bullet$.\\
\indent In order to do so, we recall that each $\mathfrak g$ factor of the UV flavor symmetry corresponds, in $X_{\mathfrak{g}}^{\bullet}$, to one of the irreducible components of the lines of $Y_{\mathfrak g}$ singularities on $u v = 0 \subset \mathbb C^2$. Let $\mathfrak g_{u}$ (resp. $\mathfrak g_{v}$) be the flavor symmetry factor associated to the $v = 0$ line (resp. $u = 0$ line) of $Y_{\mathfrak{g}}$ singularities. To gauge $\mathfrak{g}_{u}$ we have to compactify the $u$ coordinate into a $\mathbb P^1$, leaving the $v$ direction non-compact. More precisely, we glue two affine spaces  $\mathbb C^2_{u,v}$ and $\mathbb C^2_{u',v'}$ to get the total space of a line bundle $\mathcal{O}_{\mathbb P^1}(\re{-n})$\footnote{ \ We write ``$n$'' in red to emphasize that it is the same that appears in equation \eqref{eq:gaugingflavor}.}, where $(u,u')$ span the base-space $\mathbb P^1$ and $(v,v')$ span the fiber. \\

\indent Let us concentrate, for ease of exposition, on the $\mathfrak{g}=E_6$ and $\bullet = x$ case (everything proceeds similarly for $\bullet = y$ or $\bullet = z$). At the level of threefolds we have two singularities $X_{E_6}^{x}\subset \mathbb C^4_{y,z,u,v}$  and $\left(X_{E_6}^{x}\right)'\subset \mathbb C^4_{y,z,u',v'}$, explicitly given by:
\begin{equation}\label{gaugingE6}
    \begin{cases}
        (uv)^2+y^3+z^4=0\\
        (u'v')^2+y^3+z^4=0\\
    \end{cases}.
\end{equation}
We then patch them together with the following transition function:
\begin{equation}
\label{eq:gaugingflavor}
\begin{tikzcd}[row sep = tiny]
 \chi: U \subset X_{E_6}^{x}\arrow{r}&U' \subset \left(X_{E_6}^{x}\right)'\\
(y,z,u,v)\arrow[mapsto]{r} &(y,z,u',v') = (y,z,\frac{1}{u}, u^{\re{n}}v).
\end{tikzcd}
\end{equation}
where $U$  and $U'$ are, respectively, the open subsets of $X_{E_6}^{x}$   and $\left(X_{E_6}^{x}\right)'$ with $u \neq 0$ and $u' \neq 0$. 
We note that, to consistently patch $X_{E_6}^{x}$ and $\left(X_{E_6}^{x}\right)'$, we want $u'v' = u^{n-1} v$ to coincide with $uv$, hence we have to pick $n=2$. Consequently, gauging the singular $E_6$ lines respectively parameterized by $u$ and $u'$ of two bifundamental $E_6\times E_6$ 5d matter amounts to partly compactifying $(u,v)$ to a resolved Du Val singularity of type $A_1$ (by gluing them with another $\mathbb C^2$ spanned by $(u',v')$).  The construction outlined in \eqref{eq:gaugingflavor} applies also to the cases $\bullet = y$ or $\bullet = z$, opportunely permuting the roles of $(x,y,z)$. Of course, the reasoning can be generalized to any $\mathfrak{g} = ADE$ in lieu of $E_6$. The system \eqref{gaugingE6} thus corresponds to the quiver:
 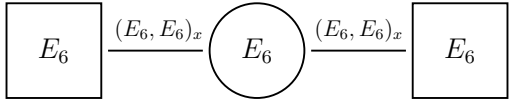
\begin{figure}[H]
 \centering
\scalemath{0.9}{\begin{tikzpicture}
        \draw[thick] (0,0) circle (0.7);
        \node at (0,0) {$E_6$};
        \draw[thick] (0.8,0)--(2.2,0);
        \draw[thick] (-0.8,0)--(-2.2,0);
        \node at (1.45,0.3) {\scalebox{0.8}{ $(E_6,E_6)_x$}};
        \node at (-1.45,0.3) {\scalebox{0.8}{$(E_6,E_6)_x$}};
        \draw[thick] (2.3,0.7)--(3.7,0.7)--(3.7,-0.7)--(2.3,-0.7)--cycle;
        \draw[thick] (-2.3,0.7)--(-3.7,0.7)--(-3.7,-0.7)--(-2.3,-0.7)--cycle;
        \node at (3,0) {$E_6$};
        \node at (-3,0) {$E_6$};
        \end{tikzpicture}}
    \caption{Quiver with gauge nodes of type $E_6$ and edges of type $(E_6,E_6)_x$.}
     \label{fig:dualE6}
\end{figure}  
where we indicate with $(E_6,E_6)_x$ the conformal matter lying at the collision between $E_6$ singularities corresponding to each edge of the quiver. We emphasize the following remark, which is a novel aspect of the theories we are considering in this section:\\

\textit{In a usual quiver gauge theory the nodes of the quiver are enough to identify the nature of the bifundamental matter associated to the edges linking quiver nodes. On the other hand, for the low-energy 5d quiver emerging from M-theory on \eqref{gaugingE6}, we need to specify the ``species'' $\bullet$ of the edges $(E_6,E_6)_\bullet$ of the quiver nodes (corresponding to different types of collision of $Y_{E_6}$ singularities in the threefold).\\}

As a result, the gauge rank of the quiver in Figure \ref{fig:dualE6} must\footnote{ \ We gleaned the data to compute \eqref{eq:rankcounting} by counting the number of the compact divisors in the threefold geometry, as explained in section \ref{sec:generalrecipe} and shown in the quivers of section \ref{sec:quivers}.} be computed as follows:
\begin{equation}
\label{eq:rankcounting}
    \text{rk}_{TOT} = \underbrace{\text{rk}(E_6)}_{\text{gauge node}}+\underbrace{2\text{rk}\big[(E_6,E_6)_x\big]}_{\text{edges}} = 6+2\cdot 15 = 36.
\end{equation}

\indent We can now contract the $u \in \mathbb P^1$ line\footnote{ \ The contraction map is $(u,v) \to (v u^2,v,uv)=(U,V,x)$.} producing a singularity, that we call $X_{E_6,1}^{x}$, defined by two equations in $\mathbb C^5$: 
\begin{equation}
    \label{eq: A1gsing}
    \begin{cases}
    x^2 +y^3+z^4= 0\\
    UV = x^2
    \end{cases},
\end{equation}
with coordinates $(x,y,z,U,V)$. The subscript ``1'' in $X_{E_6,1}^{x}$ refers to the second equation, which is a $A_1$ singularity. We have already given a quiver gauge theory interpretation of \eqref{eq: A1gsing}, in Figure \ref{fig:dualE6}. It turns out, though, that there exists a \textit{dual} quiver gauge theory corresponding to \eqref{eq: A1gsing}: to see it, notice that it is very similar to the complete intersection
\begin{equation}
    \label{eq: A1gsingold}
    \begin{cases}
    x^2 +y^3+z^4= 0\\
    UV = x
    \end{cases},
\end{equation}
which is nothing but the threefold $X_{E_6}^x$ that we have already vastly analyzed in sections \ref{sec:exampleE6} and \ref{sec:exampleE6quiver}. Therefore, we can extract the corresponding 5d low-energy quiver of \eqref{eq: A1gsing} by employing the hypersurface equations in Table \ref{tab:localmodelsE6z}, taking care of squaring their right-hand side, as now we have $UV=x^{\re{\mathbf{2}}}$, as opposed to $UV=x$ in the $X_{E_6}^x$ case. Proceeding in this way we can explicitly construct the quiver, obtaining the duality:
\begin{figure}[H]
\centering
\begin{subfigure}[c]{0.8\textwidth}
\centering
  \scalemath{0.8}{  \begin{tikzpicture}
        \draw[thick] (0,0) circle (0.7);
        \node at (0,0) {\small$\mathfrak{su}(12)$};
        \draw[thick] (0.8,0)--(1.4,0);
        \draw[thick] (2.2,0) circle (0.7);
        \node at (2.2,0) {\small$\mathfrak{su}(8)$};
        \draw[thick] (-0.8,0)--(-1.4,0);
        \draw[thick] (-2.2,0) circle (0.7);
        \node at (-2.2,0) {\small$\mathfrak{su}(8)$};
        \draw[thick] (0,0.8)--(0,1.4);
        \draw[thick] (0,2.2) circle (0.7);
        \node at (0,2.2) {\small$\mathfrak{su}(6)$};
        \draw[thick] (3,0)--(3.6,0);
        \draw[thick] (4.4,0) circle (0.7);
         \draw[thick] (-3,0)--(-3.6,0);
        \node at (4.4,0) {\small$\mathfrak{su}(4)$};
         \draw[thick] (-4.4,0) circle (0.7);
        \node at (-4.4,0) {\small$\mathfrak{su}(4)$};
         \draw[thick] (0,-0.8)--(0,-1.4);
            \draw[thick] (-0.6,-1.5)--(-0.6,-2.7)--(0.6,-2.7)--(0.6,-1.5)--cycle;
        \node at (0,-2.1) {2};
        \end{tikzpicture}}
    \caption*{}
    \end{subfigure}
    \begin{subfigure}[c]{0.8\textwidth}
    \centering
    $\displaystyle\left\updownarrow\vphantom{\int_A^B}\right.$ 
    \end{subfigure}
       \begin{subfigure}{0.8\textwidth}
       \centering
       \vspace{0.7cm}
\scalemath{0.9}{\begin{tikzpicture}
        \draw[thick] (0,0) circle (0.7);
        \node at (0,0) {$E_6$};
        \draw[thick] (0.8,0)--(2.2,0);
        \draw[thick] (-0.8,0)--(-2.2,0);
        \node at (1.45,0.3) {\scalebox{0.8}{$(E_6,E_6)_x$}};
        \node at (-1.45,0.3) {\scalebox{0.8}{$(E_6,E_6)_x$}};
        \draw[thick] (2.3,0.7)--(3.7,0.7)--(3.7,-0.7)--(2.3,-0.7)--cycle;
        \draw[thick] (-2.3,0.7)--(-3.7,0.7)--(-3.7,-0.7)--(-2.3,-0.7)--cycle;
        \node at (3,0) {$E_6$};
        \node at (-3,0) {$E_6$};
        \end{tikzpicture}}
    \end{subfigure}
    \caption{A new 5d duality.}
     \label{fig:newduality}
\end{figure}
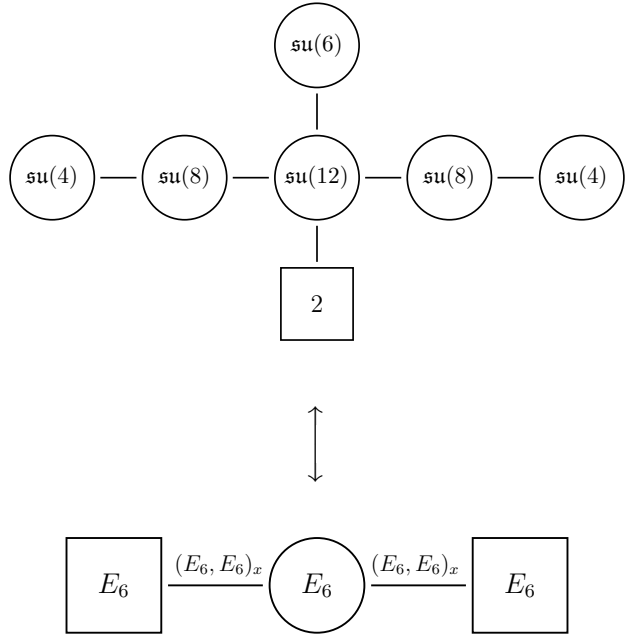  

Notice that, according to \eqref{total flavor}, the theory in Figure \ref{fig:newduality} enjoys a flavor symmetry which is at least of the form
\begin{equation}\label{A1 sing flavor}
    \mathcal{G}_{UV} = E_6\times E_6 \times \mathfrak{su}(2).
\end{equation}
Such symmetry can be motivated both from the field-theoretic point of view, thanks to the work of \cite{Tachikawa:2015mha} and \cite{Yonekura}, and the geometric side. As regards the latter, indeed, one can manifestly see that \eqref{eq: A1gsing} has three lines of singularities: two are of $E_6$ type, and one is of $A_1$ type. This is in perfect agreement with \eqref{A1 sing flavor}. \\

The previous procedure can be generalized, maintaining the Calabi-Yau condition, to obtain quivers shaped as the Dynkin diagram of type $A_{n-1}$, with nodes of type $\mathfrak g$, and flavor nodes of type $\mathfrak g$ attached to the rightmost and leftmost nodes of the $A_{n-1}$ Dynkin diagram (following the convention of Figure \ref{tab:DynkinADE}).  The resulting singularity $X_{\mathfrak{g},n-1}^{\bullet}$ is a complete intersection in $\mathbb C^5$ defined by the following  two equations:
\begin{equation}
    \label{eq:Ggsingularity}
    \begin{cases}
    P_{\mathfrak g}(x,y,z)=0\\
    UV=(\bullet)^{n}
    \end{cases},
\end{equation}
where $P_{\mathfrak{g}}$ are equations defining the $ADE$ singularity of type $\mathfrak{g}$ according to Table \ref{tab:singularities}, $\bullet$ is one among $x,y,z$ and the coordinates of the $\mathbb C^5$ ambient space are $x,y,z,U,V$.\\
\indent One might now be tempted to generalize the construction considering as second equation of \eqref{eq:Ggsingularity} some Du Val singularity different from $A_{n-1}$: the gluing condition \eqref{eq:gaugingflavor} would naively admit also these cases. However, one can check that for all the cases in which one picks a $D_k, E_6, E_7, E_8$ singularity as second equation of \eqref{eq:Ggsingularity}, then such generalization is not at finite distance in the moduli space of the Calabi-Yau threefold \cite{Gukov_2000,Xie:2015rpa}.\footnote{ \ Something interesting might come from choosing case $\mathfrak g = D_k$ and replacing $ P_{D_{k}}=0$ with the second equation of \eqref{eq:Ggsingularity}. In this case, the bound that indicates whether the singularity is at finite distance in the moduli space of Calabi-Yau is exactly saturated and we can not readily rule out these singularities.}
Finally, we note that, imposing $n=1$ in \eqref{eq:Ggsingularity} (with $A_0$ the single-center Taub-NUT), we re-obtain the bifundamental matter that we treated in the previous sections: $X_{\mathfrak g,0}^{\bullet} = X_{\mathfrak{g}}^{\bullet}$.\\
\indent As we have already pointed out in Figure \ref{fig:newduality}, the roles of $\mathfrak g$ and $A_{n-1}$ in \eqref{eq:Ggsingularity} can be exchanged, giving the following 5d duality:\\

\textit{Let $(\mathfrak g,\mathfrak g)_{\bullet}$ be the bifundamental matter associated to $X_{\mathfrak g}^{\bullet}$.  Let $Q^{\bullet}_{n-1,\mathfrak g}$ be a quiver of type $A_{n-1}$, with gauge nodes of type $\mathfrak g$, flavor nodes of type $\mathfrak g$ attached to the rightmost and leftmost nodes of the quiver and edges of type $(\mathfrak g,\mathfrak g)_{\bullet}$. Namely, we are referring to the quiver in Figure \ref{fig:generalduality}}.\\
       \begin{figure}[H]
    \centering
\scalemath{1}{\begin{tikzpicture}
        \draw[thick] (0,0) circle (0.7);
        \node at (0,0) {$\mathfrak{g}$};
        \draw[thick] (0.7,0)--(1.7,0);
        \draw[thick] (-0.7,0)--(-1.7,0);
        \node at (1.2,0.3) {\scalebox{0.8}{$(\mathfrak g, \mathfrak g)_\bullet$}};
        \node at (-1.2,0.3) {\scalebox{0.8}{$(\mathfrak g, \mathfrak g)_\bullet$}};
          \node at (2.25,0) {$\cdots$};
          \node at (-2.25,0) {$\cdots$};
          \draw[thick] (2.8,0)--(3.8,0);
          \draw[thick] (-2.8,0)--(-3.8,0);
          \node at (3.3,0.3) {\scalebox{0.8}{$(\mathfrak g, \mathfrak g)_\bullet$}};
          \node at (-3.3,0.3) {\scalebox{0.8}{$(\mathfrak g, \mathfrak g)_\bullet$}};
          \draw[thick] (4.5,0) circle (0.7);
          \node at (4.5,0) {$\mathfrak{g}$};
          \draw[thick] (-4.5,0) circle (0.7);
          \node at (-4.5,0) {$\mathfrak{g}$};
          \draw[thick] (5.2,0)--(6.5,0);
          \draw[thick] (-5.2,0)--(-6.5,0);
          \node at (5.85,0.3) {\scalebox{0.8}{$(\mathfrak g, \mathfrak g)_\bullet$}};
          \node at (-5.85,0.3) {\scalebox{0.8}{$(\mathfrak g, \mathfrak g)_\bullet$}};
        
        \draw[thick] (6.5,0.7)--(7.9,0.7)--(7.9,-0.7)--(6.5,-0.7)--cycle;
        \node at (7.2,0) {$\mathfrak{g}$};
        \draw[thick] (-6.5,0.7)--(-7.9,0.7)--(-7.9,-0.7)--(-6.5,-0.7)--cycle;
        \node at (-7.2,0) {$\mathfrak{g}$};
       \node at (0,-1.3) {$\underbrace{\hspace{10 cm}}_{\scalebox{1}{\quad \qquad $n-1$ \text{ gauge nodes}}}$};
        \end{tikzpicture}}
    \caption{Quiver of type $Q_{n-1,\mathfrak{g}}^{\bullet}$.}
    \label{fig:generalduality}
    \end{figure}
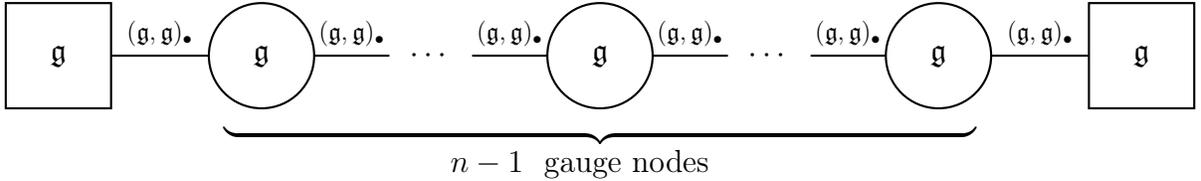

\textit{Let $Q^{\bullet}_{\mathfrak g, n-1}$ be the quiver obtained taking the quiver of the $X_{\mathfrak g}^{\bullet}$ theory of section \ref{sec:quivers} and replacing, for each gauge and flavor node, $\mathfrak{su}(N_i)$ with $ \to \mathfrak{su}(n N_i)$. Then, $Q^{\bullet}_{n-1,\mathfrak g}$ and $Q^{\bullet}_{\mathfrak g, n-1}$ are both mass deformations of M-theory on $X^{\bullet}_{\mathfrak g,n-1}$}.\\

\indent Notice that when $\mathfrak{g}$ is of $A$ type (and the edges are \textit{not} of $(A,A)_x$ type) we reproduce a duality which is transparent from the
Type IIB $(p,q)$-brane web language, as it amounts to an S-duality transformation that rotates the web by $\pi/2$.\\
\indent In Figures \ref{fig:dualityE6x}, \ref{fig:dualityE6y} and \ref{fig:dualityE6z} we represent the duality for the cases $X_{E_6,n-1}^{\bullet}$, with $\bullet=x,y,z$ (indicating with $(E_6,E_6)_\bullet$ the conformal matter edges). As we have previously pointed out in a specific example, in order to completely characterize the quivers in Figures \ref{fig:dualityE6x}, \ref{fig:dualityE6y} and \ref{fig:dualityE6z}, namely for the low-energy duals arising from M-theory on \eqref{eq:Ggsingularity}, in the cases $\mathfrak{g}=D,E$, we \textit{must} specify the ``species'' $(\mathfrak g,\mathfrak g)_\bullet$  of the edges of the quiver nodes (corresponding to different types of collision of $Y_{\mathfrak{g}}$ singularities in the threefold), thus decorating the quiver with additional data.\\
\indent The UV flavor symmetry enjoyed by such theories is at least of the form:
\begin{itemize}
    \item Figure 16: $\mathcal{G}_{UV} = E_6\times E_6 \times \mathfrak{su}(n)$,
    \item Figure 17: $\mathcal{G}_{UV} = E_6\times E_6 \times \mathfrak{su}(n)\times \mathfrak{su}(n)\times \mathfrak{u}(1)$,
    \item Figure 18: $\mathcal{G}_{UV} = E_6\times E_6 \times \mathfrak{su}(n)$.
\end{itemize}
As it happened in the \eqref{A1 sing flavor} example, one can manifestly observe the non-abelian part of these symmetries from the geometric point of view, noticing that they appear as non-compact lines of singularities of the appropriate type.

  \begin{figure}[H]
\centering
\begin{subfigure}[c]{.9\textwidth}
\centering
  \scalemath{0.9}{  \begin{tikzpicture}
        \draw[thick] (0,0) circle (0.7);
        \node at (0,0) {\small$\mathfrak{su}(6n)$};
        \draw[thick] (0.8,0)--(1.4,0);
        \draw[thick] (2.2,0) circle (0.7);
        \node at (2.2,0) {\small$\mathfrak{su}(4n)$};
        \draw[thick] (-0.8,0)--(-1.4,0);
        \draw[thick] (-2.2,0) circle (0.7);
        \node at (-2.2,0) {\small$\mathfrak{su}(4n)$};
        \draw[thick] (0,0.8)--(0,1.4);
        \draw[thick] (0,2.2) circle (0.7);
        \node at (0,2.2) {\small$\mathfrak{su}(3n)$};
        \draw[thick] (3,0)--(3.6,0);
        \draw[thick] (4.4,0) circle (0.7);
         \draw[thick] (-3,0)--(-3.6,0);
        \node at (4.4,0) {\small$\mathfrak{su}(2n)$};
         \draw[thick] (-4.4,0) circle (0.7);
        \node at (-4.4,0) {\small$\mathfrak{su}(2n)$};
         \node at (0,-2.2) {$n$};
            \draw[thick] (0.7,-1.5)--(0.7,-2.9)--(-0.7,-2.9)--(-0.7,-1.5)--cycle;
        \draw[thick] (0,-0.8)--(0,-1.4);
        \end{tikzpicture}}
    \caption*{}     
    \end{subfigure}
  \begin{subfigure}[c]{0.9\textwidth}
    \centering    $\displaystyle\left\updownarrow\vphantom{\int_A^B}\right.$ 
    \end{subfigure}
       \begin{subfigure}[c]{.9\textwidth}
       \centering
       \vspace{0.8cm}
\scalemath{0.9}{\hspace{-0.08cm}\begin{tikzpicture}
        \draw[thick] (-0.1,0) circle (0.7);
        \node at (-0.1,0) {$E_6$};
        \draw[thick] (0.7,0)--(1.6,0);
        \draw[thick] (-0.9,0)--(-1.6,0);
        \node at (1.2,0.3) {\scalebox{0.7}{$(E_6,E_6)_x$}};
        \node at (-1.4,0.3) {\scalebox{0.7}{$(E_6,E_6)_x$}};
          \node at (2.15,0) {$\cdots$};
          \node at (-2.35,0) {$\cdots$};
          \draw[thick] (2.8,0)--(3.7,0);
          \draw[thick] (-3,0)--(-3.7,0);
          \node at (3.1,0.3) {\scalebox{0.7}{$(E_6,E_6)_x$}};
          \node at (-3.3,0.3) {\scalebox{0.7}{$(E_6,E_6)_x$}};
          \draw[thick] (4.4,0) circle (0.7);
          \node at (4.4,0) {$E_6$};
          \draw[thick] (-4.6,0) circle (0.7);
          \node at (-4.6,0) {$E_6$};
          \draw[thick] (5.2,0)--(6.3,0);
          \draw[thick] (-5.4,0)--(-6.5,0);
          \node at (5.75,0.3) {\scalebox{0.7}{$(E_6,E_6)_x$}};
          \node at (-5.95,0.3) {\scalebox{0.7}{$(E_6,E_6)_x$}};
        
        \draw[thick] (6.4,0.7)--(7.8,0.7)--(7.8,-0.7)--(6.4,-0.7)--cycle;
        \node at (7.1,0) {$E_6$};
        \draw[thick] (-6.6,0.7)--(-7.95,0.7)--(-7.95,-0.7)--(-6.6,-0.7)--cycle;
        \node at (-7.3,0) {$E_6$};
       \node at (-0.1,-1.3) {$\underbrace{\hspace{10 cm}}_{\scalebox{1}{\quad \qquad $n-1$ \text{ gauge nodes}}}$};
        \end{tikzpicture}}
    \end{subfigure}
    \caption{Duality between quivers of type $Q_{E_6,n-1}^{x}$ and quivers of type $Q_{n-1,E_6}^{x}$.}
     \label{fig:dualityE6x}
\end{figure}

  \begin{figure}[H]
\centering
\begin{subfigure}[c]{.9\textwidth}
\centering
  \scalemath{0.9}{  \begin{tikzpicture}
        \draw[thick] (0,0) circle (0.7);
        \node at (0,0) {\small$\mathfrak{su}(4n)$};
        \draw[thick] (0.8,0)--(1.4,0);
        \draw[thick] (2.2,0) circle (0.7);
        \node at (2.2,0) {\small$\mathfrak{su}(3n)$};
        \draw[thick] (-0.8,0)--(-1.4,0);
        \draw[thick] (-2.2,0) circle (0.7);
        \node at (-2.2,0) {\small$\mathfrak{su}(3n)$};
        \draw[thick] (0,0.8)--(0,1.4);
        \draw[thick] (0,2.2) circle (0.7);
        \node at (0,2.2) {\small$\mathfrak{su}(2n)$};
        \draw[thick] (3,0)--(3.6,0);
        \draw[thick] (4.4,0) circle (0.7);
         \draw[thick] (-3,0)--(-3.6,0);
        \node at (4.4,0) {\small$\mathfrak{su}(2n)$};
         \draw[thick] (-4.4,0) circle (0.7);
        \node at (-4.4,0) {\small$\mathfrak{su}(2n)$};
         \draw[thick] (5.2,0)--(5.8,0);
            \draw[thick] (5.9,0.7)--(7.3,0.7)--(7.3,-0.7)--(5.9,-0.7)--cycle;
            \draw[thick] (-5.2,0)--(-5.8,0);
            \draw[thick] (-5.9,0.7)--(-7.3,0.7)--(-7.3,-0.7)--(-5.9,-0.7)--cycle;
        \node at (6.6,0) {$n$};
        \node at (-6.6,0) {$n$};
        \end{tikzpicture}}
    \caption*{}
     
    \end{subfigure}
  \begin{subfigure}[c]{0.9\textwidth}
    \centering
    $\displaystyle\left\updownarrow\vphantom{\int_A^B}\right.$ 
    \end{subfigure}
       \begin{subfigure}[c]{.9\textwidth}
       \centering
 \vspace{0.8cm}
\scalemath{0.9}{\begin{tikzpicture}
        \draw[thick] (0,0) circle (0.7);
        \node at (0,0) {$E_6$};
        \draw[thick] (0.8,0)--(1.6,0);
        \draw[thick] (-0.8,0)--(-1.6,0);
        \node at (1.3,0.3) {\scalebox{0.7}{$(E_6,E_6)_y$}};
        \node at (-1.3,0.3) {\scalebox{0.7}{$(E_6,E_6)_y$}};
          \node at (2.25,0) {$\cdots$};
          \node at (-2.25,0) {$\cdots$};
          \draw[thick] (2.9,0)--(3.7,0);
          \draw[thick] (-2.9,0)--(-3.7,0);
          \node at (3.2,0.3) {\scalebox{0.7}{$(E_6,E_6)_y$}};
          \node at (-3.2,0.3) {\scalebox{0.7}{$(E_6,E_6)_y$}};
          \draw[thick] (4.5,0) circle (0.7);
          \node at (4.5,0) {$E_6$};
          \draw[thick] (-4.5,0) circle (0.7);
          \node at (-4.5,0) {$E_6$};
          \draw[thick] (5.3,0)--(6.4,0);
          \draw[thick] (-5.3,0)--(-6.4,0);
          \node at (5.85,0.3) {\scalebox{0.7}{$(E_6,E_6)_y$}};
          \node at (-5.85,0.3) {\scalebox{0.7}{$(E_6,E_6)_y$}};
        
        \draw[thick] (6.5,0.7)--(7.9,0.7)--(7.9,-0.7)--(6.5,-0.7)--cycle;
        \node at (7.2,0) {$E_6$};
        \draw[thick] (-6.5,0.7)--(-7.9,0.7)--(-7.9,-0.7)--(-6.5,-0.7)--cycle;
        \node at (-7.2,0) {$E_6$};
       \node at (0,-1.3) {$\underbrace{\hspace{10 cm}}_{\scalebox{1}{\quad \qquad $n-1$ \text{ gauge nodes}}}$};
        \end{tikzpicture}}
    \end{subfigure}
    \caption{Duality between quivers of type $Q_{E_6,n-1}^{y}$ and quivers of type $Q_{n-1,E_6}^{y}$.}
     \label{fig:dualityE6y}
\end{figure}

 \begin{figure}[H]
\centering
\begin{subfigure}[c]{.9\textwidth}
\centering
  \scalemath{0.9}{  \begin{tikzpicture}
        \draw[thick] (0,0) circle (0.7);
        \node at (0,0) {\small$\mathfrak{su}(3n)$};
        \draw[thick] (0.8,0)--(1.4,0);
        \draw[thick] (2.2,0) circle (0.7);
        \node at (2.2,0) {\small$\mathfrak{su}(2n)$};
        \draw[thick] (-0.8,0)--(-1.4,0);
        \draw[thick] (-2.2,0) circle (0.7);
        \node at (-2.2,0) {\small$\mathfrak{su}(2n)$};
        \draw[thick] (0,0.8)--(0,1.4);
        \draw[thick] (0,2.2) circle (0.7);
        \node at (0,2.2) {\small$\mathfrak{su}(2n)$};
        \draw[thick] (3,0)--(3.6,0);
        \draw[thick] (4.4,0) circle (0.7);
         \draw[thick] (-3,0)--(-3.6,0);
        \node at (4.4,0) {\small$\mathfrak{su}(n)$};
         \draw[thick] (-4.4,0) circle (0.7);
        \node at (-4.4,0) {\small$\mathfrak{su}(n)$};
         \node at (0,4.4) {$n$};
            \draw[thick] (0.7,3.7)--(0.7,5.1)--(-0.7,5.1)--(-0.7,3.7)--cycle;
        \draw[thick] (0,3)--(0,3.6);
        \end{tikzpicture}}
    \caption*{}
    \end{subfigure}
 \begin{subfigure}[c]{0.9\textwidth}
    \centering
    $\displaystyle\left\updownarrow\vphantom{\int_A^B}\right.$ 
    \end{subfigure}
       \begin{subfigure}[c]{.9\textwidth}
       \centering
       \vspace{0.8cm}
\scalemath{0.9}{\begin{tikzpicture}
        \draw[thick] (0,0) circle (0.7);
        \node at (0,0) {$E_6$};
        \draw[thick] (0.8,0)--(1.6,0);
        \draw[thick] (-0.8,0)--(-1.6,0);
        \node at (1.3,0.3) {\scalebox{0.7}{$(E_6,E_6)_z$}};
        \node at (-1.3,0.3) {\scalebox{0.7}{$(E_6,E_6)_z$}};
          \node at (2.25,0) {$\cdots$};
          \node at (-2.25,0) {$\cdots$};
          \draw[thick] (2.9,0)--(3.7,0);
          \draw[thick] (-2.9,0)--(-3.7,0);
          \node at (3.2,0.3) {\scalebox{0.7}{$(E_6,E_6)_z$}};
          \node at (-3.2,0.3) {\scalebox{0.7}{$(E_6,E_6)_z$}};
          \draw[thick] (4.5,0) circle (0.7);
          \node at (4.5,0) {$E_6$};
          \draw[thick] (-4.5,0) circle (0.7);
          \node at (-4.5,0) {$E_6$};
          \draw[thick] (5.3,0)--(6.4,0);
          \draw[thick] (-5.3,0)--(-6.4,0);
          \node at (5.85,0.3) {\scalebox{0.7}{$(E_6,E_6)_z$}};
          \node at (-5.85,0.3) {\scalebox{0.7}{$(E_6,E_6)_z$}};
        
        \draw[thick] (6.5,0.7)--(7.9,0.7)--(7.9,-0.7)--(6.5,-0.7)--cycle;
        \node at (7.2,0) {$E_6$};
        \draw[thick] (-6.5,0.7)--(-7.9,0.7)--(-7.9,-0.7)--(-6.5,-0.7)--cycle;
        \node at (-7.2,0) {$E_6$};
       \node at (0,-1.3) {$\underbrace{\hspace{10 cm}}_{\scalebox{1}{\quad \qquad $n-1$ \text{ gauge nodes}}}$};
        \end{tikzpicture}}
    \end{subfigure}
    \caption{Duality between quivers of type $Q_{E_6,n-1}^{z}$ and quivers of type $Q_{n-1,E_6}^{z}$.}
     \label{fig:dualityE6z}
\end{figure}
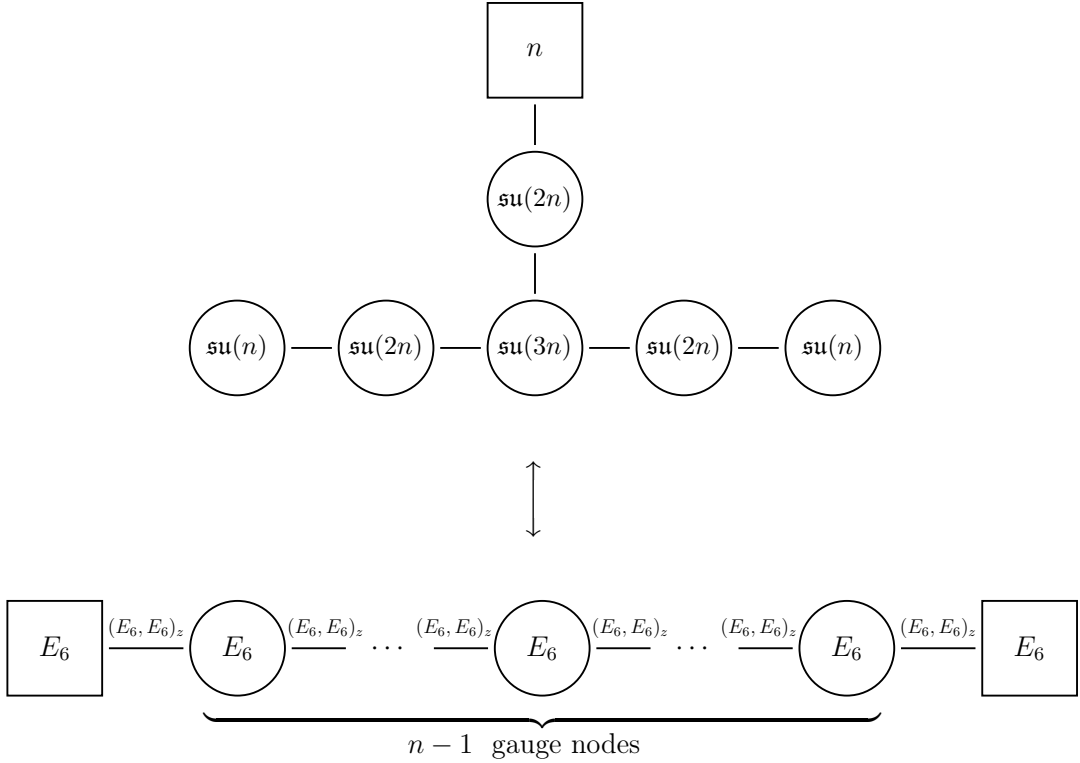  

We conclude this section by noticing that, as we just gauged together two conformal matter theories, one would be naturally tempted to gauge together more than two conformal matter theories. It is easy to see that this does not produce a SCFT, or, geometrically, a Calabi-Yau threefold. Indeed, to glue together two $(\mathfrak g,\mathfrak g)_{\bullet}$ factors we started with two equations (as in \eqref{gaugingE6}) of the Du Val singularity $Y_{\mathfrak g}$ with, respectively  $\bullet \equiv R = uv$  and $\bullet \equiv R' = u'v'$. These equations were glued together imposing $u' = \frac{1}{u}$ and $v' = u^2 v$. The logic behind this construction is that, in the first chart, to each irreducible factor of $R$ corresponds a line of $Y_\mathfrak g$. The line at $u = 0$ is described by varying $v$, and is non-compact as $v$ is the variable describing the fibers of the $\mathcal O(-2)$ bundle. Vice versa, the factor $v =0$ is spanned by the coordinate $u$ and is compact as $u' = 1/u$. In the other chart, a similar argument applies: the line at $v'=0$ is compact while $u' = 0$ corresponds to the fiber at $u' = 0$ of the $\mathcal O(-2)$ bundle. We could try to generalize this by gluing together $N$ lines of $Y_{\mathfrak g}$ by considering 
\begin{eqnarray}
\label{eq:threefoldsingdefmultiple}
 \begin{cases}
 0=P_{\mathfrak{g}}(x,y,z)\rvert_{\bullet = R_N(u,v)}, \\
 0=P_{\mathfrak{g}}(x,y,z)\rvert_{\bullet = R'_N(u',v')},
\end{cases}
\end{eqnarray}
with $u = 1/u'$ and $v = u^k v'$, with $R_N(u,v)$  being  
\begin{equation}
    \label{eq:gauginofthree}
    R_{N} = \underbrace{u(u-u_1)...(u-u_{N-2})}_{N-1 \text{ factors}} v.
\end{equation}
and $R_{N}' = R_{N}(1/u',u'^k v')$. We note that we have, for \eqref{eq:gauginofthree}, $N-1$ non-compact lines of $Y_\mathfrak g$  at $u= u_j$, for $j=1,...,N-2$, and at $u=0$. The explicit expression for $R_N'$ is 
\begin{equation}
    \label{eq:gauginofnII}
    R'_N(u',v') = u'^{k-N+1}\left(u'-\frac{1}{u_1}\right)...\left(u'-\frac{1}{u_{N-2}}\right) v',
\end{equation}
where the factors $u' = \frac{1}{u_j}$ correspond to the same non-compact lines of $Y_{\mathfrak g}$ associated with $u = u_j$. To get in total $N$ non-compact lines of $Y_{\mathfrak g}$, we then need to impose $k=N$ obtaining, at $u' = 0$ (or, equivalently, at $u \to \infty$), a new non-compact line of $Y_{\mathfrak g}$.\\
\indent Summing up, we just showed that the only way to glue together $N$ non-compact lines of $Y_{\mathfrak g}$ along the compact line $v = v' = 0$ is to consider $(u,v)$ as coordinates on $\mathcal O(-N)$. This construction does not yield a Calabi-Yau variety: assuming $\bullet \neq x$, from the equations in \cref{tab:singularities} the holomorphic volume form of \eqref{eq:threefoldsingdefmultiple} is
\begin{equation}
\label{eq:poincareresidueI}
\Omega = \frac{dy \wedge du \wedge dv}{2x} = (u')^{N-2}\frac{dy \wedge du' \wedge dv'}{2x}
\end{equation}
and hence degenerates at $u' = 0$ for $N \neq 2$. The argument works similarly for the $\bullet = x$ case. Hence, we conclude that we \textit{cannot} gauge together more than two $(\mathfrak g,\mathfrak g)_{\bullet}$ conformal matter theories. Nonetheless, this does not exclude ``trifundamental'' conformal matter theories, or trinions, constructed in some other fashion. We will have more to say on this topic in upcoming work.

\subsection{Further generalizations and 5d dualities}
In this section we gradually introduce two natural generalizations of the 5d dualities outlined above.\\

\noindent\textbf{5d dualities and the conifold}\\
For the sake of clarity, consider the following example: suppose that we wish to ``glue'' a singularity of type $X_{E_6}^z$ with one of type $X_{E_6}^y$. This cannot happen on a $\mathbb{P}^1$ with normal bundle $\mathcal{O}_{\mathbb{P}^1}(-2)$. This setup can, however, be realized by gluing the two singularities along a $\mathbb{P}^1$ with normal bundle $\mathcal{O}_{\mathbb{P}^1}(-1)\oplus\mathcal{O}_{\mathbb{P}^1}(-1) $, which is the normal bundle of the $\mathbb{P}^1$ inflated by the small crepant resolution of the conifold. To see how this comes about, write down the complete intersection of the threefolds as:
\begin{equation}\label{threefoldconifold}
    \begin{cases}
     x^2+(uv)^3+w^4=0\\
    x^2+w'^3+(u'v')^4=0\\
    \end{cases},
\end{equation}
where the coordinates $(u,v,w)$ and $(u',v',w')$ are in principle independent. Now we proceed with the gluing: we impose that the two patches $\mathbb{C}^3_{u,v,w}$ and $\mathbb{C}^3_{u',v',w'}$ are related precisely as the two charts on the conifold resolution \cite{Candelas:1989js}:
\begin{equation}
\label{eq:gaugingconifold}
\begin{tikzcd}[row sep = tiny]
 \chi: \mathbb{C}^3_{u,v,w}\arrow{r}& \mathbb{C}^3_{u',v',w'}\\
(u,v,w)\arrow[mapsto]{r} &(u',v',w') = (wv,\frac{1}{v},uv).
\end{tikzcd}
\end{equation}
This implies that the blowdown of \eqref{threefoldconifold} reduces to the system:
\begin{equation}\label{systemconifold}
    \begin{cases}
        x^2+y^3+z^4 = 0\\
        UV = yz \\
    \end{cases},
\end{equation}
namely the intersection of a $E_6$ singularity with the conifold equation. Reasoning as in the previous section, we can hence write down the following 5d duality:
\begin{figure}[H]
\centering
\begin{subfigure}[c]{.9\textwidth}
\centering
  \scalemath{0.9}{  \begin{tikzpicture}
        \draw[thick] (0,0) circle (0.7);
        \node at (0,0) {\small$\mathfrak{su}(7)$};
        \draw[thick] (0.8,0)--(1.4,0);
        \draw[thick] (2.2,0) circle (0.7);
        \node at (2.2,0) {\small$\mathfrak{su}(5)$};
        \draw[thick] (-0.8,0)--(-1.4,0);
        \draw[thick] (-2.2,0) circle (0.7);
        \node at (-2.2,0) {\small$\mathfrak{su}(5)$};
        \draw[thick] (0,0.8)--(0,1.4);
        \draw[thick] (0,2.2) circle (0.7);
        \node at (0,2.2) {\small$\mathfrak{su}(4)$};
        \draw[thick] (3,0)--(3.6,0);
        \draw[thick] (4.4,0) circle (0.7);
         \draw[thick] (-3,0)--(-3.6,0);
        \node at (4.4,0) {\small$\mathfrak{su}(3)$};
         \draw[thick] (-4.4,0) circle (0.7);
        \node at (-4.4,0) {\small$\mathfrak{su}(3)$};
         \draw[thick] (5.2,0)--(5.8,0);
            \draw[thick] (5.9,0.7)--(7.3,0.7)--(7.3,-0.7)--(5.9,-0.7)--cycle;
            \draw[thick] (-5.2,0)--(-5.8,0);
            \draw[thick] (-5.9,0.7)--(-7.3,0.7)--(-7.3,-0.7)--(-5.9,-0.7)--cycle;
            \draw[thick] (0.7,3.7)--(0.7,5.1)--(-0.7,5.1)--(-0.7,3.7)--cycle;
            \draw[thick] (0,3.0)--(0,3.6);
        \node at (0,4.4) {1};
        \node at (6.6,0) {1};
        \node at (-6.6,0) {1};
        \end{tikzpicture}}
    \caption*{}
     
    \end{subfigure}
  \begin{subfigure}[c]{0.9\textwidth}
    \centering
    $\displaystyle\left\updownarrow\vphantom{\int_A^B}\right.$ 
    \end{subfigure}
       \begin{subfigure}[c]{.9\textwidth}
\centering
\vspace{0.8cm}
\scalemath{0.9}{\begin{tikzpicture}
        \draw[thick] (0,0) circle (0.7);
        \node at (0,0) {$E_6$};
        \draw[thick] (0.8,0)--(2.2,0);
        \draw[thick] (-0.8,0)--(-2.2,0);
        \node at (1.45,0.3) {\scalebox{0.8}{ $(E_6,E_6)_y$}};
        \node at (-1.45,0.3) {\scalebox{0.8}{$(E_6,E_6)_z$}};
        \draw[thick] (2.3,0.7)--(3.7,0.7)--(3.7,-0.7)--(2.3,-0.7)--cycle;
        \draw[thick] (-2.3,0.7)--(-3.7,0.7)--(-3.7,-0.7)--(-2.3,-0.7)--cycle;
        \node at (3,0) {$E_6$};
        \node at (-3,0) {$E_6$};
        \end{tikzpicture}}
    \end{subfigure}
    \caption{Duality between quiver of shape $E_6$ with $\mathfrak{su}$ nodes and linear quiver with an edge of type $(E_6,E_6)_x$ and one of type $(E_6,E_6)_y$.}
     \label{fig:dualconifold}
\end{figure}
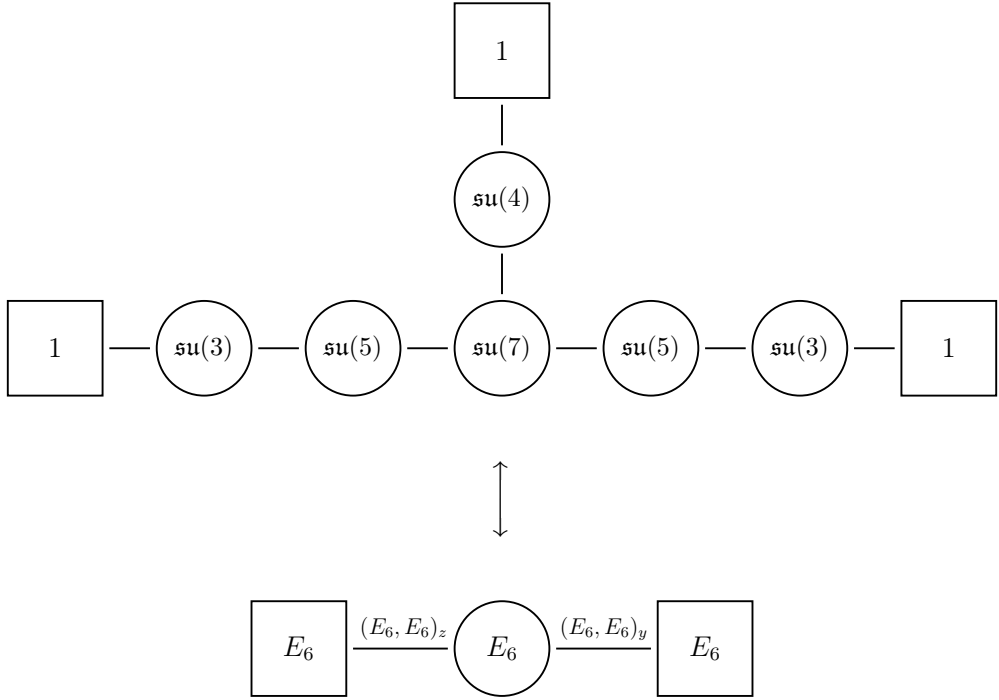  
The corresponding UV flavor symmetry, following \eqref{total flavor}, is:
\begin{equation}
    \mathcal{G}_{UV} = E_6\times E_6 \times \mathfrak{u}(1)^2.
\end{equation}
\noindent We can further generalize the system \eqref{systemconifold} to:
\begin{equation}\label{systemconifoldgen}
    \begin{cases}
        x^2+y^3+z^4 = 0\\
        UV = y^nz^k \\
    \end{cases},
\end{equation}
which corresponds to the 5d duality:
 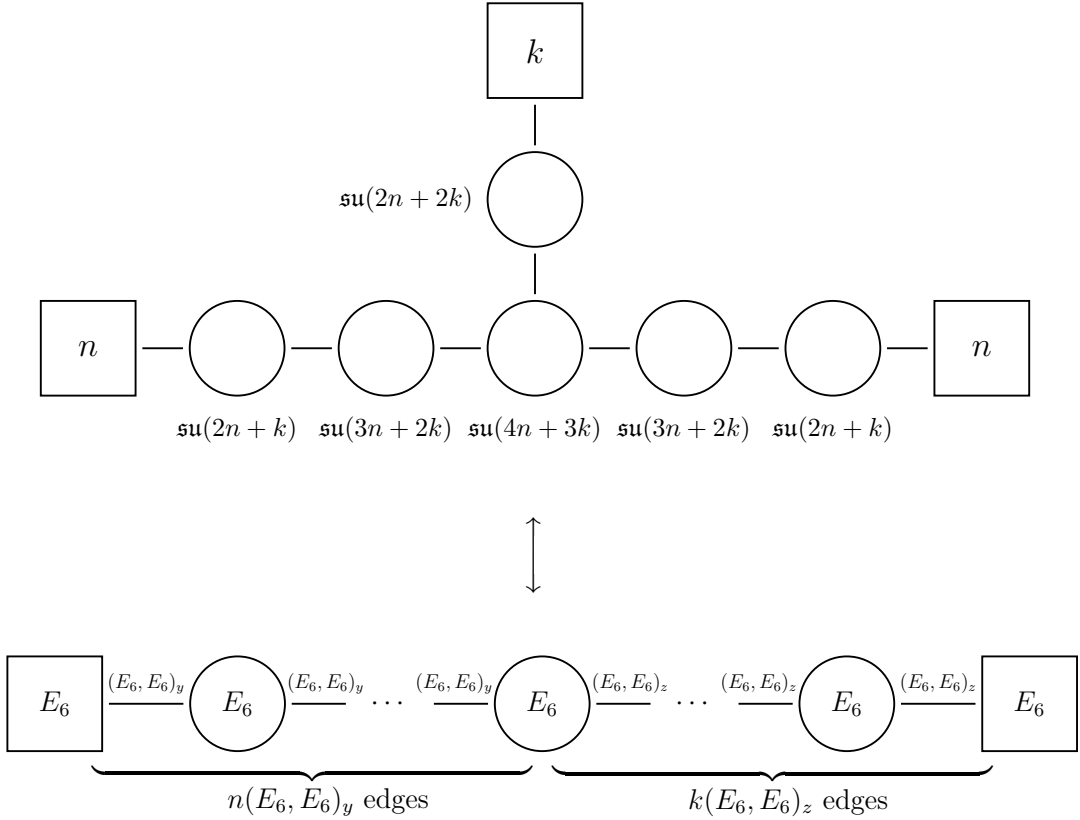
\begin{figure}[H]
\centering
\begin{subfigure}[c]{.9\textwidth}
\centering
  \scalemath{0.9}{  \begin{tikzpicture}
        \draw[thick] (0,0) circle (0.7);
        \node at (0,-1.2) {\small$\mathfrak{su}(4n+3k)$};
        \draw[thick] (0.8,0)--(1.4,0);
        \draw[thick] (2.2,0) circle (0.7);
        \node at (2.2,-1.2) {\small$\mathfrak{su}(3n+2k)$};
        \draw[thick] (-0.8,0)--(-1.4,0);
        \draw[thick] (-2.2,0) circle (0.7);
        \node at (-2.2,-1.2) {\small$\mathfrak{su}(3n+2k)$};
        \draw[thick] (0,0.8)--(0,1.4);
        \draw[thick] (0,2.2) circle (0.7);
        \node at (-1.9,2.2) {\small$\mathfrak{su}(2n+2k)$};
        \draw[thick] (3,0)--(3.6,0);
        \draw[thick] (4.4,0) circle (0.7);
         \draw[thick] (-3,0)--(-3.6,0);
        \node at (4.4,-1.2) {\small$\mathfrak{su}(2n+k)$};
         \draw[thick] (-4.4,0) circle (0.7);
        \node at (-4.4,-1.2) {\small$\mathfrak{su}(2n+k)$};
         \draw[thick] (5.2,0)--(5.8,0);
            \draw[thick] (5.9,0.7)--(7.3,0.7)--(7.3,-0.7)--(5.9,-0.7)--cycle;
            \draw[thick] (-5.2,0)--(-5.8,0);
            \draw[thick] (-5.9,0.7)--(-7.3,0.7)--(-7.3,-0.7)--(-5.9,-0.7)--cycle;
            \draw[thick] (0.7,3.7)--(0.7,5.1)--(-0.7,5.1)--(-0.7,3.7)--cycle;
            \draw[thick] (0,3.0)--(0,3.6);
        \node at (0,4.4) {\scalebox{1.2}{$k$}};
        \node at (6.6,0) {\scalebox{1.2}{$n$}};
        \node at (-6.6,0) {\scalebox{1.2}{$n$}};
        \end{tikzpicture}}
    \caption*{}
     
    \end{subfigure}
  \begin{subfigure}[c]{0.9\textwidth}
    \centering
    $\displaystyle\left\updownarrow\vphantom{\int_A^B}\right.$ 
    \end{subfigure}
       \begin{subfigure}[c]{.9\textwidth}
       \centering
       \vspace{0.8cm}
\scalemath{0.9}{\begin{tikzpicture}
        \draw[thick] (0,0) circle (0.7);
        \node at (0,0) {$E_6$};
        \draw[thick] (0.8,0)--(1.6,0);
        \draw[thick] (-0.8,0)--(-1.6,0);
        \node at (1.3,0.3) {\scalebox{0.7}{$(E_6,E_6)_z$}};
        \node at (-1.3,0.3) {\scalebox{0.7}{$(E_6,E_6)_y$}};
          \node at (2.25,0) {$\cdots$};
          \node at (-2.25,0) {$\cdots$};
          \draw[thick] (2.9,0)--(3.7,0);
          \draw[thick] (-2.9,0)--(-3.7,0);
          \node at (3.2,0.3) {\scalebox{0.7}{$(E_6,E_6)_z$}};
          \node at (-3.2,0.3) {\scalebox{0.7}{$(E_6,E_6)_y$}};
          \draw[thick] (4.5,0) circle (0.7);
          \node at (4.5,0) {$E_6$};
          \draw[thick] (-4.5,0) circle (0.7);
          \node at (-4.5,0) {$E_6$};
          \draw[thick] (5.3,0)--(6.4,0);
          \draw[thick] (-5.3,0)--(-6.4,0);
          \node at (5.85,0.3) {\scalebox{0.7}{$(E_6, E_6)_z$}};
          \node at (-5.85,0.3) {\scalebox{0.7}{$(E_6,E_6)_y$}};
        
        \draw[thick] (6.5,0.7)--(7.9,0.7)--(7.9,-0.7)--(6.5,-0.7)--cycle;
        \node at (7.2,0) {$E_6$};
        \draw[thick] (-6.5,0.7)--(-7.9,0.7)--(-7.9,-0.7)--(-6.5,-0.7)--cycle;
        \node at (-7.2,0) {$E_6$};
       \node at (-3.4,-1.3) {$\underbrace{\hspace{6.5cm}}_{\scalebox{1}{\quad \qquad $n (E_6,E_6)_y$ \text{edges}}}$};
       \node at (3.4,-1.3) {$\underbrace{\hspace{6.5cm}}_{\scalebox{1}{\quad \qquad $k (E_6,E_6)_z$ \text{edges}}}$};
        \end{tikzpicture}}
    \end{subfigure}
    \caption{Duality between quiver of shape $E_6$ with $\mathfrak{su}$ nodes and linear quiver with mixed $E_6$ edges of type $(E_6,E_6)_y$ and $(E_6,E_6)_z$, with a total of $n+k-1$ gauge nodes.}
     \label{fig:dualconifoldgen}
\end{figure}  
Indeed, we have proven in section \ref{sec:toriclocalmodels} that the small crepant resolution of the singularity $UV=y^nz^k$, appearing as the second equation of \eqref{systemconifoldgen}, inflates $n+k-1$ $\mathbb{P}^1$'s on top of the origin $U=V=y=z=0$. These $\mathbb{P}^1$'s correspond to the gauge nodes of the rightmost quiver of Figure \ref{fig:dualconifoldgen}.  Besides, notice that in the rightmost quiver of Figure \ref{fig:dualconifoldgen}  we have conformal matter edges of two different ``species'', namely of type $(E_6,E_6)_z$ and $(E_6,E_6)_y$.\\
\indent It is easy to see that the gauge ranks of the quivers on the two sides of the dualities in Figure \ref{fig:dualconifold} and \ref{fig:dualconifoldgen} match, as a consistency check of the duality. Recall from section \ref{sec:quivers} that the edges $(E_6,E_6)_z$ and $(E_6,E_6)_y$ have rank $5$ and $10$, respectively. The quiver at the bottom has rank:
\begin{equation}
    \text{rk}(\text{Quiver}_B) = 16n+11k-6.
\end{equation}
The quiver at the top has rank:
\begin{equation}
    \text{rk}(\text{Quiver}_T) = \underbrace{k\cdot(5)}_{(E_6,E_6)_z \text{ edges}}+\underbrace{n\cdot(10)}_{(E_6,E_6)_y \text{ edges}}+\underbrace{(n+k-1)\cdot 6}_{E_6 \text{ nodes}} = 16n+11k-6,
\end{equation}
and we see that the two ranks match for any $n$ and $k$ greater than 0. This is a consequence of the fact that, in the geometric picture of section \ref{sec:generalrecipe}, the rank of the nodes on the left-hand side of Figure \ref{fig:dualconifoldgen} arise from the ranks of the edges on the right, so that the matching is automatically guaranteed.

\indent Let us conclude this section with two remarks:\\

\indent \textit{Remark 1: It can be readily checked, with the techniques reviewed in section \ref{sec:flavorUV} and \ref{sec:flavor symmetry}, that the quivers in Figure \ref{fig:dualconifoldgen} possess a flavor symmetry that enhances to at least $E_6\times E_6 \times \mathfrak{su}(n) \times \mathfrak{su}(n) \times \mathfrak{su}(k) \times \mathfrak{u}(1)^2$ in the UV, for all $n$ and $k$ greater than 0. This is in agreement with the geometric picture offered by \eqref{systemconifoldgen}, that spots lines of singularities corresponding to the non-abelian factors.}\\

\indent \textit{Remark 2: The complete intersection \eqref{systemconifoldgen} can be alternatively taken by swapping the variables $(y,z)$ with either $(x,y)$ or $(x,z)$. Extracting the corresponding quivers is straightforward. Again, it can be checked that the flavor symmetry enhances to at least $E_6\times E_6$ in the UV. Naturally, one can also substitute the first equation of \eqref{systemconifoldgen} with any Du Val singularity of type $A_k$, $D_k$ or $E_7, E_8$, yielding further quivers and 5d dualities in an algorithmic manner, retracing the same steps of the $E_6$ case.\footnote{ \ As usual, in the $A_k$ case one can only mix two ``species'' of conformal matter theories: the genuine interacting 5d SCFT arising from $\bullet = x$ and the well-known non-interacting bifundamental theory with $\bullet = z$.} These dualities have been collected in \cref{sec:quiverdualities}.}\\

Furthermore, notice that the quivers at the top of Figure \ref{fig:dualconifoldgen} were already obtained in \cite{Collinucci:2020jqd}, with comparable techniques. \\
\indent Finally, it is easy to check, with techniques similar to those used at the end of \cref{sec:linearquiversI}, that even using different kinds of conformal matter, it is not possible to gauge the diagonal combination of more than two conformal matter theories on the same curve. \\

\noindent\textbf{5d dualities and a fourfold singularity}\\
A straightforward generalization of \eqref{systemconifold} is the complete intersection:
\begin{equation}\label{systemfourfold}
    \begin{cases}
        x^2+y^3+z^4 = 0\\
        UV = xyz \\
    \end{cases}.
\end{equation}
We wonder which is the resolved phase of this system, and whether it admits an interpretation in terms of a quiver with $E_6$ gauge nodes, as we observed in a similar instance in the previous section. Let us see how we can elucidate this point.\\
\indent Simply considering the second equation of \eqref{systemfourfold} we get a toric fourfold hypersurface of $\mathbb C^5$ that possesses three lines of singularities intersecting at the origin, each taking the shape of a conifold. Carrying out a small complete crepant resolution, it is easy to show that the preimage of the origin through the resolution is constituted by two $\mathbb{P}^1$'s, intersecting in the shape of a $A_2$ Dynkin diagram and with normal bundle $\mathcal{O}(0)\oplus\mathcal{O}(-1)\oplus\mathcal{O}(-1)$. Indeed, on the resolved phase we can write the complete intersection \eqref{systemfourfold} as:
\begin{equation}\label{threefoldfourfold}
    \begin{cases}
    (uv)^2+x^3+w^4=0\\
     x'^2+(u'v')^3+w'^4=0\\
    x''^2+w''^3+(u''v'')^4=0\\
    \end{cases}.
\end{equation}
where the patches $\mathbb{C}^4_{x,u,v,w}$, $\mathbb{C}^4_{x',u',v',w'}$ and $\mathbb{C}^4_{x'',u'',v'',w''}$ are related by the transition functions on the two $\mathbb{P}^1$'s with normal bundle $\mathcal{O}(0)\oplus\mathcal{O}(-1)\oplus\mathcal{O}(-1)$:
\begin{equation}
\label{eq:gaugingfourfold}
\begin{tikzcd}[row sep = tiny]
 \chi: \mathbb{C}^4_{x,u,v,w}\arrow{r}& \mathbb{C}^3_{x',u',v',w'}\\
(x,u,v,w)\arrow[mapsto]{r} &(x',u',v',w') = (vu,\frac{1}{u},xu,w).
\end{tikzcd}
\end{equation}
\begin{equation*}
\begin{tikzcd}[row sep = tiny]
 \tilde{\chi}: \mathbb{C}^4_{x',u',v',w'}\arrow{r}& \mathbb{C}^3_{x'',u'',v'',w''}\\
(x',u',v',w')\arrow[mapsto]{r} &(x'',u'',v'',w'') = (x',w'v',\frac{1}{v'},u'v').
\end{tikzcd}
\end{equation*}
The system \eqref{threefoldfourfold} contains three equations of type, respectively, $X_{E_6}^{x}$, $X_{E_6}^{y}$ and $X_{E_6}^{z}$. Thus, proceeding as in the previous section, we can write down a new 5d duality described by \eqref{systemfourfold} in the singular phase (again, the ranks can be quickly checked to be matching, as ensured by the geometric construction of section \ref{sec:generalrecipe}):
\begin{figure}[H]
\centering
\begin{subfigure}[c]{.9\textwidth}
\centering
  \scalemath{0.9}{  \begin{tikzpicture}
        \draw[thick] (0,0) circle (0.7);
        \node at (0,0) {\small$\mathfrak{su}(13)$};
        \draw[thick] (0.8,0)--(1.4,0);
        \draw[thick] (2.2,0) circle (0.7);
        \node at (2.2,0) {\small$\mathfrak{su}(9)$};
        \draw[thick] (-0.8,0)--(-1.4,0);
        \draw[thick] (-2.2,0) circle (0.7);
        \node at (-2.2,0) {\small$\mathfrak{su}(9)$};
        \draw[thick] (0,0.8)--(0,1.4);
        \draw[thick] (0,2.2) circle (0.7);
        \node at (0,2.2) {\small$\mathfrak{su}(7)$};
        \draw[thick] (3,0)--(3.6,0);
        \draw[thick] (4.4,0) circle (0.7);
         \draw[thick] (-3,0)--(-3.6,0);
        \node at (4.4,0) {\small$\mathfrak{su}(5)$};
         \draw[thick] (-4.4,0) circle (0.7);
        \node at (-4.4,0) {\small$\mathfrak{su}(5)$};
         \draw[thick] (5.2,0)--(5.8,0);
         \draw[thick] (0,-0.8)--(0,-1.4);
            \draw[thick] (5.9,0.7)--(7.3,0.7)--(7.3,-0.7)--(5.9,-0.7)--cycle;
            \draw[thick] (-5.2,0)--(-5.8,0);
            \draw[thick] (-5.9,0.7)--(-7.3,0.7)--(-7.3,-0.7)--(-5.9,-0.7)--cycle;
            \draw[thick] (0.7,3.7)--(0.7,5.1)--(-0.7,5.1)--(-0.7,3.7)--cycle;
            \draw[thick] (0.7,-1.5)--(0.7,-2.9)--(-0.7,-2.9)--(-0.7,-1.5)--cycle;
            \draw[thick] (0,3.0)--(0,3.6);
        \node at (0,4.4) {1};
        \node at (6.6,0) {1};
        \node at (-6.6,0) {1};
        \node at (0,-2.2) {1};
        \end{tikzpicture}}
    \caption*{}
     
    \end{subfigure}
  \begin{subfigure}[c]{0.9\textwidth}
    \centering
    $\displaystyle\left\updownarrow\vphantom{\int_A^B}\right.$ 
    \end{subfigure}
       \begin{subfigure}[c]{.9\textwidth}
    \centering
       \vspace{0.8cm}
\scalemath{1}{\begin{tikzpicture}
        \draw[thick] (-0.4,0) circle (0.7);
        \draw[thick] (2.6,0) circle (0.7);
        \node at (-0.4,0) {$E_6$};
        \draw[thick] (0.4,0)--(1.8,0);
        \draw[thick] (3.4,0)--(4.8,0);
        \draw[thick] (-1.2,0)--(-2.6,0);
        \node at (1.1,0.3) {\scalebox{0.8}{$(E_6,E_6)_y$}};
        \node at (-1.9,0.3) {\scalebox{0.8}{$(E_6,E_6)_x$}};
        \node at (4.1,0.3) {\scalebox{0.8}{$(E_6,E_6)_z$}};
        \draw[thick] (4.9,0.7)--(6.3,0.7)--(6.3,-0.7)--(4.9,-0.7)--cycle;
        \draw[thick] (-2.7,0.7)--(-4.1,0.7)--(-4.1,-0.7)--(-2.7,-0.7)--cycle;
        \node at (2.6,0) {$E_6$};
        \node at (-3.4,0) {$E_6$};
        \node at (5.6,0) {$E_6$};
        \end{tikzpicture}}
    \end{subfigure}
    \caption{Duality between quiver of shape $E_6$ with $\mathfrak{su}$ nodes and linear quiver with mixed edges of type $(E_6,E_6)_x$, $(E_6,E_6)_y$ and $(E_6,E_6)_z$.}
     \label{fig:dualfourfold}
\end{figure}
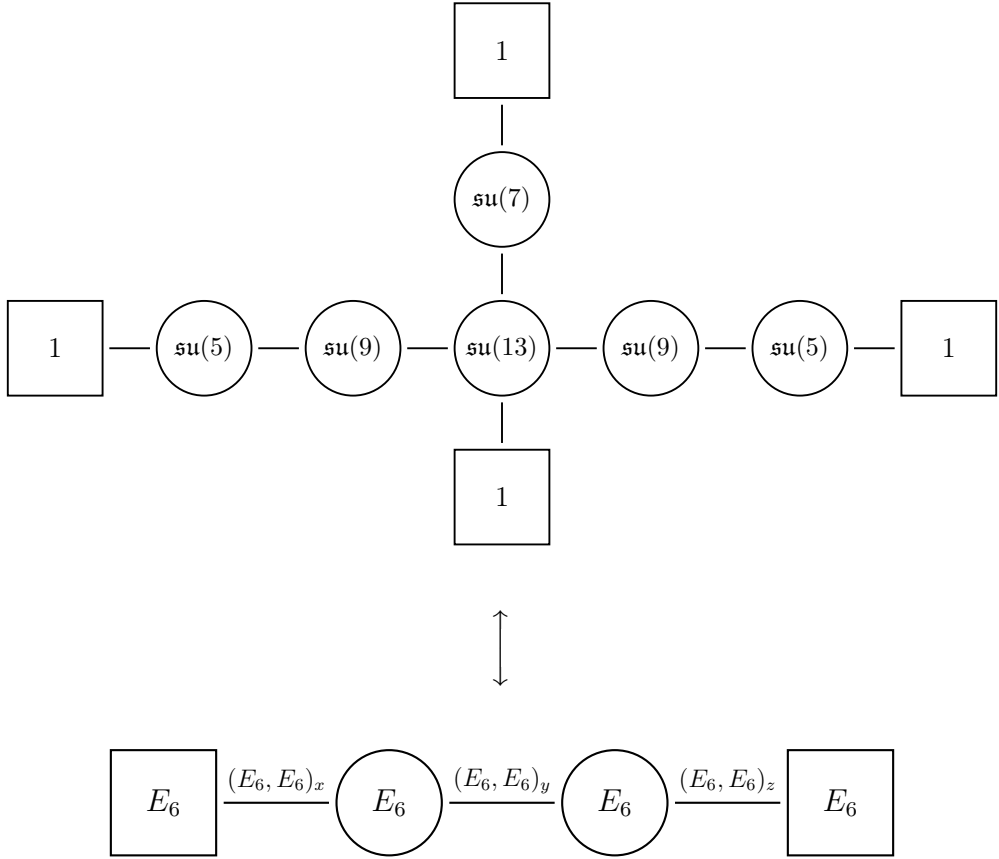  
Notice that in Figure \ref{fig:dualfourfold} we have three distinct species of $(E_6,E_6)_\bullet$ edges. Of course, the roles of $x,y,z$ can be swapped, obtaining equivalent quivers, and leaving the quiver at the top unchanged. As usual, the UV flavor symmetry is at least $E_6 \times E_6 \times \mathfrak{u}(1)^3$.\\

\indent This example leads us to the most general case, encoding all the 5d dualities previously presented in this work:
\begin{equation}\label{systemfourfoldgen}
    \begin{cases}
        P_{\mathfrak{g}}(x,y,z)= 0\\
        UV = x^hy^nz^k \\
    \end{cases},
\end{equation}
where $P_{\mathfrak{g}}(x,y,z)$ is a Du Val singularity and $h,n,k$ are integers \textit{greater} or equal to zero.\footnote{ \ Although at least one of them \textit{must not} vanish.}
Writing down the corresponding quivers for all cases is straightforward on both sides of the duality, following the recipe we have just outlined.\footnote{ \ As previously mentioned, one feels the urge to further generalize the second equation of \eqref{systemfourfoldgen} to a (possibly deformed) singularity of type $D$ or $E$. Unfortunately, in this case the complete intersection \eqref{systemfourfoldgen} would not respect the finite distance condition of \cite{Gukov_2000}. Physically, this condition, for quasi-homogeneous singularities, is also crucial to guarantee the existence of an R-symmetry for the five-dimensional theory obtained from the geometric engineering, and hence to obtain $\mathcal N = 1$ supersymmetry in 5d.}\\
\indent We conclude by noticing that three or more species of different conformal matters can not be glued together on a single compact $\mathbb P^1$, in the spirit of \cref{sec:linearquiversI}.\\

\subsection{Complete list of the five-dimensional dualities}
\label{sec:quiverdualities}
In the previous sections, we have analyzed various instances of 5d SCFTs admitting a low-energy quiver gauge theory description, arising from M-theory on the singular threefolds \eqref{systemfourfoldgen}.
A notable subcase is given by $h=n=0$ and $k=1$ (and analogous permutations of $h,n,k$). This choice yields the 5d conformal matter that has played a paramount part in this work.\\ 
\indent In this section we collect the general expressions governing the new 5d dualities arising from \eqref{systemfourfoldgen} for \textit{generic} $h,n,k$ (taken to be integers equal or greater than 0). These are dualities between low-energy quiver gauge theories of shape $\mathfrak{g}$ with $\mathfrak{su}$ nodes and linear quivers with nodes of type $A,D,E$, with edges expressed in terms of 5d conformal matter.
In the tables below we list these new 5d dualities, first giving the general shape of the quivers with $\mathfrak{su}(m_i)$ nodes, and then specifying a vector that encodes the number $m_i$. We further write down the UV flavor symmetry $\mathcal{G}_{UV}$ enjoyed by the theories, for $h,n,k\neq 0$ (as remarked in the main text, the total UV flavor symmetry will in general contain $\mathcal{G}_{UV}$, and exactly match its rank).\\
\indent In practice, given a choice of $\mathfrak{g}$ and of the exponents $h,n,k$ in \eqref{systemfourfoldgen}, one proceeds in two steps in order to construct the 5d duality:
\begin{itemize}
    \item To write down the low-energy quiver side with $\mathfrak{su}$ nodes, one uses as basic building blocks the quivers exhibited in section \ref{sec:quivers}, arising from the geometries $X_{\mathfrak{g}}^{\bullet}$. Given a labelling of the nodes of these quivers, we can specify the corresponding $\mathfrak{su}(m_i)$ group for each node with a vector $v_{\mathfrak{g}}^{\bullet}$, whose entries are nothing but the $m_i$.
    Then, one obtains the vector $v_{\mathfrak{g}}^{(h,n,k)}$ that specifies the quiver \eqref{systemfourfoldgen} as follows. For each factor of $\bullet = x,y,z$ on the right-hand side of the equation $UV = x^hy^nz^k$,  we \textit{linearly} add the vectors of the basic building blocks, with the appropriate multiplicity given by the exponents $h,n,k$:
    \begin{equation}\label{vector addition}
        v_{\mathfrak{g}}^{(h,n,k)} = h v_{\mathfrak{g}}^x + n v_{\mathfrak{g}}^y + k v_{\mathfrak{g}}^z.
    \end{equation}
    Formula \eqref{vector addition} is an immediate consequence of the blowup maps for the algebras $\mathfrak{g}$ presented in appendix \ref{sec:duvalres}. Let us add a crucial physical remark:\\
    \textit{By construction, all the low-energy quiver gauge theories defined by the vector $v_{\mathfrak{g}}^{(h,n,k)}$ flow to a 5d SCFT in the UV sporting at least $\mathfrak{g}\times\mathfrak{g}$ flavor symmetry.}
    \item The other side of the duality, involving edges with 5d conformal matter, can be written down as a \textit{linear} quiver with two outermost flavor nodes of type $\mathfrak{g}$. Between them there are (arranged in a linear fashion):
    \begin{itemize}
        \item $h$ edges of type $(\mathfrak g,\mathfrak g)_x$,
        \item $n$ edges of type $(\mathfrak g,\mathfrak g)_y$,
        \item $k$ edges of type $(\mathfrak g,\mathfrak g)_z$,
        \item $h+n+k-1$ gauge nodes of type $\mathfrak{g}$.
    \end{itemize}
    For $\mathfrak{g}=A$ we consider only $(\mathfrak{g},\mathfrak{g})_x$ edges (or equivalently $(\mathfrak{g},\mathfrak{g})_y$ edges), as the $(\mathfrak{g},\mathfrak{g})_z$ one is simply a bifundamental theory with no compact divisors.
    As we have mentioned in section \ref{sec:exceptional quivers}, \textit{the order of the edges does not matter}, namely we can freely exchange edges of different conformal matter species (as long as we keep the number that each species appears fixed), since this operation does not change the corresponding singular geometry.
\end{itemize}
Let's clarify this apparently involved construction by listing the explicit dualities for $\mathfrak{g} = A_k,D_k,E_6,E_7,E_8$. We specify the shape of the quivers, along with a labelling of their nodes, and the vectors $v_{\mathfrak{g}}^{x},v_{\mathfrak{g}}^{y},v_{\mathfrak{g}}^{z},v_{\mathfrak{g}}^{(h,n,k)}$. Vector entries pertaining to gauge nodes of type $\mathfrak{su}(m_i)$ contain the number $m_i$. For vector entries referring to a flavor node, we specify the number of flavors, say $\#$, and add a subscript $\#_f$ to distinguish them from gauge ones. A null entry in the vector means that the node is not present.\\
\indent We note that, as in the case of the $(\mathfrak g,\mathfrak g)_\bullet$ theories, also the generalized quivers built from bifundamental conformal matter display no electric one-form symmetry nor magnetic two-form symmetry in five dimensions. Indeed, as we show in the remaining part of this section, all the generalized quivers can be dualized to quivers with nodes of type $\mathfrak{su}$ that always display fundamental matter that explicitly breaks the central symmetry of the gauge nodes. 

\newpage

\vspace{-1cm}
\centering
$
}
    \end{figure}
    \end{minipage}\\

\vspace{0.5cm}
\begin{minipage}{0.16\textwidth}
\textbf{UV flavor symmetry: } 
    \end{minipage}
\begin{minipage}{0.8\textwidth}
 $\quad \mathcal{G}_{UV} = E_8\times E_8 \times \mathfrak{su}(k) \times \mathfrak{su}(h) \times \mathfrak{su}(n)  \times \mathfrak{u}(1)^2$
    \end{minipage}\\

\justifying

\newpage

\section{Conclusions and Outlook}\label{sec:conclusions}
In this work we have introduced new 5d SCFTs featuring ``5d conformal matter'' with $\mathfrak{g}\times\mathfrak{g}$ flavor symmetry, for $\mathfrak{g}=A_k,D_k,E_6,E_7,E_8$, following the spirit of M-theory geometric engineering. This has led us to identify different \textit{species} of $\mathfrak{g}\times\mathfrak{g}$ 5d conformal matter, that must therefore be decorated with an additional label, in order to keep track of the threefold that engineered them in the first place. Theories of different species (but with the same $\mathfrak{g}\times\mathfrak{g}$ flavor in the UV) have different Coulomb branch dimensions, and thus are genuinely inequivalent.\\
\indent Crucially, the 5d conformal matter theories that we have constructed \textit{do not} straightforwardly descend from a simple KK reduction of their 6d cousins, except for selected cases that we have highlighted in the text.\\
\indent Furthermore, the theories we have introduced in this work provide the basic building blocks to construct \textit{linear} quiver gauge theories with gauge nodes of type $A$, $D$ and $E$, much alike the well-known 6d conformal matter theories. The additional feature, in the 5d context, is that such linear quivers can be built concatenating elementary building blocks of \textit{different} 5d conformal matter species, along with more standard cases with conformal matter of the same species. The entirety of these linear quiver gauge theories has a neat interpretation in terms of M-theory geometric engineering on a class of threefolds built as complete intersections: thanks to this geometric interpretation, we introduced a plethora of new 5d dualities between linear quiver gauge theories with conformal matter of type $\mathfrak{g}\times\mathfrak{g}$ (of all possible species) and quivers of type $\mathfrak{g}$ with $\mathfrak{su}$ nodes. This expands and fully generalizes the results of \cite{Collinucci:2020jqd,Ohmori:2015pia}, providing a unified perspective on these works.\\
\indent Plenty of aspects pertaining to 5d conformal matter remain to be investigated. In particular, the following directions are subject of current investigation \cite{DMDZGSinprep}:
\begin{itemize}
\item Classification program for 5d conformal matter of type $\mathfrak{g}\times\mathfrak{g}'$, with $\mathfrak{g}\neq\mathfrak{g}'$. These can be easily engineered through M-theory compactified on hypersurface threefolds with collisions of suitable non-compact singularities,  such as:
\begin{equation*}
    \begin{array}{ll}
    D_k \times D_n: \quad\quad & x^2+uvy^2+u^{k-1}v^{n-1}=0 \\
    D_4 \times E_7: \quad\quad & x^2 + z u^2v^3 + z^3=0 \\
        E_6 \times E_7: \quad\quad & x^2+y^3+u^3v^3y+u^4v^5=0 \\
         E_6 \times E_8: \quad\quad & x^2+y^3+u^4v^5=0 \\
         E_7 \times E_8: \quad\quad & x^2+y^3+u^3v^4y+u^5v^5=0 \\
    \end{array}
\end{equation*}
Their resolution requires slightly different (yet straightforward) methods compared to the $\mathfrak{g}=\mathfrak{g}'$ examples presented in this work, but notice that all the mixed $D-E$ and $E-E$ examples descend from 6d conformal matter, as they are in Weierstrass form. It would be interesting to understand more thoroughly if there exists $\mathfrak{g}\times\mathfrak{g}'$ conformal matter of intrinsic 5d origin: hints in this direction can be found in a geometric construction that we will present in upcoming work.
\item Determine whether the structure of 5d conformal matter theories is more akin to its 6d parents, or to its 4d descendants; namely if we can construct, respectively, only \textit{linear} 5d quiver gauge theories built out of bifundamental conformal matter, or also more general quivers, with e.g.\ trivalent nodes. This entails figuring out whether a 5d trinion theory displaying at least a factor $\mathfrak g \times \mathfrak g \times \mathfrak g$ with $\mathfrak g$ of type $D_k$ or $E_6,E_7,E_8$ exists. In five-dimensions, toric orbifold trinions with gauge algebra $A_N \times A_N \times A_N$ can be built as $\mathbb C^3/\left(\mathbb Z_{N+1} \times \mathbb Z_{N+1}\right)$ \cite{Benini:2009gi}, suggesting that this might be possible also for other types of ADE algebras. Indeed, one can check that, for example, the singularity
\begin{equation}
    \label{eq:trinionD4z}
    x^2  + z u^2 v^2(u-v)^2 + z^3= 0 
\end{equation}
satisfies the \cite{Gukov_2000} criterium for being at finite distance in the moduli space of Calabi-Yau threefolds, and sports three lines of $D_4$ singularities, suggesting a flavor symmetry with at least a $D_{4} \times D_{4} \times D_{4}$ subgroup.
\item Determine the circle reduction to four spacetime dimensions of the 5d conformal matter theories, for all \textit{species}. It turns out that all such 5d theories admit a class $\mathcal{S}$ construction that describes their 4d descendant. A related aspect is to explicitly compute the Higgs branches of the 5d conformal matter SCFTs, understanding the action of the $\mathfrak{g} \times \mathfrak{g}$ flavor groups. At the moment, this looks like an interesting challenge on its own, since the dynamics of M-theory on non-isolated singularities is still to be completely clarified. Nonetheless, its quaternionic dimension can be explicitly computed and matched with the class $\mathcal{S}$ descendants. We will present all these aspects in full details in future work.
\item  Finally, on a related note, one might want to precisely track down the influence of T-branes on 5d conformal matter theories: similarly to what happens in the 6d case, our threefold geometries can in general be supplemented by T-brane data, classified by nilpotent orbits of the appropriate ADE algebra. This can have a bipartite effect: on the one hand, the flavor group of 5d conformal matter theories $\mathfrak{g}\times \mathfrak{g}$ might be partially broken; in a complementary fashion, switching on a T-brane background for the gauge nodes might disrupt part of the gauge symmetry. As usual, both cases keep the geometry intact, and only affect symmetries and moduli spaces dimensions. Notice that the effect of T-branes on the flavor group can be neatly given a Type IIA interpretation in terms of bound states of D6-branes for a subset of the $\mathfrak{g}=A$ cases \cite{DeMarco:2021try,DeMarco:2022dgh,Bourget:2023wlb} (namely, the ones that only yield bifundamental matter with trivial Coulomb branch).
Pursuing this path for $D\times D$ and $E\times E$ 5d conformal matter theories would constitute a further interesting direction for future inquiry.
\end{itemize}

\section*{Acknowledgements}

MDZ thanks David R. Morrison and Jonathan J. Heckman for several illuminating discussions in the period 2015-17 which inspired this work, as well as the Simons Collaboration on Special Holonomy in Geometry, Analysis, and Physics for many inspiring discussions in the period 2018-2020, which served as further motivation for this paper. We thank Bobby Acharya, Sergio Benvenuti, Amihay Hanany, Lotte Hollands, Marco Fazzi, Alessandro Tomasiello, Sakura Sch\"afer-Nameki and Roberto Valandro for discussions. The work of MDZ and MDM has received funding from the European Research Council (ERC) under the European Union’s Horizon 2020 research and innovation program (grant agreement No. 851931). MDZ also acknowledges support from the Simons Foundation Grant \#888984 (Simons Collaboration on Global Categorical Symmetries). AS wishes to thank Accademia dei Lincei for financial support, as well as the Department of Mathematics and the Department of Physics of Uppsala University for hospitality during the realization of this work. AS further acknowledges funding from the Simons Collaboration on Special Holonomy. MDM thanks the ``Fondazione Angelo Della Riccia'' for financial support and acknowledges the kind hospitality of the Department of Mathematics of Uppsala University during the realization of this work.
MG was partially supported by the project PRIN 2020 \lq\lq Squarefree Gr\"obner degenerations, special varieties and related topics\rq\rq~(MUR, project number 2020355B8Y).
MDZ thanks the organizers of the \textit{XIII Workshop on Geometric Correspondences of Gauge Theories}, held at the Insitute for Geometry and Physics in Trieste, Italy, in June 2023 as well as the SwissMap research station in Les Diableret for hospitality during the completion of this work.

\newpage

\appendix\label{appendix A}

\section{Atlases for the crepant resolutions of Du Val Singularities}
  \label{sec:duvalres}
Although the theory of Du Val singularities is classic and their crepant
resolution is very well understood \cite{reidyoungpersonsing}, we present
here special atlases on the resolutions which will help us in the
computations of the main text. We tackle the $A_k,D_4,D_5,E_6,E_7,E_8$
cases explicitly (all the other $D_k$ cases can be examined analogously).
We will start recalling the definition of crepant resolution of a
singularity. 
\begin{definition}
Let $X$ be a quasi-projective variety and let $\omega_X$ be its canonical bundle. A crepant resolution of singularities is the datum of a birational  projective morphism $\varepsilon:\tilde X \rightarrow X$ such that $\tilde X$ is smooth and $\omega_{\tilde X}= \varepsilon^* \omega_X$. Sometimes, with abuse of notation, we will call resolution the variety $\tilde X$.

We will say that a projective-birational morphism is a partial crepant resolution if $\omega_{\tilde X}= \varepsilon^* \omega_X$, without asking $\tilde X$ to be smooth.
\end{definition}

\subsection{Crepant resolution of \texorpdfstring{$A_k$}{Ak} singularities}
\label{sec:resAkres}
Let us start with the $A_k$ case. Let $Y_{A_k}$ be the surface in \Cref{tab:singularities}, rewritten as:
\begin{equation}
    xy =z^{k+1}.
\end{equation}
Its crepant resolution $\tilde{Y}_{A_k}$ is covered by $k+1$ charts $U_j\cong \mathbb C^2$ for $j=1,\ldots,k+1$. Let $\varphi:\tilde{Y}_{A_k}\rightarrow {Y}_{A_k}$ be the resolution morphism, let $\varphi_j$ be its restriction to the $j$-th chart, for $j=1,\ldots,k+1$, and let $a_j,b_j$ be affine coordinates on $U_j$. We can write (see \cite{COSTELLAZIONI} for more details) $\varphi_j$ as follows:
\begin{equation}\label{eq:resAk}
    \begin{tikzcd}[row sep=tiny]
      U_j\arrow{r}{\varphi_j=\varphi|_{U_j}} &Y_{A_k}\\
      \mbox{\scriptsize$\left(a_j,b_j\right)$}\arrow[mapsto]{r} &\mbox{\scriptsize{$(x,y,z)=\left(a_j^{k-j+2}b_j^{k+1-j},a_j^{j-1}b_j^{j},a_jb_j\right)$}}
\end{tikzcd}
\end{equation}
 The transition functions between  $(a_j,b_j)$ and $(a_{j+1},b_{j+1})$,  are obtained from \eqref{eq:resAk} and read as:
    \begin{equation}
        \label{eq:transitionfunctionsAk}
        a_{j} = \frac{1}{b_{j+1}}, \quad b_{j} = a_{j+1}b^2_{j+1}.
    \end{equation}
As expected from a resolved Du Val singularity, from
\eqref{eq:transitionfunctionsAk} we recognize the transition functions of
$\mathcal O_{\mathbb P^1}(-2)$, with $(a_j,b_{j+1})$ spanning the basis of
the line bundle and $(b_j,a_{j+1})$ spanning the fiber direction.

It is worth, for the resolution of the remaining Du Val singularities,
to further comment on \eqref{eq:transitionfunctionsAk}. Let us consider, for
example, the $Y_{A_2}$ singularity. Its resolution $\tilde{Y}_{A_2}$ is covered by three charts,
$U_1,U_2,U_3$, all isomorphic to $\mathbb C^2$, glued as \eqref{eq:transitionfunctionsAk}. We depict such charts in Figure \ref{fig:A2charts}.
\begin{figure}[H]
    \centering
    \hspace{-0.5cm}$\overbrace{\begin{tikzpicture}
        \draw[thick,red] (-2.6,-0.3) circle (0.3);
        \draw[thick,blue] (-3.8,-0.3) circle (0.3);
        \draw[thick] (-3,-0.3)--(-3.4,-0.3);
    \end{tikzpicture}}^{A_2 \text{ Dynkin diagram}}$\\ \vspace{0.2cm}
    \begin{tikzpicture}
        \draw[thick,gray] (0,0) circle (1.2);
        \draw[thick,blue] (-0.6,0) circle (0.3);
        \draw[fill=gray] (-0.3,0) circle (0.05);
        \node at (0,-1.7) {$U_1$};
    \end{tikzpicture}
    \hspace{0.3cm}
    \begin{tikzpicture}
        \draw[thick,gray] (0,0) circle (1.2);
        \draw[thick,blue] (-0.6,0) circle (0.3);
        \draw[fill=gray] (-0.9,0) circle (0.05);
        \draw[thick] (-0.2,0)--(0.2,0);
        \draw[thick,red] (0.6,0) circle (0.3);
        \draw[fill=gray] (0.9,0) circle (0.05);
        \node at (0,-1.7) {$U_2$};
    \end{tikzpicture}
    \hspace{0.3cm}
    \begin{tikzpicture}
         \draw[thick,gray] (0,0) circle (1.2);
        \draw[thick,red] (0.6,0) circle (0.3);
        \draw[fill=gray] (0.3,0) circle (0.05);
        \node at (0,-1.7) {$U_3$};
    \end{tikzpicture}
    \hspace{0.3cm}
    \caption{Our preferred choice for an atlas covering the resolution of $A_2$. Each large gray circle represents a chart. Inside the large circles we indicate, using a color-code, which $\mathbb P^1$'s of the resolution of the $A_2$ Dynkin diagram are visible in the  considered chart. Gray dots represent points of the $\mathbb{P}^1$'s that are at infinity in the given chart.}
    \label{fig:A2charts}
\end{figure}
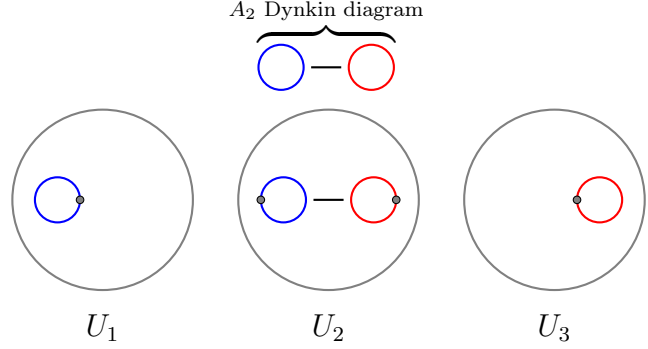

The transition functions
for $\tilde{Y}_{A_2}$ are
\begin{equation}
  \label{eq:transitionfunctionA2}
  a_1 = \frac{1}{b_2}, \quad b_1 = a_2 b_2^2 
\end{equation}
between the charts $U_1$ and $U_2$ and
\begin{equation}
  \label{eq:transitionfunctionA2II}
  a_2 = \frac{1}{b_3}, \quad b_2 = a_3 b_3^2 
\end{equation}
between\footnote{ \ We note that the charts $U_j$ are dense in the resolved
  $\tilde{Y}_{A_2}$, hence $U_1 \cap U_3 \neq \emptyset$. The transition function between $U_1$ and $U_3$ can be
  obtained solving \eqref{eq:transitionfunctionA2} to get $(a_2,b_2)$ in
terms of $(a_1,b_1)$, and plugging the resulting expressions \eqref{eq:transitionfunctionA2II}.} the charts $U_2$ and $U_3$.
$U_1 \cup U_2$ coincides with the total space of the normal bundle
$\mathcal O_{\mathbb P^1}(-2)$ of the blue $\mathbb P^1_{\text{blue}}$ in
Figure \ref{fig:A2charts} and the zero section is obtained
setting $b_1 = a_2 = 0$ and leaving $a_1$ and $b_2$ free to vary. In order
to obtain the total space of  $Y_{A_2}$, we need to compactify
one of the fibers of $U_1 \cup U_2 \cong   \mathcal O_{\mathbb
  P^1}(-2)$: this can be achieved using \eqref{eq:transitionfunctionA2}. \eqref{eq:transitionfunctionA2} tells us that the fiber that we want
to compactify is the one over the point\footnote{ \ We can show this as
  follows: for each fixed fiber over a point $b_2 \in \mathbb P^1_{\text{blue}}$, the point at
  infinity of the fiber is obtained sending $a_2 \to \infty$, or,
  equivalently, $b_3 \to 0$. By using the second equation of
  \eqref{eq:transitionfunctionA2II}, we see that this
  is possible just for $b_2= 0$ (since $b_3 =0$ gives $b_2 =0)$ by \eqref{eq:transitionfunctionA2II}).  Hence, the only fiber of $U_1 \cup
  U_2 \cong \mathbb P^1_{\text{blue}}$ that gets compactified (becoming
  $\mathbb P^1_{\text{red}}$ in Figure \ref{fig:A2charts}) is the one
  over $b_2 = 0 \in \mathbb P^1_{\text{blue}}$.} $b_2 = 0$ in the $\mathbb P^1_{\text{blue}}$. If we want to
compactify another fiber over, say, $b_2 = p \neq 0 \in \mathbb P^1_{\text{blue}}$
we simply shift $b_2 \to b_2 - p$ in the second equation of
\eqref{eq:transitionfunctionA2II}.
\\
In the remaining part of this appendix we will resolve the $Y_{\mathfrak
  g}$ singularities with the following recipe:
\begin{enumerate}
\item by standard blowup techniques, we will first find a partial resolution
  $\varphi_{\text{part}}: Y^{\text{part}}_{\mathfrak g}\to Y_{\mathfrak g}$
  that leaves
  a residual $A_k$ singularity.  We will then compose
  $\varphi_{\text{part}}$ with \eqref{eq:resAk} to get a (full) resolution
  $\varphi: \tilde{Y}_{\mathfrak g} \to Y_{\mathfrak g}$ of $Y_{\mathfrak g}$. In particular, one of
  the charts $U_1 \ni (a_1,b_1) $ of $\tilde{Y}_{\mathfrak g}$ will be
  isomorphic to $\mathbb C^2$ and we will have an explicit expression for $\varphi\rvert_{U_1}$
  giving $(x,y,z)$ in terms of $(a_1,b_1)$.
  \item  At this point, not all the subsets covering $\tilde{Y}_{\mathfrak
      g}$ will be (in general) isomorphic to $\mathbb C^2$: they might be
    (e.g.) defined as hypersurfaces of $\mathbb C^3$. To obtain an atlas with all
    the charts isomorphic to $\mathbb C^2$, we will glue $(a_1,b_1)$ with
    new coordinates $(a_j,b_j) \in U_j \cong \mathbb C^2$ in such a way
    that $\bigcup_{j=0}^{\text{rank}(\mathfrak g)}U_j \cong \tilde{Y}_{\mathfrak g}$. This is equivalent to
    requiring that $\bigcup_{j=0}^{\text{rank}(\mathfrak g)}U_j$ contains exactly
    $\text{rank}(\mathfrak{g})$ $\mathbb P^1$'s, each one with normal bundle
    $\mathcal O(-2)$ and intersecting each other according
    to the Dynkin diagram of $\mathfrak g$.   
\end{enumerate}

This apparently involved procedure sparks an immediate question: why bother with it, when other recipes give an equally valid resolution of ADE singularities? The answer is clear: \textit{at the end of the day, this resolution technique produces a way to read off the low-energy gauge theory quivers presented in section \ref{sec:quivers} in an automatic way, as we have explained in the example of Table \ref{tab:localmodelsE6z}.     }

\subsection{Crepant resolution of \texorpdfstring{$D_4$}{D4} singularity}
\label{sec:resD4res}
 To find the resolution $\varphi: \tilde{Y}_{D_4} \to Y_{D_4}$ we start with the first
 step of the procedure outlined at the end of \cref{sec:resAkres}. We construct a  resolution by first blowing up along the
non-Cartier\footnote{ \ We say that $D$ is a non-Cartier divisor if we can not
  find an atlas for the $Y_{D_4}$ surface such that, on each chart, we can
  describe $D$ as the zero-locus of a certain polynomial.}  divisor $x = z
= 0$ leaving a residual $A_3$ singularity. This will give the map
$\varphi_{\text{part}}: Y^{\text{part}}_{D_4}\to Y_{D_4}$ for the $Y_{D_4}$
singularity. We can then apply
\eqref{eq:resAk} to obtain a resolution of $Y_{D_4}$. 
The total space of the resolution is covered by five charts $(V_0, V_1,V_2,V_3,V_4)$. 
The resolution map restricted to $V_{0}$ is
\begin{equation}\label{eq:D4resI}
    \begin{array}{c}
         \begin{tikzcd}[row sep=tiny]
      V_{0}\cong \Set{a_{0}^2c_{0} + b_{0} + b_0^2c_0^3=0}\arrow{r}{\varphi_0=\varphi|_{V_0}} &Y_{D_4}\\
      \mbox{\scriptsize$\left(a_{0},b_{0},c_{0}\right)$}\arrow[mapsto]{r} &\mbox{\scriptsize{$(x,y,z)=\left(b_0,a_{0},b_0c_{0}\right)$}}
\end{tikzcd}\\
\end{array}
\end{equation}
The restrictions of the resolution map to $V_1, \ldots V_4$ (that cover the $\mathbb P^1$'s associated to the $A_3$ subalgebra represented by the blue, red and black $\mathbb{P}^1$'s in Figure \ref{fig:D4charts}) are 
\begin{equation}
\label{eq:D4resII}
\begin{array}{c}
         \begin{tikzcd}[row sep=tiny]
      V_j\arrow{r}{\varphi|_{V_j}} &Y_{D_4}\\
      \mbox{\scriptsize$\left(a_j,b_j\right)$}\arrow[mapsto]{r} & \scalebox{0.7}{$(x,y,z)=\left(-\frac{i c \left(a-b-2 i c^2\right)}{\sqrt{2}},\frac{a+b}{2},-c^2-\frac{1}{2} i (a-b)\right)$},
\end{tikzcd}
    \end{array} 
\end{equation}

where $(a,b,c)$ depends on $(a_j,b_j)$ according to
\eqref{eq:resAk}. Unpleasantly, the chart $V_0$ is described by three
variables, $(a_0,b_0,c_0)$ constrained by the equation $a_{0}^2c_{0} +
b_{0} + b_0^2c_0^3=0$. To fix this and obtain a more elegant presentation, we have to go through the second step of the procedure
outlined at the end of \cref{sec:resAkres}. To this end, we repackage the information contained in \eqref{eq:D4resI},
\eqref{eq:D4resII} constructing  a local geometry  consisting of four
$\mathbb P^1$, intersecting according to the $D_4$ Dynkin diagram, each
with normal bundle $\mathcal O(-2)$, and identifying it with $\tilde{Y}_{D_4}$. This is achieved considering five charts:
$U_0,...,U_4$, with $U_j \cong V_j$ for $j>0$, and $U_{0} \cong \mathbb
C^2 \ni (a_0,b_0)$. We then glue together  $U_j$, with $j>0$, according to
\eqref{eq:transitionfunctionsAk}, and glue $U_0$ to $U_2$ with
the following transition function:
\begin{equation}
  \label{eq:newpatchD4}
  a_0 = \frac{1}{b_2}, \quad b_0 = b_2^2(a_2-i),
\end{equation}
with $i^2 = -1$.
We have depicted our construction in Figure \ref{fig:D4charts}: in the upper part we depict the usual $D_4$ Dynkin diagram, and below the individual charts are represented, along with the curves that are visible in each chart. Points at infinity in some chart are represented by marked dashed dots in the corresponding $\mathbb{P}^1$.
\begin{figure}[H]
    \centering
    $\overbrace{\begin{tikzpicture}
        \draw[thick,red] (0,-0.3) circle (0.3);
        \draw[thick] (0,0.9) circle (0.3);
        \draw[thick,blue] (-1.2,-0.3) circle (0.3);
        \draw[thick,green] (1.2,-0.3) circle (0.3);
        \draw[thick] (0,0.5)--(0,0.1);
        \draw[thick] (-0.4,-0.3)--(-0.8,-0.3);
        \draw[thick] (0.4,-0.3)--(0.8,-0.3);
    \end{tikzpicture}}^{D_4 \text{ Dynkin diagram}}$\\\vspace{0.2cm}
    \begin{tikzpicture}
        \draw[thick,gray] (0,-0.1) circle (1.2);
        \draw[thick] (0,0.5) circle (0.2);
        \draw[fill=gray] (0,0.7) circle (0.05);
        \draw[thick,green] (0.8,-0.3) circle (0.2);
        \draw[fill=gray] (0.6,-0.3) circle (0.05);
        \node at (0,-1.7) {$U_0$};
    \end{tikzpicture}
    \hspace{0.3cm}
    \begin{tikzpicture}
        \draw[thick,gray] (0,-0.1) circle (1.2);
        \draw[thick,green] (0.8,-0.3) circle (0.2);
        \draw[fill=gray] (1,-0.3) circle (0.05);
        \draw[thick,blue] (-0.8,-0.3) circle (0.2);
        \draw[fill=gray] (-0.6,-0.3) circle (0.05);
        \node at (0,-1.7) {$U_1$};
    \end{tikzpicture}
    \hspace{0.3cm}
    \begin{tikzpicture}
        \draw[thick,gray] (0,-0.1) circle (1.2);
        \draw[thick,red] (0,-0.3) circle (0.2);
        \draw[fill=gray] (0,-0.1) circle (0.05);
        \draw[thick,green] (0.8,-0.3) circle (0.2);
        \draw[fill=gray] (1,-0.3) circle (0.05);
        \draw[thick,blue] (-0.8,-0.3) circle (0.2);
        \draw[fill=gray] (-1,-0.3) circle (0.05);
        \draw[thick] (0.25,-0.3)--(0.55,-0.3);
        \draw[thick] (-0.55,-0.3)--(-0.25,-0.3);
        \node at (0,-1.7) {$U_2$};
    \end{tikzpicture}
    \hspace{0.3cm}
    \begin{tikzpicture}
        \draw[thick,gray] (0,-0.1) circle (1.2);
        \draw[thick,red] (0,-0.3) circle (0.2);
        \draw[fill=gray] (-0.2,-0.3) circle (0.05);
        \draw[thick] (0,0.5) circle (0.2);
        \draw[fill=gray] (0,0.7) circle (0.05);
        \draw[thick,green] (0.8,-0.3) circle (0.2);
        \draw[fill=gray] (1,-0.3) circle (0.05);
        \draw[thick] (0,0.25)--(0,-0.05);
        \draw[thick] (0.55,-0.3)--(0.25,-0.3);
        \node at (0,-1.7) {$U_3$};
    \end{tikzpicture}
    \hspace{0.3cm}
    \begin{tikzpicture}
        \draw[thick,gray] (0,-0.1) circle (1.2);
        \draw[thick] (0,0.5) circle (0.2);
        \draw[fill=gray] (0,0.3) circle (0.05);
        \draw[thick,green] (0.8,-0.3) circle (0.2);
        \draw[fill=gray] (1,-0.3) circle (0.05);
        \node at (0,-1.7) {$U_4$};
    \end{tikzpicture}
    \caption{Our preferred choice for an atlas covering the resolution of $D_4$. Each large gray circle represents a chart. Inside the large circles we indicate, using a color-code, which $\mathbb P^1$'s of the resolution of the $D_4$ Dynkin diagram are visible in the  considered chart. Gray dots represent points of the $\mathbb{P}^1$'s that are at infinity in the given chart.}
    \label{fig:D4charts}
\end{figure}
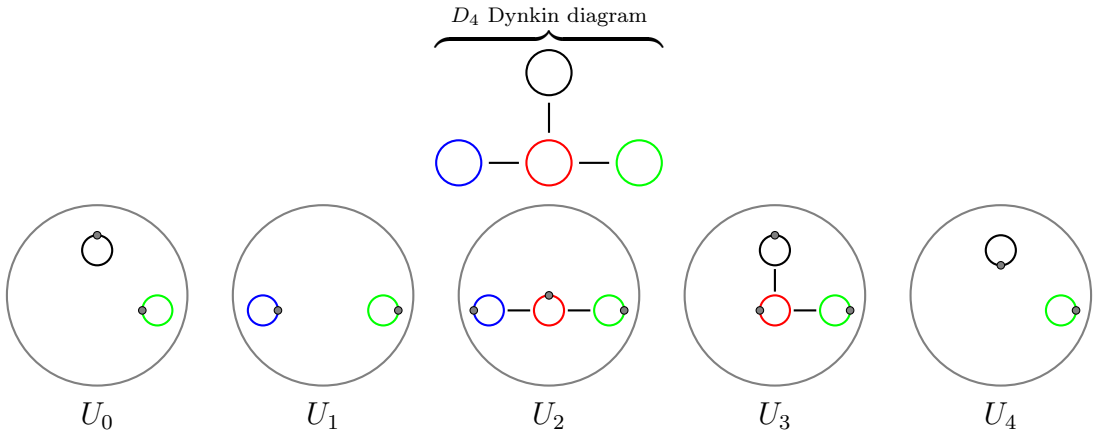

We note that, since in the second equation of \eqref{eq:newpatchD4}
appears the combination $a_2 - i$, the fiber of the normal
bundle of the $\mathbb P^1_{\text{red}}$ (corresponding to the red node of \cref{fig:D4charts}) over the point $a_2 = i \in \mathbb
P^1_{\text{red}}$ is compactified by \eqref{eq:newpatchD4}. To conclude, we
give the blowup equations in the new
coordinates $U_0,...,U_4$. The blowup maps $\varphi\rvert_{U_j}$ with $i > 0$ coincide
with the ones on the $V_j$. To obtain the blowup map over $U_0$ we insert
\eqref{eq:newpatchD4} inside the blowup map $\varphi\rvert_{U_2}$, obtaining
\begin{equation}\label{eq:newpatchD4resmap}
    \begin{array}{c}
         \begin{tikzcd}[row sep=tiny]
      U_{0}\cong \mathbb C^2\arrow{r}{\varphi_0=\varphi|_{V_0}} &Y_{D_4}\\
      \mbox{\scriptsize$\left(a_{0},b_{0}\right)$}\arrow[mapsto]{r} &\mbox{\scriptsize{$(x,y,z)=\left(-\frac{i a_0 b_0^2 \left(a_0^2
   b_0+i\right){}^2}{\sqrt{2}},\frac{1}{2} b_0
   \left(a_0^2 b_0+i\right) \left(a_0^2 b_0+2
   i\right),\frac{1}{2} a_0^2 b_0^2 \left(1-i
   a_0^2 b_0\right)\right)$}}
\end{tikzcd}\\
\end{array}
\end{equation}
We note that all the $\varphi \rvert_{U_j}$ can be obtained also considering
the expression $\varphi \rvert_{U_1}$ and 
substituting, inside $\varphi \rvert_{U_1}$, $(a_1,b_1)$ with $(a_j,b_j)$
using the transition functions.
\subsection{Crepant resolution of \texorpdfstring{$D_5$}{D5} singularity}
\label{sec:resD5res}
We will skip, from now on, the first step of the procedure outlined at the
end of \cref{sec:resAkres}, giving
\begin{enumerate}
\item an atlas $\left\{U_j\right\}$, with $j=0,\ldots \text{rank} (\mathfrak{g})$
  such that $\left(\bigcup_{j=0}^{\text{rank}(\mathfrak g)} U_j\right) \cong
  \tilde{Y}_{\mathfrak g}$ and
\item the restriction of the resolution map  to one of the charts
  of the $U_1 \in \left\{U_j\right\}$.
\end{enumerate}
The remaining $\varphi \rvert_{U_j}$ can be obtained inserting the transition
functions inside $\varphi \rvert_{U_1}$.
\\ \indent For the $Y_{D_5}$ singularity, we take the $U_1$ to cover the $\mathbb P^1$ in blue\footnote{ \ We remark that also other $\mathbb{P}^1$'s can be visible in this chart. This happens also for the resolution of the exceptional singularities.} in Figure \ref{fig:D5charts} (except for the intersection point with the next $\mathbb{P}^1$).
  \begin{figure}[H]
    \centering
   \begin{tikzpicture}
        \draw[thick] (0,-0.3) circle (0.3);
        \draw[thick] (0,0.9) circle (0.3);
        \draw[thick,blue] (-1.2,-0.3) circle (0.3);
        \draw[thick] (1.2,-0.3) circle (0.3);
        \draw[thick] (2.4,-0.3) circle (0.3);
        \draw[thick] (0,0.5)--(0,0.1);
        \draw[thick] (-0.4,-0.3)--(-0.8,-0.3);
        \draw[thick] (0.4,-0.3)--(0.8,-0.3);
        \draw[thick] (1.6,-0.3)--(2.0,-0.3);
        \draw[dashed] (-1.7,-0.8)--(-1.7,0.2)--(-0.5,0.2)--(-0.5,1.4)--(0.5,1.4)--(0.5,-0.8)--(-1.7,-0.8);
    \end{tikzpicture}
    \caption{$D_5$ Dynkin diagram. We have highlighted in blue the $\mathbb{P}^1$ that is partially covered by $U_1$ defined in the text, as well as the $A_3$ subalgebra covered by $U_j$, $j=1,\ldots,4$, encircling it with a dashed line.}
    \label{fig:D5charts}
\end{figure}

  The blowup map $\varphi \rvert_{U_1}$ is
\begin{equation}\label{eq:newpatchD5resmap}
    \begin{array}{c}
         \begin{tikzcd}[row sep=tiny]
      U_{1}\cong \mathbb C^2\arrow{r}{\varphi\rvert_{U_1}} &Y_{D_5}\\
      \mbox{\scriptsize$\left(a_{1},b_{1}\right)$}\arrow[mapsto]{r} &\mbox{\scriptsize{$(x,y,z)=\left(x\to \frac{1}{16} b^2 \left(1+i a^2 b\right) \left(a^2 b+i\right)^3,y\to \frac{1}{4} i a b^2 \left(a^2
   b+i\right)^2,z\to \frac{1}{4} i b \left(a^2 b+i\right)^2\right)$}}
\end{tikzcd}\\
\end{array}
\end{equation}
The $U_j$, with $j=1,\ldots,4$ cover the $\mathbb P^1$'s associated to the $A_3$ subalgebra of
the $D_5$ Dynkin diagram enclosed by dashed lines in Figure \ref{fig:D5charts}. We glue $U_j$, with $j=1,\ldots,4$ together using 
\eqref{eq:transitionfunctionsAk}. The remaining two rightmost nodes of the
$D_5$ Dynkin diagram are covered using coordinates
$(a_0,b_0) \in U_0$ and $(a_{-1},b_{-1}) \in U_1$ glued with
\begin{equation}
  \label{eq:transfuncttailD5}
  a_0= \frac{1}{b_2}, \quad b_0 = (i + a_2)b_2^2,
\end{equation}
with $i$ the imaginary unit and 
\begin{equation}
  \label{eq:transfuncttailD5II}
  a_{-1}= \frac{1}{b_0}, \quad b_{-1} = a_0b_0^2.
\end{equation}

\subsection{Crepant resolution of the \texorpdfstring{$E_6$}{E6}, \texorpdfstring{$E_7$}{E7} and \texorpdfstring{$E_8$}{E8} singularities}
\label{sec:resE6E7E8res}
\indent For the $Y_{E_6}$ singularity, we take $U_1$ to cover the blue $\mathbb P^1$ of figure \ref{fig:E6charts} (except the intersection point with the next $\mathbb{P}^1$).
    \begin{figure}[H]
    \centering
   \begin{tikzpicture}
        \draw[thick] (0,-0.3) circle (0.3);
        \draw[thick] (0,0.9) circle (0.3);
        \draw[thick] (-1.2,-0.3) circle (0.3);
        \draw[thick,blue] (-2.4,-0.3) circle (0.3);
        \draw[thick] (1.2,-0.3) circle (0.3);
        \draw[thick] (2.4,-0.3) circle (0.3);
        \draw[thick] (0,0.5)--(0,0.1);
        \draw[thick] (-0.4,-0.3)--(-0.8,-0.3);
        \draw[thick] (0.4,-0.3)--(0.8,-0.3);
        \draw[thick] (1.6,-0.3)--(2.0,-0.3);
        \draw[thick] (-2.0,-0.3)--(-1.6,-0.3);
        \draw[dashed] (-2.9,-0.8)--(-2.9,0.2)--(2.9,0.2)--(2.9,-0.8)--(-2.9,-0.8);
    \end{tikzpicture}
    \caption{$E_6$ Dynkin diagram. We have highlighted in blue the $\mathbb{P}^1$ that is partially covered by $U_1$ defined in the text, as well as the $A_5$ subalgebra covered by $U_j$, $j=1,\ldots,6$, encircling it with a dashed line.}
    \label{fig:E6charts}
\end{figure}

  The blowup map $\varphi
\rvert_{U_1}$ is 
\begin{equation}\label{eq:newpatchE6resmap}
    \begin{array}{c}
         \begin{tikzcd}[row sep=tiny]
      U_{1}\cong \mathbb C^2\arrow{r}{\varphi\rvert_{U_1}} &Y_{E_6}\\
      \mbox{\scriptsize$\left(a_{1},b_{1}\right)$}\arrow[mapsto]{r} &\mbox{\scriptsize{$(x,y,z)=\left(\frac{1}{4} i b_1^2 \left(a_1^3 b_1^2-1\right)^3
   \left(a_1^3 b_1^2+1\right),\frac{1}{2^{2/3}}a_1 b_1^2 \left(a_1^3
   b_1^2-1\right)^2,\frac{1}{2} b_1
   \left(a_1^3 b_1^2-1\right)^2\right)$}}
\end{tikzcd}\\
\end{array}
\end{equation}
The transition functions for the $U_j$, with $j=1,\ldots,6$, are
\eqref{eq:transitionfunctionsAk} (covering the $\mathbb P^1$'s
corresponding to the $A_5$ subalgebra of
the $E_6$ Dynkin diagram). The remaining node is covered using coordinates
$(a_0,b_0) \in U_0$ defined as
\begin{equation}
  \label{eq:transfuncttailE6}
  a_0= \frac{1}{b_3}, \quad b_0 = (a_3-1)b_3^2.
\end{equation}
\\\indent For the $Y_{E_7}$ singularity, we take $U_1$ to cover the blue $\mathbb P^1$ of figure \ref{fig:E7charts} (except the intersection point with the next $\mathbb{P}^1$).
  \begin{figure}[H]
    \centering
   \begin{tikzpicture}
        \draw[thick] (0,-0.3) circle (0.3);
        \draw[thick] (0,0.9) circle (0.3);
        \draw[thick] (-1.2,-0.3) circle (0.3);
        \draw[thick,blue] (-2.4,-0.3) circle (0.3);
        \draw[thick] (1.2,-0.3) circle (0.3);
        \draw[thick] (2.4,-0.3) circle (0.3);
        \draw[thick] (3.6,-0.3) circle (0.3);
        \draw[thick] (0,0.5)--(0,0.1);
        \draw[thick] (-0.4,-0.3)--(-0.8,-0.3);
        \draw[thick] (0.4,-0.3)--(0.8,-0.3);
        \draw[thick] (1.6,-0.3)--(2.0,-0.3);
        \draw[thick] (-2.0,-0.3)--(-1.6,-0.3);
        \draw[thick] (2.8,-0.3)--(3.2,-0.3);
        \draw[dashed] (-2.9,-0.8)--(-2.9,0.2)--(4.1,0.2)--(4.1,-0.8)--(-2.9,-0.8);
    \end{tikzpicture}
    \caption{$E_7$ Dynkin diagram. We have highlighted in blue the $\mathbb{P}^1$ that is partially covered by $U_1$ defined in the text, as well as the $A_6$ subalgebra covered by $U_j$, $j=1,\ldots,7$, encircling it with a dashed line.}
    \label{fig:E7charts}
\end{figure}

The blowup map $\varphi
\rvert_{U_1}$ is 
\begin{equation}\label{eq:newpatchE7resmap}
    \begin{array}{c}
         \begin{tikzcd}[row sep=tiny]
      U_{1}\cong \mathbb C^2\arrow{r}{\varphi\rvert_{U_1}} &Y_{E_7}\\
      \mbox{\scriptsize$\left(a_{1},b_{1}\right)$}\arrow[mapsto]{r} &\mbox{\scriptsize{$(x,y,z)=\left(\left(-\frac{1}{4}+\frac{i}{4}\right) b_1^3
   \left(a_1^3 b_1^2-1\right)^5,\frac{1}{2} i b_1^2
   \left(a_1^3 b_1^2-1\right)^3,\frac{a_1 b_1^2 \left(a_1^3
   b_1^2-1\right)^2}{2^{2/3}}\right)$}}
\end{tikzcd}\\
\end{array}
\end{equation}
The transition function for the $U_j$, with $j=1,\ldots,7$, are
\eqref{eq:transitionfunctionsAk} (covering the $\mathbb P^1$'s associated to
the $A_6$ subalgebra of
the $E_7$ algebra). The remaining uppermost node is covered using coordinates
$(a_0,b_0) \in U_0$ defined as
\begin{equation}
  \label{eq:transfuncttailE7}
  a_0= \frac{1}{b_3}, \quad b_0 = (a_3-1)b_3^2.
\end{equation}
\\ \indent For the $Y_{E_8}$ singularity, we take $U_1$ to cover the blue $\mathbb P^1$ of figure \ref{fig:E8charts} (except the intersection point with the next $\mathbb{P}^1$).
  \begin{figure}[H]
    \centering
   \begin{tikzpicture}
        \draw[thick] (0,-0.3) circle (0.3);
        \draw[thick] (0,0.9) circle (0.3);
        \draw[thick] (-1.2,-0.3) circle (0.3);
        \draw[thick,blue] (-2.4,-0.3) circle (0.3);
        \draw[thick] (1.2,-0.3) circle (0.3);
        \draw[thick] (2.4,-0.3) circle (0.3);
        \draw[thick] (3.6,-0.3) circle (0.3);
        \draw[thick] (4.8,-0.3) circle (0.3);
        \draw[thick] (0,0.5)--(0,0.1);
        \draw[thick] (-0.4,-0.3)--(-0.8,-0.3);
        \draw[thick] (0.4,-0.3)--(0.8,-0.3);
        \draw[thick] (1.6,-0.3)--(2.0,-0.3);
        \draw[thick] (-2.0,-0.3)--(-1.6,-0.3);
        \draw[thick] (2.8,-0.3)--(3.2,-0.3);
        \draw[thick] (4.0,-0.3)--(4.4,-0.3);
        \draw[dashed] (-2.9,-0.8)--(-2.9,0.2)--(5.3,0.2)--(5.3,-0.8)--(-2.9,-0.8);
    \end{tikzpicture}
    \caption{$E_8$ Dynkin diagram. We have highlighted in blue the $\mathbb{P}^1$ that is partially covered by $U_1$ defined in the text, as well as the $A_7$ subalgebra covered by $U_j$, $j=1,\ldots,8$, encircling it with a dashed line.}
    \label{fig:E8charts}
\end{figure}

The blowup map $\varphi
\rvert_{U_1}$ is 
\begin{equation}\label{eq:newpatchE8resmap}
    \begin{array}{c}
         \begin{tikzcd}[row sep=tiny]
      U_{1}\cong \mathbb C^2\arrow{r}{\varphi\rvert_{U_1}} &Y_{E_8}\\
      \mbox{\scriptsize$\left(a_{1},b_{1}\right)$}\arrow[mapsto]{r} &\mbox{\scriptsize{$(x,y,z)=\left(\left(-\frac{1}{8}-\frac{i}{8}\right) b_1^5
   \left(a_1^3 b_1^2-1\right)^8,\frac{i a_1 b_1^4
   \left(a_1^3 b_1^2-1\right)^5}{2\
   2^{2/3}},\frac{1}{2} i b_1^2 \left(a_1^3
   b_1^2-1\right)^3\right)$}}
\end{tikzcd}\\
\end{array}
\end{equation}
The transition function for the $U_j$, with $j=1,\ldots,8$, are
\eqref{eq:transitionfunctionsAk} (covering the $\mathbb P^1$'s
corresponding to the $A_7$ subalgebra of
the $E_8$ algebra). The remaining uppermost node of the $E_8$ Dynkin diagram is covered using coordinates
$(a_0,b_0) \in U_0$ defined as
\begin{equation}
  \label{eq:transfuncttailE8}
  a_0= \frac{1}{b_3}, \quad b_0 = (a_3-1)b_3^2.
\end{equation}

\section{Derivation of the toric fan for transversal families of A singularities}
\label{sec:appendixtoriclocalmodels}
In this Appendix we derive the properties of the local toric models that we introduced in \cref{sec:toriclocalmodels}, giving also a formula to extract the normal bundle of the compact curves contained in their resolutions. We are interested in singularities of the form
\begin{equation}
\label{eq:Xhkv}
X_{hk}=\Set{(a,b,u,v)\in\mathbb C^4|uv-a^hb^k=0},
\end{equation}
for $h,k \ge 0$.
The threefold $X_{hk}$ is a toric variety with a family of $A_{k-1}$ singularities on the $a$ axis and a family of $A_{h-1}$ singularities on the $b$ axis. 
We can give the following embedding that endows $X_{hk}$ with the structure of a toric variety:
\[ \begin{tikzcd}[row sep =tiny]
    (\C^*)^3\arrow{r}{\phi} & X_{hk}\\
    (t_1,t_2,t_3)\arrow[mapsto]{r}& \left(t_1^{k-h}t_2^{-1}t_3^h,t_2,t_1^{-1}t_3,t_1 \right).
\end{tikzcd}\]

\noindent Let $v_1,\ldots,v_4\in\Z^3$ be the vectors of the exponents in the entries of $\phi$, i.e.
\renewcommand{\arraystretch}{1}
\[v_1=\begin{pmatrix}
     k-h \\ -1\\ h 
\end{pmatrix},\ v_2=\begin{pmatrix}
     0 \\ 1\\ 0
\end{pmatrix},\ v_3=\begin{pmatrix}
     -1 \\ 0\\ 1 
\end{pmatrix},\ v_4=\begin{pmatrix}
     1 \\ 0\\ 0 
\end{pmatrix}.\]
Let also $M$ denote the lattice $M\cong\Z^3$ generated by the $v_i$'s, i.e. $M=\langle v_1,v_2,v_3,v_4\rangle_{\Z} \subset \Z^3$. Then, we have
\[X_{hk}=\Spec(\C[\sigma^{\vee}]),\]
where $\sigma^{\vee}\subset M$ is the semigroup generated by $v_1,\ldots,v_4$.\\ 
\indent In order to better understand the geometry of $X_{hk}$ and its crepant resolutions, we will focus now on the fan $ {\Sigma}_{hk}$ of $X_{hk}$. Recall that ${ \Sigma}_{hk}$ is a set of cones in $\R^3\cong\R\otimes_\Z N$, where $N=M^\vee$ is the dual lattice of $M$. Since $X_{hk}$ is an affine variety, $ { \Sigma}_{hk}$ contains only one cone $ { \sigma}_{\max} $ of maximal dimension, namely 3. Now, a direct computation shows that the set of primitive generators of the rays, i.e. one-dimensional cones, $\tilde\rho_i\in \Sigma_{hk}^{(1)}\subset\Sigma_{hk}$, is the following
\begin{equation}
 w_1=\begin{pmatrix}
     0 \\ 0\\ 1 
\end{pmatrix},
\   w_2=\begin{pmatrix}
         0 \\ h\\ 1 
\end{pmatrix},
\  w_3=\begin{pmatrix}
  1 \\ k\\ 1
\end{pmatrix},
\  w_4=\begin{pmatrix}
    1 \\ 0\\ 1 
\end{pmatrix}.
\end{equation}
Notice that the $w_i$ are also generators of $\sigma_{\max}$ and that the third coordinate of all the $w_i$ equals to one. This ensures that $X_{hk}$ is a CY variety. Furthermore, from the fan we directly see that there are two smooth toric lines respectively corresponding to the cones generated by $ w_1, w_4$ and $ w_2,  w_3$, and two singular toric lines respectively corresponding to the cones generated by $  w_1,  w_2$ and $  w_3,  w_4$.

 Let $P_{hk}\subset \R^2$ be the toric diagram of $X_{hk}$, which encodes all the information about the geometry of the singularity and of its crepant resolutions (see \Cref{fig:polytope}).

We now compute the normal bundles of the compact toric curves appearing in the crepant resolutions of $X_{hk}$. This can, in principle, be deduced from the general theory of toric CY threefold \cite{Reid1983DecompositionOT}, but we show an ad hoc computation in our setting.
 
\begin{proposition}
    Let $\tilde X_{hk}$ be a crepant resolution of $ X_{hk}$ and let $C\subset \tilde X_{hk}$ be a compact toric curve. In particular $C\cong\Pj^1$. Then, for the normal bundle of $C$ in $\tilde X_{hk} $ there are only two possibilities, namely
    \[N_{C/X}\cong\Calo_{\Pj^1}\oplus\Calo_{\Pj^1}(-2)  \mbox{ or } \mathcal \mathcal N_{C/X}\cong\Calo_{\Pj^1}(-1)\oplus\Calo_{\Pj^1}(-1).\]
\end{proposition}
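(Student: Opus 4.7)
The plan is to exploit the toric structure of $X_{hk}$ together with the very constrained combinatorics of the toric diagram $P_{hk}$. Since $X_{hk}$ is toric Calabi--Yau, every crepant resolution $\tilde X_{hk}$ is again toric and corresponds to a fine regular triangulation $\mathcal T$ of $P_{hk}$ whose vertex set is exactly the set of lattice points of $P_{hk}$. The compact $T$-invariant curves in $\tilde X_{hk}$ are in bijection with the interior edges of $\mathcal T$, and each such curve is automatically a $\mathbb P^1$ since it is the closure of a $1$-dimensional torus orbit.

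The first substantial step is to locate all lattice points of $P_{hk}$. Dropping the third coordinate (which equals $1$ for every primitive ray, by the CY condition), $P_{hk}$ is the quadrilateral of vertices $(0,0),(0,h),(1,k),(1,0)$. Since its width in the first direction equals $1$, every lattice point of $P_{hk}$ has first coordinate $0$ or $1$. Consequently every interior edge of $\mathcal T$ joins some $(0,a)$ on the left side to some $(1,b)$ on the right side.

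Next I would invoke the standard toric recipe for normal bundles: if $C\subset \tilde X_{hk}$ corresponds to an interior edge with primitive generators $w_1,w_2$, shared by two maximal cones with extra rays $w_3$ and $w_4$, there is a unique linear relation
\begin{equation*}
 w_3+w_4+\alpha\,w_1+\beta\,w_2=0,
\end{equation*}
and then $\mathcal N_{C/\tilde X_{hk}}\cong\mathcal O_{\mathbb P^1}(\alpha)\oplus\mathcal O_{\mathbb P^1}(\beta)$. Working in the height-$1$ plane and reading off each coordinate of the relation yields two immediate constraints: the third-coordinate equation gives the Calabi--Yau balancing $\alpha+\beta=-2$, while the first-coordinate equation forces $\beta=-(c_3+c_4)$, where $c_3,c_4\in\{0,1\}$ are the first coordinates of $w_3,w_4$. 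A trivial case split on $(c_3,c_4)$ leaves only three possibilities: $(0,0)$ and $(1,1)$ produce $(\alpha,\beta)=(-2,0)$ and $(0,-2)$ respectively, giving $\mathcal O\oplus\mathcal O(-2)$; while $(0,1)$ (or equivalently $(1,0)$) gives $(\alpha,\beta)=(-1,-1)$, yielding $\mathcal O(-1)\oplus\mathcal O(-1)$.

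The only delicate point in this argument is checking that the standard toric normal-bundle formula applies, i.e.\ that $\tilde X_{hk}$ is smooth along $C$ (which holds because $\mathcal T$ is a \emph{fine} triangulation of a polygon with no interior lattice points, so every 2-simplex of $\mathcal T$ has area $1/2$ and its associated cone is smooth) and that the second-coordinate equation of the linear relation is automatically solvable and therefore does not impose an extra constraint on $(\alpha,\beta)$. Once these two points are verified, the case analysis above closes the proof.
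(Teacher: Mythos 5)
Your argument is correct and follows essentially the same route as the paper: exploit the width-one toric diagram $P_{hk}$ to see that every compact curve corresponds to an interior edge joining the two vertical sides, then read off the normal bundle from the unique relation $w_3+w_4+\alpha w_1+\beta w_2=0$, with the paper simply drawing the two possible local pictures where you instead extract $\alpha,\beta$ from the first- and third-coordinate equations. The one point you defer --- that the second-coordinate equation imposes no extra constraint --- is automatic from the smoothness of the two adjacent maximal cones that you already established (they make $\{w_1,w_2,w_3\}$ a lattice basis and force the coefficient of $w_3$ in the expansion of $w_4$ to be $-1$, so the integral relation exists and is unique), so the proof is complete.
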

\begin{proof}
    Locally, near the cones defining $C$ (generated by $(v_2,v_4)$ and $(\tilde{v_1},\tilde{v_4})$), $P_{hk}$ may be of two kinds, namely
    \begin{center}
        $\begin{matrix}
            \begin{tikzpicture}
            \draw[dotted] (0,1)--(0,1.5)(1,1)--(1,1.5) (0,-0.5)--(0,0)(1,0)--(1,-0.5);
                \draw (0,0)--(1,0)--(1,1)--(0,1)--cycle;
        \filldraw (0,0) circle (1pt);
        \filldraw (1,0) circle (1pt);
        \filldraw (0,1) circle (1pt);
        \filldraw (1,1) circle (1pt);
        \filldraw (0,0.5) circle (1pt);
        \filldraw (1,0.5) circle (1pt);
        \draw (0,0.5) --(1,0.5);
        \draw (0,1) --(1,0.5);
        \draw (0,0) --(1,0.5);
        \node[left] at (0,0) {\tiny $v_1$};
        \node[left] at (0,0.5) {\tiny $v_2$};
        \node[left] at (0,1) {\tiny $v_3$};
        \node[right] at (1,0.5) {\tiny $v_4$};
        \end{tikzpicture}
        \end{matrix}$
        or 
        $\begin{matrix}
            \begin{tikzpicture} 
            \draw[dotted] (0,1)--(0,1.5)(1,1)--(1,1.5) (0,-0.5)--(0,0)(1,0)--(1,-0.5);
                \draw (0,0)--(1,0)--(1,1)--(0,1)--cycle;
        \filldraw (0,0) circle (1pt);
        \filldraw (1,0) circle (1pt);
        \filldraw (0,1) circle (1pt);
        \filldraw (1,1) circle (1pt);
        \filldraw (0,0.5) circle (1pt);
        \filldraw (1,0.5) circle (1pt);
        \draw (0,0.5) --(1,0.5);
        \draw (0,0.5) --(1,0);
        \draw (0,1) --(1,0.5);
        \node[right] at (1,0) {\tiny $\widetilde v_3$};
        \node[left] at (0,0.5) {\tiny $\widetilde v_1$};
        \node[left] at (0,1) {\tiny $\widetilde v_2$};
        \node[right] at (1,0.5) {\tiny $\widetilde v_4$};
        \end{tikzpicture}
        \end{matrix}$
    \end{center}
    Now, the statement is a consequence \cite{Reid1983DecompositionOT} of the following two relations:
    \[v_3+v_1+(0)\cdot v_4 +(-2)\cdot v_2=0 \mbox{ and }\widetilde v_2+\widetilde v_3+(-1)\cdot \widetilde v_1 +(-1)\cdot \widetilde v_4=0. \]
\end{proof}

\section{Explicit prepotential computation}\label{sec:prepotential}
In the main body of this work, we have adopted a top-down approach in the construction of 5d SCFTs, starting from a singular threefold and analyzing its resolution. As, by definition, the blow-up operations we have performed can be reversed, the compact divisors that appear on the origin are \textit{shrinkable} in the sense of \cite{Jefferson_2018}, and thus the geometry gives rise to a sensible 5d SCFT. It is nevertheless a useful exercise to check the consistency of the UV description employing a bottom-up point of view. Namely, we can start from the quiver gauge theories listed in section \ref{sec:quivers} and compute their prepotential $\mathcal{F}$ in terms of the scalars appearing in the vector multiplets $\phi_i$: as is well known \cite{Intriligator_1997,Jefferson:2017ahm}, one should check that there exists a continuous path connecting the origin of the scalars moduli space to every point that satisfies:
\begin{equation}\label{prepotential conditions}
    \pdv{\mathcal{F}}{\phi_i}\geq 0 \quad \text{and} \quad \pdv{\mathcal{F}}{\phi_i\partial\phi_j} \text{ positive definite,}
\end{equation}
namely where the string monopole tensions are positive (or at most null) and the metric is positive definite.\\

\indent We show this procedure at work in the case of the quiver theory arising from $X_{E_6}^z$: in general, it is not feasible to solve the problem analitically. We will thus resort to finding numerically at least one point satisfying \eqref{prepotential conditions} and then show that it lies inside a cone that is connected to the origin of the moduli space, where all the volumes of the compact divisors, parametrized by the scalars, vanish. This is of course not a complete proof, yet a strong hint that the 5d quiver theory is sound.\\
\indent The quiver for $X_{E_6}^z$ appearing on page 34 has 5 independent scalars coming from the gauge multiplets, let us write them as:
\begin{equation}
    \Phi = \{\underbrace{a_1,a_2,a_3}_{\mathfrak{su}(3)},\underbrace{b_1,b_2}_{\mathfrak{su}(2) },\underbrace{c_1,c_2}_{\mathfrak{su}(2)},\underbrace{d_1,d_2}_{\mathfrak{su}(2)}\}, \quad \text{with} \quad \sum_i a_i = \sum_j b_j = \sum_k c_k = \sum_l d_l = 0, 
\end{equation}
with the text below the brackets indicating the corresponding node.\\
The generic IMS prepotential reads \cite{Intriligator_1997}:
\begin{equation}
\mathcal{F}(\phi)=\frac{1}{2} m_0 h_{i j} \phi_i \phi_j+\frac{\kappa}{6} d_{i j k} \phi_i \phi_j \phi_k+\frac{1}{12}\left(\sum_{r \in \operatorname{roots}}|r \cdot \phi|^3-\sum_f \sum_{w \in R_f}\left|w \cdot \phi+m_f\right|^3\right),
\end{equation}
where $m_0$ is the inverse squared gauge coupling of the various factors, $m_f$ are the flavor masses, $\kappa$ are the Chern-Simons levels, and the sums run along the roots $r$ of the quiver gauge group and the weights $w$ of the hypermultiplet representations $R_f$.\\
We set:
\begin{itemize}
    \item all the masses to zero, as adding a mass can do nothing but relax the conditions \eqref{prepotential conditions}; namely, if we can find a point in the moduli space satisfying \eqref{prepotential conditions} with the masses set to zero, we are guaranteed to satisfy the bound also with the masses turned on. Furthermore, we are interested in investigating the SCFT limit, where the masses go to zero;
    \item the gauge coupling to infinity, as we want to approach the SCFT point;
    \item the Chern-Simons levels to zero, as we have shown that we can always resolve the geometry in such a way as to ensure this happens.
\end{itemize} 
With these assumptions, the prepotential for the $X_{E_6}^z$ quiver gauge theory reads, in terms of the five independent scalars:
\begin{equation}
    \begin{array}{rl}
\scalemath{0.75}{ \mathcal{F}(a_1,a_2,b_1,c_1,d_1) =\frac{1}{12}\bigg[} &
\scalemath{0.75}{\underbrace{\left| a_1-a_2\right| ^3+\left| 2 a_1+a_2\right| ^3+\left| a_1+2 a_2\right| ^3+8 \left| b_1\right| ^3+8 \left| c_1\right| ^3+8 \left| d_1\right| ^3}_{\text{roots contribution}}+}\\
&\scalemath{0.75}{\underbrace{-\left| -a_1-a_2-b_1\right| ^3-\left| -a_1-a_2+b_1\right| ^3-\left| -a_1-a_2-c_1\right| ^3-\left| a_2+d_1\right| ^3}_{\text{weights contribution}}+}\\
&\scalemath{0.75}{\underbrace{-\left| -a_1-a_2+c_1\right| ^3-\left| -a_1-a_2-d_1\right| ^3-\left| -a_1-a_2+d_1\right| ^3-\left| a_1-b_1\right| ^3}_{\text{weights contribution}}+}\\
&\scalemath{0.75}{\underbrace{-\left| a_1+b_1\right| ^3-\left| a_1-c_1\right| ^3-\left| a_1+c_1\right| ^3-\left| a_1-d_1\right| ^3-\left| a_1+d_1\right| ^3}_{\text{weights contribution}}+ }\\
&\scalemath{0.75}{\underbrace{-\left| a_2-b_1\right| ^3-\left| a_2+b_1\right| ^3-\left| a_2-c_1\right| ^3-\left| a_2+c_1\right| ^3-\left| a_2-d_1\right| ^3}_{\text{weights contribution}}+ }\\[1ex]
&\scalemath{0.75}{ \underbrace{ -2 \left| b_1\right| ^3-2 \left| c_1\right| ^3-2 \left| d_1\right| ^3}_{\text{flavor contribution}}\ \bigg].}
\end{array}
\end{equation}
Employing numeric techniques, it is easy to find a continuous path connecting the origin of the moduli space to a point that satisfies \eqref{prepotential conditions}:
\begin{equation}
    \{a_1,a_2,b_1,c_1,d_1\} = \left\{\frac{151 }{100}\lambda ,\frac{103  }{100}\lambda,-\frac{26}{25} \lambda ,-\frac{21}{20} \lambda ,-\frac{11}{10}  \lambda \right\},
\end{equation}
where $\lambda \in [0,1]$ parameterizes the path.

\providecommand{\href}[2]{#2}

\end{document}